\documentclass[11pt]{ip-journal}
\usepackage{amsmath,amssymb}
\usepackage{graphicx, color}
\usepackage{slashed}
\newcommand{\RN}[1]{%
  \textup{\uppercase\expandafter{\romannumeral#1}}%
}

\newcommand{\R}{\mathbb{R}}
\newtheorem{theorem}{Theorem}
\newtheorem{proposition}[theorem]{Proposition}

\newtheorem{lemma}[theorem]{Lemma}

\theoremstyle{definition}
\newtheorem{definition}[theorem]{Definition}

\begin{document} 

\title[Linear Stability of Schwarzschild Spacetime]{Linear Stability of Schwarzschild Spacetime: Decay of Metric Coefficients}

\author{Pei-Ken Hung}
\address{Pei-Ken Hung\\
Department of Mathematics\\
Massachusetts Institute of Technology, USA}
\email{pkhung@mit.edu}

\author{Jordan Keller}
\address{Jordan Keller\\
Black Hole Initiative\\
Harvard University, USA}
\email{jordan\_keller@fas.harvard.edu}

\author{Mu-Tao Wang}
\address{Mu-Tao Wang\\
Department of Mathematics\\
Columbia University, USA}
\email{mtwang@math.columbia.edu}

\thanks{This material is based upon work supported by the National Science
Foundation under Grant Numbers DMS 1405152 (Mu-Tao \ Wang).  The authors would like to thank the National Center for Theoretical Sciences of National Taiwan University, where this research was initiated, for their warm hospitality.  In addition, we thank Professors Simon Brendle, Sergiu Klainerman, and Ye-Kai Wang for their interests in this work.  The third author thanks Professor Mihalis Dafermos for the reference to Johnson's work \cite{Johnson}}

\begin{abstract}
In this paper, we study the theory of linearized gravity and prove the linear stability of Schwarzschild black holes as solutions of the vacuum Einstein equations.  In particular, we prove that solutions to the linearized vacuum Einstein equations centered at a Schwarzschild metric, with suitably regular initial data, remain uniformly bounded and decay to a linearized Kerr metric on the exterior region.  We employ Hodge decomposition to split the solution into closed and co-closed portions, respectively identified with even-parity and odd-parity solutions in the physics literature.  For the co-closed portion, we extend previous results by the first two authors, deriving Regge-Wheeler type equations for two gauge-invariant master quantities without the earlier paper's need of axisymmetry.  For the closed portion, we build upon earlier work of Zerilli and Moncrief, wherein the authors derive an equation for a gauge-invariant master quantity in a spherical harmonic decomposition.  We work with gauge-invariant quantities at the level of perturbed connection coefficients, with the initial value problem formulated on Cauchy data sets.  With the choice of an appropriate gauge in each of the two portions, decay estimates on these decoupled quantities are used to establish decay of the metric coefficients of the solution, completing the proof of linear stability.  Our result differs from that of Dafermos-Holzegel-Rodnianski, both in our choice of gauge and in our identification and utilization of lower-level gauge-invariant master quantities.
\end{abstract}
\maketitle

\section{Introduction}

The Schwarzschild solution of the vacuum Einstein equation in general relativity is the unique static solution that represents an isolated gravitating system of a single black hole. Studies, both theoretically and experimentally, of such a system are modeled on the Schwarzschild solution and its perturbation. The stability of the Schwarzschild solution is thus of utmost importance. More than two decades after the work of nonlinear stability of Minkowski space by Christodoulou and Klainerman \cite{ChristKlainerman}, the nonlinear stability of Schwarzschild remains open. This paper addresses the linear stability of the Schwarzschild solution, which has a long history and rich literature involving the works of both physicists and mathematicians, culminating in the recent breakthrough of Dafermos-Holzegel-Rodnianski \cite{DHR}. This paper provides a different and simpler proof that reveals the underlying geometric structure of the vacuum Einstein equation at a more elementary level. 

The question of linear stability is formulated in the following way.  Consider the vacuum Einstein equation 
$G(g)\equiv Ric(g) -\frac{1}{2} R(g) g=0$, of which the Schwarzschild metric $g_0$ is solution. Let $\delta g$ be a solution of the linearization of the vacuum Einstein equation at Schwarzschild: \begin{equation}\label{linearized_Einstein}\delta G|_{g_0}(\delta g)=0.\end{equation} $\delta g$ is a smooth symmetric $(0, 2)$ tensor on the background Schwarzschild spacetime.  Since the Einstein equation is invariant under diffeomorphisms of the spacetime, any smooth co-vector field $X$ generates another solution $\delta g+\pi_X$ of \eqref{linearized_Einstein}, encoding an infinitesimal deformation of the spacetime via the deformation tensor $\pi_{X} = \mathcal{L}_{X}g$.  In addition, the Schwarzschild solution lies in the larger family of Kerr solutions, and there are solution of \eqref{linearized_Einstein} that correspond to Kerr perturbations. We prove linear stability in the following sense: the linearized metric coefficients of $\delta g$, after being normalized by the gauge condition, decay through a suitable foliation to a Kerr perturbation under appropriate initial conditions.

There are two main approaches to the perturbation problem:

1. Perturbation of metric coefficients: Regge-Wheeler \cite{RW} showed that in a suitable gauge, equation \eqref{linearized_Einstein} decouples into even-parity and odd-parity perturbations, corresponding to axial and polar perturbations in Chandrasekhar \cite{Chandra1}.  Originally, Regge-Wheeler \cite{RW} discovered the master Regge-Wheeler equation in the axial or odd-parity case; more than a decade later, Zerilli \cite{Zerilli} derived the eponymous equation in the polar or even-parity case.  Later work by Moncrief \cite{Moncrief} phrased the decoupling in terms of a gauge-invariant, connection-level quantity, which we refer to as the Zerilli-Moncrief function.  See also \cite{CMP, MartelPoisson, Vishveshwara}.

2. Perturbation in Newman-Penrose (N-P) formalism: In particular, it is known that the extreme linearized Weyl curvature components satisfy the Teukolsky equation \cite{Teukolsky}, which can be further solved by separation of variables. 

On the Schwarzschild background there is a transformation theory, initiated by Wald \cite{Wald2} with further refinements by Aksteiner et al. \cite{Aksteiner}, which relates the Regge-Wheeler equation and the Teukolsky equation. The Regge-Wheeler equation has a favorable potential which allows for a direct analysis, while this is not clear for the Teukolsky equation.  In this way, the mapping is vital to the work of Dafermos-Holzegel-Rodnianski \cite{DHR}, as discussed below, in addition to the work of Lousto-Whiting \cite{LoustoWhiting}.

In the Schwarzschild and Kerr settings, the estimates and techniques from the study of the scalar wave equation, regarded as a ``poor man's'' linearization of the vacuum Einstein equations, are expected to prove an essential ingredient in further progress on linear stability, with developments in linear stability playing a similar role in non-linear stability.  A complete theory, including uniform boundedness and decay estimates, is now in place for scalar waves, first appearing in the works of Dafermos-Rodnianski \cite{DR} in the Schwarzschild setting and Dafermos-Rodnianski-Shlapentokh-Rothman \cite{DRR} in sub-extremal Kerr, with contributions and refinements also appearing in \cite{KayWald, Soffer, BlueSterbenz, FSKY, AnderssonBlue, Tataru1, Tataru2, Luk}. 

To get a complete theory of linear stability, one needs to prove such decay estimates for each component of a solution $\delta g$ under a suitable gauge, and modulo the Kerr perturbations. Finster-Smoller \cite{FinsterSmoller_Sch} prove the decay estimates of Teukolsky equation on the Schwarzschild background (see also the Kerr case  \cite{FinsterSmoller}). However, the decay estimates of the perturbed metric coefficients $\delta g$ do not follow from this. Note that the extreme linearized Weyl curvature components are gauge invariant quantities, while the decay of metric coefficients holds true only after a gauge condition. In the Kerr case, it seems that the reconstruction of metric coefficients from the extreme components of Weyl curvatures in the N-P formalism remains unsolved, with partial results appearing in Wald \cite{Wald3}; see also \cite{WhitingPrice}.  Specializing to the Schwarzschild background, the results of Dafermos-Holzegel-Rodnianski \cite{DHR} in double null coordinates can be recast in the N-P formalism to yield a reconstruction procedure.

The authors of \cite{DHR} make use of the aforementioned transformation theory and show that a certain second derivative of extreme linearized Weyl curvature satisfies a Regge-Wheeler type equation with a favorable potential.  A double null gauge is imposed to derive that all perturbed metric coefficients decay modulo the Kerr perturbation. 

The current paper provides a theory of linear stability on the level of the metric perturbation. In the space of linearized solutions of symmetric $(0,2)$ tensors $\delta g$, we identify the Kerr perturbation in the subspace of lower angular modes. For the higher angular modes, we work with gauge-invariant quantities at the level of perturbed connection coefficients which satisfy Regge-Wheeler type equations with favorable potentials. The Regge-Wheeler gauge and an interpolated Chandrasekhar gauge are adopted to prove the decay of all perturbed metric coefficients and complete the proof. 

In order to identify the Kerr perturbation, we first decompose any smooth, symmetric $(0,2)$ tensor $\delta g$ according to angular modes
\begin{equation} \label{mode}
\delta g = \delta g^{\ell <2} + \delta g^{\ell \geq 2}
\end{equation} 
per Proposition \ref{mode_decomp}.

Summarizing our results, we prove linear stability of the Schwarzschild spacetime as stated in Theorem 1 and Theorem 2:

\begin{theorem}
Let $\delta g $ be a smooth, symmetric $(0,2)$ tensor on the Schwarzschild spacetime, satisfying the linearized vacuum Einstein equations \eqref{linearized_Einstein}.

For the $\delta g^{\ell < 2}$ component of $\delta g$, there exists a unique smooth co-vector $X^{\ell<2}$ (modulo Killing fields)  on the Schwarzschild spacetime and constants $c$, $d_{-1}, d_0, d_1$ such that 
\begin{equation}\label{lowerHarmonics} 
\delta g^{\ell < 2}=\pi_{X^{\ell<2}}+c K+\sum_{m=-1, 0, 1} d_m K_m,
\end{equation}
where $K, K_{-1}, K_0, K_1$ are smooth symmetric $(0, 2)$ tensors that correspond to linearized Kerr solutions specified in Definition \ref{Kerr_sol}.
\end{theorem}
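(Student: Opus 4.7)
The plan is to exploit the mode decomposition of Proposition \ref{mode_decomp}, combined with the parity (polar/axial) split familiar from the Regge--Wheeler--Zerilli framework, and treat each sector separately. After these reductions the theorem breaks into three independent assertions: (i) every spherically symmetric ($\ell=0$) solution of \eqref{linearized_Einstein} is a mass perturbation plus a pure gauge; (ii) every $\ell=1$ polar solution is pure gauge; and (iii) every $\ell=1$ axial solution is a linear combination of the three angular-momentum Kerr perturbations $K_{-1},K_0,K_1$ modulo gauge. Because $K$ and the $K_m$ lie in disjoint mode/parity sectors, the constants $c,d_{-1},d_0,d_1$ are read off uniquely once each sector is handled.

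For the $\ell=0$ sector, I would write the most general spherically symmetric smooth symmetric $(0,2)$-tensor on Schwarzschild as three scalar functions of $(t,r)$ multiplying $dt^2$, $dt\,dr$, $dr^2$, together with a conformal factor on the round metric of $S^2$. A spherically symmetric co-vector has two free components $X^t(t,r),X^r(t,r)$; these are enough to put $\delta g^{\ell=0}$ into a normal form in which the angular conformal factor and the $dt\,dr$ component vanish. In this normal form the linearized vacuum equations collapse to the linearized Birkhoff statement, forcing the remaining $(tt)$ and $(rr)$ components to coincide with $c\,\partial_M g_0 = cK$ for some constant $c$.

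For the $\ell=1$ sector I would proceed parity by parity. In the axial case, each $m$-mode perturbation is captured by a single one-form on the $(t,r)$-quotient tensored with the corresponding axial vector harmonic; the residual axial gauge freedom reduces this to a one-parameter stationary remainder, and the three resulting constants are precisely matched by $d_{-1}K_{-1}+d_0 K_0+d_1 K_1$, which is itself $\ell=1$ axial and stationary. In the polar case there are more metric components but also more gauge freedom: the three polar $\ell=1$ scalar components of $X$ suffice to annihilate all four independent polar $\ell=1$ metric components subject to the linearized constraints. The actual computation reduces to a finite system of ODEs in $r$ with $t$ as a parameter, whose solvability reflects the well-known absence of any gauge-invariant polar $\ell=1$ master quantity on Schwarzschild.

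The main obstacle is the uniqueness clause ``modulo Killing fields,'' which amounts to computing the kernel of $X\mapsto \pi_X=\mathcal{L}_X g_0$ on $\ell<2$-supported co-vectors. I would verify directly from the Killing equation that this kernel is the four-dimensional Schwarzschild Killing algebra spanned by $\partial_t$ and the three rotation generators; the first lives in the $\ell=0$ polar sector and the latter three in the $\ell=1$ axial sector, so the uniqueness statement decouples along the same mode/parity lines as the existence statement. With this kernel identified, uniqueness of $c$ and of $(d_{-1},d_0,d_1)$ is immediate, since $K$ and the $K_m$ are linearly independent modulo pure gauge.
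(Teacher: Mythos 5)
Your proposal follows essentially the same route as the paper: the same $\ell=0$ / $\ell=1$-polar / $\ell=1$-axial split after the mode decomposition, gauge normalization in each sector, reduction to a system of ODEs in $r$ (linearized Birkhoff at $\ell=0$, the Chandrasekhar-type reduction in the polar $\ell=1$ sector), and identification of the constants via the stationary remainders, with uniqueness resting on the Killing algebra as the kernel of $X\mapsto\pi_X$. One gloss to repair when writing it out: in your $\ell=0$ normal form the field equations do not force the remainder to equal $cK$ outright but leave a residual piece $\left(1-\frac{2M}{r}\right)c(t)\,dt^2$ (and analogously a two-parameter residual family in the polar $\ell=1$ sector), which must then be absorbed by the residual gauge freedom preserving your normal form, exactly as the paper does with $\bar{X}=\left(1-\frac{2M}{r}\right)\bar{c}(t)\,dt$ and its $\ell=1$ analogue.
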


The existence part of Theorem 1 is well-known, first appearing in the work of Zerilli \cite{Zerilli}.  See also Martel-Poisson \cite{MartelPoisson}.  We provide a different proof for completeness.

\begin{theorem} Under the same assumption for $\delta g$ as in Theorem 1 and assuming moreover $\delta g^{\ell \geq 2}$ is compactly supported away from the bifurcation sphere on the time-slice $\{ t = 0\}$, there exists a smooth co-vector $X^{\ell \geq 2}$  such that 
\begin{equation}
\delta g^{\ell \geq 2} = \pi_{X^{\ell \geq 2}} +\widehat{\delta g}^{\ell \geq 2},
\end{equation} 
with the components of the gauge-normalized solution $\widehat{\delta g}^{\ell \geq 2}$ decaying pointwise through a suitable foliation.
\end{theorem}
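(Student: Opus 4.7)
The plan is to prove Theorem 2 sector by sector, following the Hodge decomposition philosophy advertised in the introduction. Write $\delta g^{\ell \geq 2} = \delta g^{\ell \geq 2}_{\text{closed}} + \delta g^{\ell \geq 2}_{\text{co-closed}}$ by Hodge-decomposing the tangential and mixed components on each sphere of symmetry into even-parity (closed) and odd-parity (co-closed) pieces. Since the linearized Einstein equations $\delta G|_{g_0}(\delta g)=0$ decouple across these two sectors and preserve the angular mode support $\ell \geq 2$, it suffices to construct $X^{\ell \geq 2}$ and prove decay separately in each sector and then add the results.

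In each sector I would first pass to a gauge-invariant master scalar at the level of the perturbed connection coefficients. For the co-closed sector, use the two Regge-Wheeler type scalars derived (without the axisymmetric restriction of the earlier work of the first two authors) by taking appropriate curl/divergence combinations of the linearized connection on the spheres. For the closed sector, use the Zerilli-Moncrief function built from the trace-free tangential metric perturbation and its $r$-derivatives. In both cases one verifies directly from the Bianchi/second variation identities that the master scalar satisfies a Regge-Wheeler type wave equation
\begin{equation*}
\Box_{g_0} \Psi - V(r)\, \Psi = 0,
\end{equation*}
with $V(r)$ a positive, short-range potential and the photon-sphere trapping non-degenerate on the $\ell \geq 2$ modes.

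Next I would establish pointwise decay for $\Psi$ using the by-now standard vector-field machinery on Schwarzschild: a Morawetz integrated local energy decay estimate that loses derivatives at the photon sphere but gains them from the $\ell \geq 2$ cutoff, the Dafermos-Rodnianski redshift vector field to extract boundedness and decay up to and including the event horizon, and the $r^p$-hierarchy to propagate decay in the forward light-cone. Compact support of the data on $\{t=0\}$ away from the bifurcation sphere guarantees the finiteness of all relevant weighted energies. The output is a hierarchy of pointwise decay rates for $\Psi$ and its derivatives through a foliation by achronal hypersurfaces terminating on the future horizon and null infinity.

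With decay of the master scalars in hand, the remaining task is gauge normalization and metric reconstruction. In the co-closed sector impose the Regge-Wheeler gauge; in the closed sector impose the interpolated Chandrasekhar gauge described in the introduction. Achievability of these gauge conditions reduces to solving linear transport and elliptic equations on the spheres for the components of a co-vector $X^{\ell \geq 2}$, well-posed on the $\ell \geq 2$ angular modes where the relevant operators have no kernel; setting $\widehat{\delta g}^{\ell \geq 2} := \delta g^{\ell \geq 2} - \pi_{X^{\ell \geq 2}}$ then satisfies the gauge. In these gauges each perturbed metric coefficient is determined, via a finite sequence of transport equations along the constant-$u$ and constant-$\underline{u}$ null hypersurfaces together with elliptic equations on the spheres, from the master scalar and initial data. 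Commuting the decay estimates for $\Psi$ through these transport and elliptic equations, and tracking the resulting $r$-weights carefully, yields the claimed pointwise decay of the components of $\widehat{\delta g}^{\ell \geq 2}$.

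The main obstacle will be the metric reconstruction step: converting decay of a single scalar master quantity into pointwise decay of every perturbed metric coefficient requires integrating transport equations without losing the decay rate and without generating growth at the horizon or at infinity. The interpolated Chandrasekhar gauge is introduced precisely to balance these competing demands, and verifying that the choice avoids the logarithmic losses that naively appear when integrating $r^{-1}$-weighted sources along outgoing null geodesics is the most delicate part of the argument.
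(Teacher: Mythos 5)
Your skeleton matches the paper's architecture (Hodge split into closed/co-closed, master quantities satisfying Regge--Wheeler type equations, vector-field decay estimates, Regge--Wheeler gauge plus interpolated Chandrasekhar gauge), but the reconstruction step --- which you yourself flag as the main obstacle --- is left unresolved and is pointed in the wrong direction. In the Chandrasekhar gauge the solution is diagonalized in $(t,r)$ coordinates ($h_{01}=0$, $H_A=0$), and the paper recovers the metric coefficients $N,L,T,V$ not by integrating transport equations along null hypersurfaces (that is closer to the Dafermos--Holzegel--Rodnianski double-null scheme), but by \emph{radial} ODE integration on spacelike slices: equations \eqref{polarone}--\eqref{polarfour} express all four coefficients algebraically in terms of $Z^{(+)}$ and the zero-order radial integrals $\Phi_{\RN{1}}$, $\Phi_{\RN{2}}$ of $Z^{(+)}$ taken from the horizon and from infinity, respectively. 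The decisive analytic mechanism you are missing is how these integrals decay: the paper bounds the conformal $Z$-energy (with weight $u^2+v^2$) of $Q^{(+)}_n$ and applies Cauchy--Schwarz against the integrable weight $(u^2+v^2)^{-1}$ to get $|\Phi| \lesssim t^{-1/2}$, plus a separate exponential-weight argument near the horizon to tame the $(1-2M/r)^{-1}$ factors in $N$. Your $r^p$-hierarchy could in principle replace the $Z$-multiplier for the wave-equation decay, but nothing in your proposal explains how pointwise decay of $\Psi$ controls these first-order-in-antiderivative quantities, and without that the coefficients $T$, $V$, $L$, $N$ are simply not estimated.

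Two further gaps. First, in the co-closed sector you invoke only two master scalars, but in the Regge--Wheeler gauge the metric components are $h_{A\alpha}$, and the pair $P$, $Q^{(-)}=Q_{\alpha\beta A}r^A$ only controls $r^A h_{A\alpha}$ (and, near the horizon, $\hat{Y}^A h_{A\alpha}$ after commuting with $\hat Y$); the time component $T^A h_{A\alpha}$ requires the third quantity $Q_t = Q_{\alpha\beta A}T^A$, which satisfies no decoupled wave equation and is instead estimated by integrating the first-order relations among $Q_t$, $Q_{r_*}$ and $P$ coming directly from the linearized Einstein equations along the slices $\tilde\Sigma_\tau$ --- this is precisely the novel ingredient the paper advertises, and your proposal omits it. Second, your claim that the gauge conditions are ``well-posed on the $\ell\geq 2$ modes where the relevant operators have no kernel'' is false for the Chandrasekhar gauge: the defining equations \eqref{hyperbolicEquation}--\eqref{initialCondition} admit the explicit two-parameter residual family \eqref{residual}, and the paper resolves this ambiguity using the initial constraint $\mathcal{L}(h)=0$ together with the compact-support hypothesis, producing \emph{two} normalizations $h_{\RN{1}}$, $h_{\RN{2}}$ whose reconstruction formulae decay only for small and large radii respectively; the interpolated gauge is then built by computing the difference $\check{X}=X_{\RN{2}}-X_{\RN{1}}$ explicitly, showing its deformation tensor decays, and cutting off. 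Without confronting the residual kernel and the two-ended integration, the ``interpolation'' remains a name rather than an argument.
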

For more information on the decay mentioned above, in particular the rate of decay and the norms on initial data, we refer the reader to Section 10.

Another decomposition (the Hodge type decomposition) of the space of symmetric $(0, 2)$ tensors is adopted to study $\delta g$.  Any $\delta g$ is decomposed into the closed and co-closed portions, which generalize the even-odd or axial-polar decompositions in physics literature, without any symmetry or mode assumptions. Note that $K$ belongs to the closed part while $K_m, m=-1, 0, 1$ belong to the co-closed part.

For each portion of $\delta g^{\ell \geq 2}$, we decouple gauge-invariant quantities satisfying Regge-Wheeler type equations (\ref{RW1}, \ref{RW2}, \ref{RW3}); analysis of these Regge-Wheeler type equations shows that each quantity decays to zero through a suitable foliation.  Identification of gauge-invariant quantities of the co-closed portion appeared in the work of Regge-Wheeler \cite{RW} and Cunningham-Moncrief-Price \cite{CMP}, with estimates for the associated Regge-Wheeler equations accomplished in the mathematics literature \cite{BlueSoffer, DSS, FriedmanMorris, DHR}.  The novel feature of our work lies in exploiting these estimates for the gauge-invariant quantities identified by both Cunningham-Moncrief-Price and Regge-Wheeler to deduce estimates on a third gauge-invariant quantity via direct analysis of the linearized vacuum Einstein equations.  Identification of the gauge-invariant quantity of the closed portion appear in the works of Moncrief and Zerilli \cite{Moncrief, Zerilli}, with the present work and that of Johnson \cite{Johnson} being the first to analyze the associated Zerilli equation.  With these estimates in hand, the rest of the proof consists of the reconstruction of components of $\delta g^{\ell \geq 2}$ from these quantities under suitable gauge choice of $X^{\ell \geq 2}$ and the deduction of decay of all components of $\delta g$.  The well-known Regge-Wheeler gauge has been used in the physics literature to reconstruct the co-closed portion pointwise in terms of the gauge-invariant Regge-Wheeler and Cunningham-Moncrief-Price functions.  With the introduction and estimation of our third gauge-invariant quantity, we are further able to estimate the co-closed portion in this gauge.  The choice of  the gauge for the closed portion is more subtle.  Utilizing a novel interpolation of the Chandrasekhar gauge, as outlined in Section 9, reconstruction and estimation of the closed portion is accomplished.  In the end, the decay of all  components of  $\delta g^{\ell \geq 2}$ is achieved by imposing the Regge-Wheeler gauge on the co-closed portion and an interpolated Chandrasekhar gauge on the closed portion.

The stability of Schwarzschild spacetime is a subcase of the broader matter of Kerr stability, so it is natural to wonder about the prospects of generalizing our arguments to the Kerr background.  Owing to our reliance on the spherical symmetry of Schwarzschild spacetime in the Hodge decomposition discussed above, adaptation of our method to the Kerr setting would be nontrivial.  In the special case of small angular momentum $a << M$, with a small deviation from spherical symmetry and weak coupling of the closed and co-closed portions, the problem does appear to be tractable. 

The paper is organized as follows.  In Section 2, we present the Schwarzschild spacetimes as a family of static, spherically symmetric spacetimes satisfying the vacuum Einstein equations.  In Section 3, we discuss linearized gravity about such spherically symmetric spacetimes.  In particular, we discuss Hodge decomposition on the spheres of symmetry and decomposition into spherical harmonics.  In Section 4 we present the well-known linearized Kerr family of solutions, along with the pure gauge solutions.  Using such solutions, we treat the analysis of $\delta g^{\ell <2}$ and the proof of Theorem 1 in Section 5.  Subsequent sections deal with the analysis of the closed and co-closed portions of the remainder $\delta g^{\geq 2}$.  In Section 6, we prove decay of the co-closed portion in the Regge-Wheeler gauge, extending results from the previous \cite{HK}.  In Section 7, we present the well-known Zerilli-Moncrief function as a gauge-invariant quantity satisfying the Zerilli equation, the analysis of which is the subject of Section 8.  In Section 9, we introduce the Chandrasekhar gauge and prove decay of the closed solution under a suitable modification of the gauge.  We summarize our results on $\delta g^{\ell \geq 2}$ in Section 10, wherein we prove Theorem 2.

\section{The Schwarzschild Spacetime}

The Schwarzschild spacetimes $(\mathcal{M},g_M)$ comprise a family of static, spherically symmetric spacetimes, parametrized by mass $M > 0$.  Each such spacetime is vacuum; i.e., each metric $g_M$ satisfies the vacuum Einstein equations $Ric(g_M) = 0$.  

The staticity and spherical symmetry of the Schwarzschild family are encoded in a number of Killing fields.  In particular, we have the static Killing field, denoted $T$, and the rotational Killing fields, denoted $\Omega_{i}$, with $i = 1,2,3$.  For convenience in what follows, we collect the rotation Killing fields in the set $\Omega := \{ \Omega_{i} | i = 1,2,3\}$.  Moreover, we denote by $\mathcal{K} := \{ T, \Omega_1, \Omega_2, \Omega_3 \}$ the full set of Killing fields.

Our results concern the Schwarzschild exterior region, up to and including the future event horizon.  In the course of our analysis, various coordinate systems will prove useful; we enumerate them below.

The Schwarzschild exterior, not including the event horizon, is covered by a coordinate patch $(t,r,\theta,\phi)$ with $t\in{\R}, r > 2M, (\theta,\phi)\in{S^2}$.  In these standard Schwarzschild coordinates, the Schwarzschild metric has the form
\begin{equation}
 g_M = -\left(1-\frac{2M}{r}\right)dt^2 + \left(1-\frac{2M}{r}\right)^{-1}dr^2+ r^2\mathring\sigma_{\alpha\beta}dx^{\alpha}dx^{\beta},
 \end{equation}
where 
\begin{equation}\label{roundsphere}
\mathring{\sigma}_{\alpha\beta}dx^{\alpha}dx^{\beta} := d\theta^2 + \sin^2\theta d\phi^2
\end{equation}
is the round metric on the unit sphere.  Often we use the shorthand
\begin{align}
\mu & := \frac{2M}{r},\\
\Delta & :=  r^2 - 2Mr.
\end{align}

Alternatively, we can cover this region with the Regge-Wheeler coordinates $(t,r_{*},\theta,\phi)$, with tortoise coordinate $r_{*}$ normalized as 
\begin{equation}\label{rStar}
r_{*} := r + 2M\ln(r-2M) - 3M - 2M\ln(M),
\end{equation}
such that the metric
\begin{equation}
g_M = -(1-\mu)dt^2 + (1-\mu)dr_{*}^2 + r^2\mathring\sigma_{\alpha\beta}dx^{\alpha}dx^{\beta}
\end{equation}
is defined on $t\in{\R}, r_{*}\in{\R}, (\theta, \phi) \in {S^2}$, with $r_{*} = 0$ on the photon sphere $r = 3M$. 

A variant of the above takes $t_{*} = t + 2M\ln(r-2M)$, with 
\begin{equation}
g_M = -(1-\mu)dt_{*}^2 + 2\mu dt_{*}dr + (1+\mu)dr^2 + r^2\mathring\sigma_{\alpha\beta}dx^{\alpha}dx^{\beta},
\end{equation}
now defined for $(t_{*}, r, \theta, \phi)$ coordinates satisfying $t_{*}\in{\R}, r > 0, (\theta,\phi)\in{S^2}.$  In contrast with the previous two, this coordinate system covers both the exterior region and the black hole region.

Finally, we shall refer to the double-null coordinate system $(u,v,\theta,\phi)$, with null coordinates $u$ and $v$ related to the Regge-Wheeler coordinates by
\begin{align}
\begin{split}
u &= \frac{1}{2}(t - r_{*}),\\
v &= \frac{1}{2}(t + r_{*}).
\end{split}
\end{align}
With this relation, the Schwarzschild metric takes the form
\begin{equation}
g_M = -4(1-\mu)dudv + r^2\mathring\sigma_{\alpha\beta}dx^{\alpha}dx^{\beta}.
\end{equation}

Note that each of the above coordinate systems covers only a portion of the maximally extended Schwarzschild spacetime, globally parametrized by the Kruskal coordinates \cite{Kruskal}.  As we work only on the exterior region and the future event horizon, the coordinate systems above suffice for our purposes.  Indeed, all of the statements made below are coordinate-invariant, but are more transparently expressed in these coordinate systems rather than in the Kruskal coordinate system.  In particular, decay rates and weights appear as straightforward polynomial expressions in the above systems, while their expression in Kruskal coordinates involves transcendental functions or implicit relations.

For more information on the Schwarzschild spacetime, we direct the reader to the comprehensive references \cite{Wald, Chandra1}.

\section{Linearized Gravity in a Spherically Symmetric Background}

\subsection{Spherically symmetric background}\label{background}
The analysis in this section applies to a spherically symmetric spacetime $(\mathcal{M}, g)$ such that the group $SO(3)$ acts by isometry. 

Let $(\mathcal{Q}, \tilde{g}=\sum_{A, B=0, 1} \tilde{g}_{AB} dx^A dx^B)$ be a two-dimensional Lorentzian manifold with local coordinates $x^A, A=0,1$.  Let $(S^2, \mathring{\sigma}=\sum_{\alpha\beta= 2, 3}\mathring{\sigma}_{\alpha\beta} dx^\alpha dx^\beta)$ be the unit two-sphere with the standard Riemannian metric in local coordinates $x^\alpha, \alpha=2, 3$.  We adopt the convention that any repeated index is summed. Each point on $\mathcal{Q}$ represents an orbit sphere, with $r$ a positive function which represents the areal radius of each orbit sphere. We consider a general spherically symmetric spacetime in local coordinates $x^0, x^1, x^2, x^3$:
\begin{equation}\label{spacetime metric} 
g_{ab}dx^{a}dx^{b} = \tilde{g}_{AB} dx^A dx^B+r^2 \mathring{\sigma}_{\alpha\beta} dx^\alpha dx^\beta, A, B=0, 1, \alpha, \beta =2, 3.
\end{equation}
The index notations above are adopted throughout the paper: $A, B, C, \cdots =0, 1$ for quotient indices, $\alpha, \beta, \gamma, \cdots =2, 3$ for spherical indices, and $a, b, c, \cdots =0, 1, 2, 3$ for spacetime indices.  

The Christoffel symbols $\Gamma_{ab}^c$ of a spherically symmetric spacetime are 
\[\begin{split}\Gamma_{AB}^C & =\tilde{\Gamma}_{AB}^C,\\
\Gamma_{\alpha\beta}^\gamma&=\mathring{\Gamma}_{\alpha\beta}^\gamma,\\
\Gamma_{\alpha A}^\beta&=r^{-1} \partial_A r (\delta_\alpha^\beta),\\
\Gamma_{\alpha\beta}^D&=-r \partial^Dr (\mathring{\sigma}_{\alpha\beta}),\end{split}\]
where $\Gamma_{AB}^C$ and $\mathring{\Gamma}_{\alpha\beta}^\gamma$ are the Christoffel symbols of $\tilde{g}_{AB}$ and $\mathring{\sigma}_{\alpha\beta}$, respectively. 

In terms of these Christoffel symbols, we consider two types of differential operators,  $\tilde{\nabla}_A$ and $\mathring{\nabla}_\alpha$. When applied to functions, $\tilde{\nabla}_A$  and $\mathring{\nabla}_\alpha$ are just differentiation with respect to coordinate variables $x^A, A=0, 1$ and $x^\alpha, \alpha=2, 3$, respectively. For co-vectors, we define
\begin{equation}\label{mathring}\begin{split} \tilde{\nabla}_A dx^B&=-\tilde{\Gamma}_{AC}^B dx^C,\\
\tilde{\nabla}_A dx^\alpha&=0,\\
\mathring{\nabla}_\alpha dx^B&=0,\\
\mathring{\nabla}_\alpha dx^\beta&=-\mathring{\Gamma}_{\alpha\gamma}^\beta dx^\gamma,
\end{split}
\end{equation}
with an obvious extension of the operators to more elaborate tensor bundles.

We use the notation $\tilde\Box$ and $\mathring\Delta$ for the quotient d'Alembertian and the spherical Laplacian operators.  Furthermore, we denote the volume forms for the quotient space and the unit sphere by $\epsilon_{AB}$ and $\epsilon_{\alpha\beta}$, respectively.

Throughout the paper, we consider quantities which are scalars, co-vectors, or symmetric traceless two-tensors on the spheres of symmetry, with associated sphere bundles referred to as $\mathcal{L}(0), \mathcal{L}(-1),$ and $\mathcal{L}(-2)$, respectively.  Note that the spacetime norm $|\ |_{g}$ is positive-definite on these bundles, owing to the Riemannian nature of the orbit spheres. The bundles come equipped with projected covariant derivative operators $\slashed{\nabla}$, defined for scalars by ordinary differentiation and for co-vectors by
\[\slashed{\nabla}_a dx^\alpha =-\Gamma^{\alpha}_{a\gamma} dx^\gamma, \text{  for  } a=0, 1, 2, 3,\]
with obvious extension to symmetric traceless two-tensors.  We denote the associated d'Alembertian operators by
\begin{equation}\label{SpindAlembertian}
\slashed{\Box}_{\mathcal{L}(-s)} := \slashed{\nabla}^a\slashed{\nabla}_a,
\end{equation}
with $s = 0, 1, 2$ and the appropriate covariant derivative operator.  Note that $\slashed{\Box}_{\mathcal{L}(0)} = \Box$ is the standard d'Alembertian operator on $\mathcal{M}$.  For further details, see \cite{HK}.

The projected connection, as well as the associated d'Alembertian and Laplacian operators, are related to the quotient and spherical operators of the first subsection in a straightforward fashion.  We illustrate the procedure on the bundle $\mathcal{L}(-2)$:

\begin{align*}
\slashed{\nabla}_{A}t_{\alpha\beta} &= \partial_{A}t_{\alpha\beta} - \Gamma^{\gamma}_{A\alpha}t_{\gamma\beta} - \Gamma^{\gamma}_{A\beta}t_{\alpha\gamma}\\
&=\tilde\nabla_{A}t_{\alpha\beta} - 2r^{-1}r_{A}t_{\alpha\beta},
\end{align*}

\begin{align*}
\slashed\nabla_{B}\slashed\nabla_{A}t_{\alpha\beta} &= \partial_{B}\left(\slashed\nabla_{A}t_{\alpha\beta}\right) - \Gamma^{C}_{BA}\slashed\nabla_{C}t_{\alpha\beta}\\
&- \Gamma^{\gamma}_{BA}\slashed\nabla_{\gamma}t_{\alpha\beta} - \Gamma^{\gamma}_{B\alpha}\slashed\nabla_{A}t_{\gamma\beta} - \Gamma^{\gamma}_{B\beta}\slashed\nabla_{A}t_{\alpha\gamma}\\
&=\tilde\nabla_{B}\left(\slashed\nabla_{A}t_{\alpha\beta}\right) - 2r^{-1}r_{B}\left(\slashed\nabla_{A}t_{\alpha\beta}\right)\\
&=\tilde\nabla_{B}\tilde\nabla_{A}t_{\alpha\beta} -2r^{-1}r_{A}\tilde\nabla_{B}t_{\alpha\beta} - 2r^{-1}r_{B}\tilde\nabla_{A}t_{\alpha\beta}\\
&+ 6r^{-2}r_{A}r_{B}t_{\alpha\beta} - 2r^{-1}\left(\tilde\nabla_{A}\tilde\nabla_{B}r\right)t_{\alpha\beta},
\end{align*}

\[\slashed\nabla_{\gamma}t_{\alpha\beta} = \mathring\nabla_{\gamma}t_{\alpha\beta},\]

\begin{align*}
\slashed\nabla_{\lambda}\slashed\nabla_{\gamma}t_{\alpha\beta}&=\partial_{\lambda}\left(\slashed\nabla_{\gamma}t_{\alpha\beta}\right) - \Gamma^{\delta}_{\lambda\gamma}\slashed\nabla_{\delta}t_{\alpha\beta} \\
&- \Gamma^{\delta}_{\lambda\alpha}\slashed\nabla_{\gamma}t_{\delta\beta} - \Gamma^{\delta}_{\lambda\beta}\slashed\nabla_{\gamma}t_{\alpha\delta} - \Gamma^{A}_{\lambda\gamma}\slashed\nabla_{A}t_{\alpha\beta}\\
&= \mathring\nabla_{\lambda}\mathring\nabla_{\gamma}t_{\alpha\beta} + rr^{A}\mathring\sigma_{\lambda\gamma}\left(\slashed\nabla_{A}t_{\alpha\beta}\right)\\
&=\mathring\nabla_{\lambda}\mathring\nabla_{\gamma}t_{\alpha\beta}+rr^{A}\mathring\sigma_{\lambda\gamma}\left(\tilde\nabla_{A}t_{\alpha\beta} - 2r^{-1}r_{A}t_{\alpha\beta}\right).
\end{align*}

Contracting the above, we deduce the relation

\begin{align}\label{TwoTensorWave}
\begin{split}
\slashed{\Box}_{\mathcal{L}(-2)} t_{\alpha\beta} &= \tilde\Box t_{\alpha\beta} -2r^{-1}r^{A}\tilde\nabla_{A}t_{\alpha\beta} + r^{-2}\mathring\Delta t_{\alpha\beta} \\
&+ 2r^{-2}r^{A}r_{A}t_{\alpha\beta} - 2r^{-1}\left(\tilde\Box r\right) t_{\alpha\beta}.
\end{split}
\end{align}

Likewise, we calculate
\begin{equation}\label{CovectorWave}
\slashed{\Box}_{\mathcal{L}(-1)} v_{\alpha} = \tilde\Box v_{\alpha} + r^{-2}\mathring\Delta v_{\alpha} - r^{-1}\left(\tilde\Box r\right) v_{\alpha},
\end{equation}

\begin{equation}\label{ScalarWave}
\slashed{\Box}_{\mathcal{L}(0)} V = \Box V = \tilde\Box V + 2r^{-1}r^{A}\tilde\nabla_{A}V + r^{-2}\mathring\Delta V.
\end{equation}

Specializing to the Schwarzschild spacetime, the quotient metric in the coordinates $x^0=t, x^1=r$ has the form 
\[\tilde{g}_{AB} dx^A dx^B=-\left(1-\frac{2M}{r}\right) dt^2+\left(1-\frac{2M}{r}\right)^{-1} dr^2,\]
and the non-vanishing Christoffel symbols of the quotient metric are 
\begin{align*}
\Gamma_{11}^1&=-\frac{M}{r(r-2M)},\\
\Gamma_{00}^1&=\frac{M(r-2M)}{r^3},\\
\Gamma_{01}^0 &=\frac{M}{r(r-2M)}.
\end{align*}

For convenience in what follows, we also note the formulae
\begin{align}\label{miscellaneous}
\begin{split}
&\tilde{\nabla}_A\tilde{\nabla}_B r=\frac{M}{r^2} \tilde{g}_{AB},\\
&|\tilde{\nabla} r|^2= r^{A}r_{A} = 1-\frac{2M}{r},\\
&\tilde{\Box} r=\frac{2M}{r^2},\\
&\tilde{\Box} r^2=2,\\
&\tilde{K} = \frac{2M}{r^3}, 
\end{split}
\end{align}
with $\tilde{K}$ the Gaussian curvature of the quotient $\mathcal{Q}.$ 

\subsection{The Linearized Vacuum Einstein Equations}

Suppose a symmetric two-tensor $h_{ab}=\delta g_{ab}$ is a linear perturbation of $g_{ab}$. We recall that a perturbation of the Ricci curvature $\delta R_{bd}$ satisfies
\begin{equation}\label{linear_Ricci} 
2\delta R_{bd}= g^{ae}(\nabla_a\nabla_d h_{eb}+\nabla_a\nabla_b h_{ed}-\nabla_d \nabla_b h_{ea}-\nabla_a\nabla_e h_{bd}).
\end{equation}

The linear perturbation has the form
\begin{equation}\label{h_decomposition0}
\delta g=h_{AB} dx^A dx^B+2h_{A\alpha} dx^A dx^\alpha+h_{\alpha\beta}dx^{\alpha}dx^{\beta},
\end{equation}
with the last component admitting a further decomposition
\[h_{\alpha\beta}dx^{\alpha}dx^{\beta} = H\mathring{\sigma}_{\alpha\beta}dx^{\alpha}dx^{\beta} + \hat{h}_{\alpha\beta}dx^{\alpha}dx^{\beta}\]
into trace and traceless parts.  The trace $H$ is regarded as a function on $\mathcal{M}$, while the traceless part $\hat{h}_{\alpha\beta}$ is a symmetric traceless two-tensor with respect to $\mathring{\sigma}_{\alpha\beta}$.  Expressed in this way, the linear perturbation takes the form
\begin{equation}\label{solution}
\delta g=h_{AB} dx^A dx^B+2h_{A\alpha} dx^A dx^\alpha+H \mathring{\sigma}_{\alpha\beta} dx^\alpha dx^\beta+\hat{h}_{\alpha\beta} dx^\alpha dx^\beta,
\end{equation}
with each of $h_{AB}, h_{A\alpha}, H, \hat{h}_{\alpha\beta}$ depending upon all spacetime variables.

Perturbing about a spherically symmetric spacetime, with radial function $r$, we compute the linearized Ricci tensor:
\begin{align}
\begin{split}\label{AB}
2\delta R_{AB} &=  2r^{-1} r^{D}\left(\tilde\nabla_{A}h_{DB} + \tilde\nabla_{B} h_{DA} - \tilde\nabla_{D} h_{AB}\right) - \tilde\Box h_{AB}\\
&+ g^{CD}\left(\tilde\nabla_{C}\tilde\nabla_{A}h_{BD} + \tilde\nabla_{C}\tilde\nabla_{B}h_{AD}\right)-\tilde\nabla_{A}\tilde\nabla_{B} \left(g^{CD}h_{CD}\right)\\ 
 &- 2r^{-1} r_{A} \tilde\nabla_{B}(r^{-2}H) - 2r^{-1} r_{B} \tilde\nabla_{A}(r^{-2}H) -2\tilde\nabla_{A}\tilde\nabla_{B} \left(r^{-2}H\right)\\
 &-r^{-2}\mathring{\Delta}h_{AB} + r^{-2}\tilde\nabla_{A}\mathring\nabla^{\alpha}h_{B\alpha} + r^{-2}\tilde\nabla_{B}\mathring\nabla^{\alpha}h_{A\alpha},
\end{split}
\end{align}

\begin{align}
\begin{split}\label{Aalpha}
2\delta R_{A\alpha} = &-\tilde\nabla_{A}\mathring\nabla_{\alpha}(r^{-2}H) +\tilde\nabla^{B}\mathring\nabla_{\alpha}h_{BA} -r\tilde\nabla_{A}\mathring\nabla_{\alpha}\left(r^{-1}g^{CD}h_{CD}\right)\\
&+2r^{-1}r^{B}\tilde\nabla_{A}h_{B\alpha} + \tilde\nabla^{B}\tilde\nabla_{A}h_{B\alpha} - 2r^{-2}r_{A}r^{B}h_{B\alpha} \\
&- 2r^{-1}r_{A}\tilde\nabla^{B}h_{B\alpha} -2r^{-1}(\tilde\nabla_{A}\tilde\nabla^{B}r)h_{B\alpha} - \tilde\Box h_{A\alpha}\\
&+r^{-2}\left(\mathring\nabla^{\gamma}\mathring\nabla_{\alpha}h_{A\gamma} - \mathring\Delta h_{A\alpha}\right)
 +\tilde\nabla_{A}\mathring\nabla^{\gamma}(r^{-2}\hat{h}_{\alpha\gamma}),
\end{split}
\end{align}

\begin{align}
\begin{split}\label{alphabeta}
2\delta R_{\alpha \beta} &= \Big(2r r^{B} \tilde\nabla^{A}h_{AB} + 2r^{A}r^{B}h_{AB} + 2r(\tilde\nabla^{A}\tilde\nabla^{B}r)h_{AB}\\
&-r r^{A} \tilde\nabla_{A}(g^{CD}h_{CD}) -r^2\tilde\Box(r^{-2}H)- 2r^{-1}(\tilde\Box r)H\\
&+6r^{-2}r^{A}r_{A}H - 4r^{-1}r^{A}\tilde\nabla_{A}H-r^{-2}\mathring\Delta H \Big)\mathring\sigma_{\alpha\beta}\\
&-\mathring\nabla_{\alpha}\mathring\nabla_{\beta}(g^{CD}h_{CD}) +\left(\tilde\nabla^{A}\mathring\nabla_{\alpha}h_{A\beta} + \tilde\nabla^{A}\mathring\nabla_{\beta}h_{A\alpha}\right)\\
&+2r^{-1}r^{A}\mathring\nabla^{\gamma}h_{A\gamma}\mathring\sigma_{\alpha\beta}+2r^{-2}r^{A}r_{A}\hat{h}_{\alpha\beta} - 2r^{-1}(\tilde\Box r)\hat{h}_{\alpha\beta}\\
&-2r^{-1}r^{A}\tilde\nabla_{A}\hat{h}_{\alpha\beta} - r^{2}\tilde\Box(r^{-2}\hat{h}_{\alpha\beta})\\
&+r^{-2}\left(\mathring\nabla^{\gamma}\mathring\nabla_{\alpha}\hat{h}_{\beta\gamma} + \mathring\nabla^{\gamma}\mathring\nabla_{\beta}\hat{h}_{\alpha\gamma} - \mathring\Delta \hat{h}_{\alpha\beta}\right).
\end{split}
\end{align}

We note the commutation formulae:
\begin{align}\label{commFormulae}
\begin{split}
&\mathring\nabla^{\gamma}\mathring\nabla_{\alpha}h_{A\gamma}=\mathring\nabla_{\alpha}\mathring\nabla^{\gamma}h_{A\gamma}+h_{A\alpha},\\
&\mathring\nabla^{\gamma}\mathring\nabla_{\alpha}\hat{h}_{\beta\gamma} + \mathring\nabla^{\gamma}\mathring\nabla_{\beta}\hat{h}_{\alpha\gamma} - \mathring\Delta \hat{h}_{\alpha\beta}=\mathring\sigma_{\alpha\beta}\mathring{\nabla}^\gamma\mathring{\nabla}^\delta \hat{h}_{\gamma\delta}+2 \hat{h}_{\alpha\beta}.
\end{split}
\end{align} 

Using the second commutation formula \eqref{commFormulae}, we split the equation \eqref{alphabeta} into trace and traceless parts:
\begin{align}\label{alphabetaTrace}
\begin{split}
\delta R_{\alpha\beta}\mathring\sigma^{\alpha\beta} &= 2r r^{B} \tilde\nabla^{A}h_{AB} + 2r^{A}r^{B}h_{AB} + 2r(\tilde\nabla^{A}\tilde\nabla^{B}r)h_{AB}\\
&-r r^{A} \tilde\nabla_{A}(g^{CD}h_{CD}) -r^2\tilde\Box(r^{-2}H)- 2r^{-1}(\tilde\Box r)H\\
&+6r^{-2}r^{A}r_{A}H - 4r^{-1}r^{A}\tilde\nabla_{A}H-r^{-2}\mathring\Delta H+ \tilde{\nabla}^{A}\mathring\nabla^{\alpha}h_{A\alpha}\\
 &+2r^{-1}r^{A}\mathring\nabla^{\gamma}h_{A\gamma} + r^{-2}\mathring\nabla^{\gamma}\mathring\nabla^{\delta}\hat{h}_{\gamma\delta} -\frac{1}{2}\mathring\Delta(g^{CD}h_{CD}),
 \end{split}
 \end{align}
 \begin{align}\label{alphabetaTraceless}
 \begin{split}
 2\widehat{\delta R}_{\alpha\beta} &= 2\delta R_{\alpha\beta} - \delta R_{\gamma\delta}\mathring\sigma^{\gamma\delta}\mathring\sigma_{\alpha\beta}\\
 &=  \tilde{\nabla}^{A}\left(\mathring\nabla_{\alpha}h_{A\beta} + \mathring\nabla_{\beta} h_{A\alpha} - \mathring\nabla^{\gamma}h_{A\gamma}\mathring\sigma_{\alpha\beta}\right)+ 2r^{-2}\hat{h}_{\alpha\beta}\\
 &+2r^{-2}r^{A}r_{A}\hat{h}_{\alpha\beta} - 2r^{-1}(\tilde\Box r)\hat{h}_{\alpha\beta} -2r^{-1}r^{A}\tilde\nabla_{A}\hat{h}_{\alpha\beta}\\
 &-\mathring\nabla_{\alpha}\mathring\nabla_{\beta}(g^{CD}h_{CD}) + \frac{1}{2}\mathring\Delta(g^{CD}h_{CD})\mathring\sigma_{\alpha\beta} - r^{2}\tilde\Box(r^{-2}\hat{h}_{\alpha\beta}).
 \end{split}
 \end{align}

\subsection{Hodge Decomposition}

Recall that a general symmetric two-tensor on a spherically symmetric spacetime has the pointwise decomposition 
 \begin{equation}\label{h_decomposition} \delta g=h_{AB} dx^A dx^B+2h_{A\alpha} dx^A dx^\alpha+(H\mathring\sigma_{\alpha\beta}+ \hat{h}_{\alpha\beta}) dx^\alpha dx^\beta,\end{equation}
 with each linearized metric coefficient depending on all spacetime variables.  
 
The following Hodge type decomposition of components of $\delta g$ having the form \eqref{h_decomposition} is derived in the appendix:

\begin{proposition}\label{Hodge}
Let $h_{A\alpha} dx^A dx^\alpha$ be a two-tensor on $\mathcal{M}$.  Regarding $h_{A\alpha}dx^{\alpha}$ as a co-vector on $S^2$ for each $A = 0,1$, there exist functions $H_A$ and $\underline{H}_A$ on $\mathcal{M}$ such that
\begin{equation}\label{crossDecomposition}
h_{A\alpha} dx^{A}dx^\alpha=[\mathring{\nabla}_\alpha H_A+\epsilon_\alpha^\beta (\mathring{\nabla}_\beta \underline{H}_A)] dx^{A}dx^\alpha.
\end{equation}

Let $\hat{h}_{\alpha\beta} dx^\alpha dx^\beta$ be a two-tensor on $\mathcal{M}$.  Regarding $\hat{h}_{\alpha\beta}dx^{\alpha}dx^{\beta}$ as a symmetric traceless two-tensor on $S^2$, in the sense that $\mathring{\sigma}^{\alpha\beta}\hat{h}_{\alpha\beta}=0$, there exist functions $H_2$, $\underline{H}_2$ on $\mathcal{M}$ such that 
\begin{align}\label{tracelessDecomposition}
\begin{split}
\hat{h}_{\alpha\beta} dx^\alpha dx^\beta &= \Big[  (\mathring{\nabla}_{\alpha}\mathring{\nabla}_\beta {H}_2-\frac{1}{2}\mathring{\sigma}_{\alpha\beta} \mathring{\Delta} {H}_2)\\
&+\frac{1}{2}  (\mathring{\nabla}_{\alpha}\epsilon_\beta^\gamma \mathring{\nabla}_\gamma \underline {H}_2+  \mathring{\nabla}_{\beta}\epsilon_\alpha^\gamma \mathring{\nabla}_\gamma \underline{H}_2)\Big] dx^\alpha dx^\beta. 
\end{split}
\end{align}
\end{proposition}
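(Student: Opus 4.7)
Both decompositions are standard Hodge theory on $S^2$, applied fiberwise over the quotient $\mathcal{Q}$. The strategy is to define the scalar potentials by inverting appropriate Laplacian-type operators on the sphere, then to show that the residual term vanishes using $H^1(S^2)=0$ and the absence of transverse-traceless tensors on the round $S^2$.

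For the co-vector decomposition \eqref{crossDecomposition}, regard $h_{A\alpha}$ for each fixed $A$ as a smooth one-form on each orbit sphere. I would define
\[ H_A := \mathring{\Delta}^{-1}\left(\mathring{\nabla}^\alpha h_{A\alpha}\right), \qquad \underline{H}_A := \mathring{\Delta}^{-1}\left(\epsilon^{\alpha\beta}\mathring{\nabla}_\alpha h_{A\beta}\right), \]
where $\mathring{\Delta}^{-1}$ is defined on functions of zero mean. The source terms integrate to zero by Stokes' theorem, so the inversion is well-defined and depends smoothly on the quotient variables by elliptic regularity. The residual one-form $\omega_{A\alpha} := h_{A\alpha} - \mathring{\nabla}_\alpha H_A - \epsilon_\alpha{}^\beta \mathring{\nabla}_\beta \underline{H}_A$ is by construction both divergence-free and curl-free on each $S^2$, hence harmonic, and therefore identically zero since $H^1(S^2) = 0$.

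For the STT decomposition \eqref{tracelessDecomposition}, a direct calculation on $S^2$, commuting covariant derivatives via Ricci $= \mathring{\sigma}$, shows that the proposed ansatz forces
\[ \mathring{\nabla}^\beta \mathring{\nabla}^\alpha \hat{h}_{\alpha\beta} = \frac{1}{2}\mathring{\Delta}(\mathring{\Delta}+2) H_2, \qquad \epsilon^{\beta\gamma}\mathring{\nabla}_\gamma \mathring{\nabla}^\alpha \hat{h}_{\alpha\beta} = \frac{1}{2}\mathring{\Delta}(\mathring{\Delta}+2)\underline{H}_2. \]
I would therefore define $H_2$ and $\underline{H}_2$ by inverting $\mathring{\Delta}(\mathring{\Delta}+2)$ on the orthogonal complement of its kernel, which consists of the $\ell \leq 1$ spherical harmonics. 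The required vanishing of the $\ell \leq 1$ projections of the right-hand sides amounts to the standard fact that smooth STT two-tensors on $S^2$ are supported on $\ell \geq 2$, a consequence of the identity $\mathring{\nabla}_\alpha \mathring{\nabla}_\beta Y_{1m} + Y_{1m}\mathring{\sigma}_{\alpha\beta} = 0$ (together with the trivial $\ell = 0$ case). The residual STT tensor obtained after subtracting the two constructed summands is then divergence-free, and a Bochner integration-by-parts argument exploiting the positive Gaussian curvature of $S^2$ shows it must vanish.

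The main obstacle is the STT case: carefully identifying the $\ell \leq 1$ kernel of $\mathring{\Delta}(\mathring{\Delta}+2)$ and verifying that smooth STT tensors on $S^2$ have no content in these low modes, while also confirming that divergence-free STT tensors on the round sphere must vanish. Both facts are standard in the tensor-spherical-harmonic literature but require explicit mode analysis rather than purely soft functional-analytic arguments.
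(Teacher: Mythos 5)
Your proposal is correct, and its computational core coincides with the paper's own proof (given in the appendix): you arrive at exactly the paper's equations \eqref{f_g}, namely $\tfrac{1}{2}\mathring{\Delta}(\mathring{\Delta}+2)H_2=\mathring{\nabla}^{\beta}\mathring{\nabla}^{\alpha}\hat{h}_{\alpha\beta}$ and its co-closed twin, identify the kernel of $\mathring{\Delta}(\mathring{\Delta}+2)$ as the $\ell\leq 1$ modes, and verify solvability by the same $\ell=1$ orthogonality check, via $\mathring{\nabla}_{\alpha}\mathring{\nabla}_{\beta}Y^{1m}=-Y^{1m}\mathring{\sigma}_{\alpha\beta}$ (the paper phrases this with the coordinate functions $X^{i}$, together with symmetry and tracelessness of $\hat h$); the co-vector case is the standard Hodge decomposition in both treatments. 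Where you genuinely diverge is in how the argument is closed. The paper first posits a potential one-form $p_{\alpha}$ with $\hat{h}_{\alpha\beta}=\tfrac{1}{2}(\mathring{\nabla}_{\alpha}p_{\beta}+\mathring{\nabla}_{\beta}p_{\alpha}-(\mathring{\nabla}^{\gamma}p_{\gamma})\mathring{\sigma}_{\alpha\beta})$ --- i.e.\ surjectivity of the conformal Killing operator on $S^2$, asserted by analogy with the one-form case --- and then Hodge-decomposes $p=\mathring{\nabla}f+\epsilon\mathring{\nabla}g$, so the decomposition is automatic and the fourth-order equations merely determine $f,g$. You instead invert $\mathring{\Delta}(\mathring{\Delta}+2)$ directly and must then kill the residual, and your chain is valid: the one-form $\mathring{\nabla}^{\alpha}(\text{residual})_{\alpha\beta}$ is divergence- and curl-free by construction of $H_2,\underline{H}_2$, hence vanishes since $H^1(S^2)=0$; the residual is then a divergence-free STT tensor, and commuting derivatives on the unit sphere gives $\mathring{\Delta}u_{\alpha\beta}=2u_{\alpha\beta}$ for such $u$, which integration against $u$ rules out unless $u=0$. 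This Bochner step supplies a proof of precisely the fact the paper leaves asserted (no transverse-traceless tensors on $S^2$, equivalently no holomorphic quadratic differentials in genus zero), so your route is somewhat more self-contained at the cost of one extra vanishing lemma. One point to spell out in a final write-up: the assertion that the residual is divergence-free does not follow from the defining equations for $H_2,\underline{H}_2$ alone --- it needs the $H^1(S^2)=0$ argument applied once more to the one-form $\mathring{\nabla}^{\alpha}\hat{h}_{\alpha\beta}-\tfrac{1}{2}\mathring{\nabla}_{\beta}(\mathring{\Delta}+2)H_2-\tfrac{1}{2}\epsilon_{\beta}^{\gamma}\mathring{\nabla}_{\gamma}(\mathring{\Delta}+2)\underline{H}_2$, as sketched above.
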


\begin{proposition}
Any symmetric two-tensor $\delta g$ of the form \eqref{h_decomposition} can be decomposed as $\delta g=h_1+h_2$ where 
\begin{align}\label{h_1}
\begin{split}
h_1 &=h_{AB} dx^A dx^B+2 (\mathring{\nabla}_\alpha H_A)dx^\alpha dx^A\\
&+(H\mathring{\sigma}_{\alpha\beta}+\mathring{\nabla}_{\alpha}\mathring{\nabla}_\beta {H}_2-\frac{1}{2}\mathring{\sigma}_{\alpha\beta} \mathring{\Delta} {H}_2) dx^\alpha dx^\beta,
\end{split}
\end{align}
\begin{equation}\label{h_2} h_2=2  \epsilon_\alpha^\beta (\mathring{\nabla}_\beta \underline{H}_A)dx^\alpha dx^A+\frac{1}{2}  (\mathring{\nabla}_{\alpha}\epsilon_\beta^\gamma \mathring{\nabla}_\gamma \underline {H}_2+  \mathring{\nabla}_{\beta}\epsilon_\alpha^\gamma \mathring{\nabla}_\gamma \underline{H}_2) dx^\alpha dx^\beta.\end{equation}
\end{proposition}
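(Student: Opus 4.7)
The plan is to obtain this decomposition as an immediate corollary of Proposition \ref{Hodge}, simply by applying the two Hodge-type decompositions componentwise to $\delta g$ and then regrouping terms according to whether the potential is of scalar (gradient) or pseudo-scalar ($\epsilon$-twisted) type.

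First, I would start from the pointwise decomposition \eqref{h_decomposition}, viewing $\delta g$ as the sum of a ``$\tilde{g}$-part'' $h_{AB}\,dx^{A}dx^{B}$, a ``cross part'' $2h_{A\alpha}\,dx^{A}dx^{\alpha}$, a conformal sphere part $H\mathring{\sigma}_{\alpha\beta}\,dx^{\alpha}dx^{\beta}$, and a symmetric traceless sphere part $\hat{h}_{\alpha\beta}\,dx^{\alpha}dx^{\beta}$. The first and third pieces live over the two-dimensional quotient $\mathcal{Q}$ and carry no angular tensorial direction to split; they will be placed wholesale into $h_{1}$. The cross part and traceless part, however, each fall under one of the two Hodge decompositions in Proposition \ref{Hodge}.

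Next, I would apply \eqref{crossDecomposition} to obtain functions $H_{A},\underline{H}_{A}$ on $\mathcal{M}$ such that $h_{A\alpha}=\mathring{\nabla}_{\alpha}H_{A}+\epsilon_{\alpha}^{\ \beta}\mathring{\nabla}_{\beta}\underline{H}_{A}$, and \eqref{tracelessDecomposition} to obtain functions $H_{2},\underline{H}_{2}$ on $\mathcal{M}$ such that $\hat{h}_{\alpha\beta}$ splits into the traceless Hessian of $H_{2}$ plus the symmetrized $\epsilon$-twisted Hessian of $\underline{H}_{2}$. Substituting back into \eqref{h_decomposition} gives a sum of six tensor fields. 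I would then regroup these six terms: define $h_{1}$ as the sum of the four terms carrying no factor of $\epsilon$ (namely $h_{AB}$, the $\mathring{\nabla}_{\alpha}H_{A}$ contribution, the $H\mathring{\sigma}_{\alpha\beta}$ contribution, and the traceless-Hessian-of-$H_{2}$ contribution), yielding \eqref{h_1}; and define $h_{2}$ as the sum of the two $\epsilon$-twisted terms, yielding \eqref{h_2}. By construction $\delta g=h_{1}+h_{2}$.

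At this stage the verification is essentially bookkeeping, but one point deserves checking and is the only place where any structural care is required: I would verify that $h_{2}$ as written in \eqref{h_2} really lies in the class of tensors of the form \eqref{h_decomposition} with trivial $h_{AB}$, $H$, and $H_{2}$ data. The cross part of $h_{2}$ is manifestly of the form $2(\epsilon_{\alpha}^{\ \beta}\mathring{\nabla}_{\beta}\underline{H}_{A})dx^{A}dx^{\alpha}$, and the angular part of $h_{2}$ is manifestly traceless with respect to $\mathring{\sigma}_{\alpha\beta}$ because both $\mathring{\nabla}_{\alpha}\epsilon_{\beta}^{\ \gamma}\mathring{\nabla}_{\gamma}\underline{H}_{2}$ and its transpose are traceless (the trace reduces to $\epsilon^{\alpha\beta}\mathring{\nabla}_{\alpha}\mathring{\nabla}_{\beta}\underline{H}_{2}=0$ by antisymmetry of $\epsilon$ and symmetry of the Hessian on functions). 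There is no real obstacle here; the main technical work was already done in Proposition \ref{Hodge}, and the present proposition is a clean repackaging. If desired, uniqueness (modulo kernel of $\mathring{\Delta}$ and its iterates, i.e.\ the low spherical harmonics) could be noted as inherited directly from the uniqueness in Proposition \ref{Hodge}.
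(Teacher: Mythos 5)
Your proposal is correct and matches the paper's (implicit) argument exactly: the paper presents this proposition as an immediate consequence of Proposition \ref{Hodge}, obtained by applying the two Hodge-type decompositions to $h_{A\alpha}$ and $\hat{h}_{\alpha\beta}$ and regrouping the gradient-type and $\epsilon$-twisted terms into $h_1$ and $h_2$. Your added checks (tracelessness of the angular part of $h_2$ via antisymmetry of $\epsilon$ against the symmetric Hessian, and the uniqueness remark) are sound and consistent with the appendix.
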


We note that the total number of components of $\delta g$ remains ten, given by $h_{AB}, A, B=0, 1$, $H_A, A=0, 1$, $\underline{H}_A, A=0, 1$, $H$, $H_2$, and $\underline{H}_2$. 

We refer to the portions $h_1$ and $h_2$ into which $\delta g$ was decomposed above as the closed and co-closed portions, respectively.  As the decomposition is invariant under the spacetime covariant derivative, the closed and co-closed portions are themselves solutions of the linearized vacuum Einstein equations \eqref{linearized_Einstein}.  Working within a linear theory, there is no trouble in studying each of these pieces separately; we simply add the two together to recover the original. 

The closed and co-closed solutions generalize the even-parity (polar) and the odd-parity (axial) solutions in the physics literature, respectively.  In the course of the paper, we will simply refer to them as closed and co-closed solutions.

\subsection{Spherical Harmonics}
In this subsection we recall the tensor spherical harmonics, first introduced by Regge-Wheeler \cite{RW}.  See also \cite{MartelPoisson}.

The scalar spherical harmonics $Y^{\ell m}$, indexed by integers $\ell \geq 0$ and $|m|\leq \ell$, are eigenfunctions of the spherical Laplacian, with eigenvalue $-\ell(\ell + 1)$.  That is, the $Y^{\ell m}$ satisfy
\begin{equation}
\mathring{\Delta} Y^{\ell m} = -\ell(\ell+1)Y^{\ell m}
\end{equation}
for $\ell \geq 0, |m|\leq \ell$.

With the normalization
\begin{equation}
||Y^{\ell m}||_{L^2(S^2)} = 1,
\end{equation}
the eigenfunctions $Y^{\ell m}$ form a complete, orthonormal basis of $L^2(S^2)$.

For co-vectors on the sphere, we have the closed harmonics
\begin{equation}
Y_{\alpha}^{\ell m} := \mathring\nabla_{\alpha}Y^{\ell m},
\end{equation}
and the co-closed harmonics
\begin{equation}
X_{\alpha}^{\ell m} := \epsilon_{\alpha\beta}\mathring\nabla^{\beta}Y^{\ell m}.
\end{equation}

Regarding symmetric traceless two-tensors on the sphere, closed harmonics have the form
\begin{equation}
Y_{\alpha\beta}^{\ell m} := \left(\mathring\nabla_{\alpha}\mathring\nabla_{\beta} +\frac{1}{2}\ell(\ell +1)\mathring\sigma_{\alpha\beta}\right)Y^{\ell m},
\end{equation}
while, for the co-closed harmonics,
\begin{equation}
X_{\alpha\beta}^{\ell m} := \left(\epsilon_{\alpha}^{\gamma}\mathring\nabla_{\beta} + \epsilon_{\beta}^{\gamma}\mathring\nabla_{\alpha}\right)\mathring\nabla_{\gamma} Y^{\ell m}.
\end{equation}

We note that, for the co-vector harmonics
\begin{align}
\begin{split}
&\mathring\Delta Y^{\ell m}_{\alpha} = (1-\ell(\ell + 1))Y^{\ell m}_{\alpha},\\
&\mathring\Delta X^{\ell m}_{\alpha} = (1-\ell(\ell + 1))X^{\ell m}_{\alpha},
\end{split}
\end{align}
with support on $\ell \geq 1$.  For the symmetric traceless two-tensor harmonics, we have
\begin{align}
\begin{split}\label{twotensorLaplace}
&\mathring\Delta Y^{\ell m}_{\alpha\beta} = (4-\ell(\ell + 1))Y^{\ell m}_{\alpha\beta},\\
&\mathring\Delta X^{\ell m}_{\alpha\beta} = (4-\ell(\ell + 1))X^{\ell m}_{\alpha\beta},
\end{split}
\end{align}
with support on $\ell \geq 2$.  For more details, see \eqref{oneformHarmonics} and \eqref{twoformHarmonics} in the appendix.

Using the spherical harmonic decomposition above, we split the linearized metric as
\begin{equation}
\delta g = \delta g^{\ell < 2} + \delta g^{\ell \geq 2},
\end{equation}
according to the following proposition.

\begin{proposition}\label{mode_decomp} Any symmetric two-tensor $\delta g$ on a spherically symmetric spacetime can be decomposed into $\delta g = \delta g^{\ell <2} + \delta g^{\ell \geq 2}$, in which the components of \[ \delta g^{\ell \geq 2}=h_{AB} dx^A dx^B+2h_{A\alpha} dx^A dx^\alpha+h_{\alpha\beta} dx^\alpha dx^\beta\] are characterized by the vanishing of the integrals
\begin{align*}
\int_{S^2} h_{AB} Y^{\ell m} &= 0, \\
\int_{S^2}  (\mathring{\nabla}^ \alpha h_{A\alpha}) Y^{\ell m} &= 0,\\
\int_{S^2}  (\epsilon^{\alpha\beta} \mathring{\nabla}_\alpha h_{A\beta}) Y^{\ell m} &= 0, \\
\int_{S^2} \mathring{\sigma}^{\alpha\beta} h_{\alpha\beta} Y^{\ell m} &= 0,
\end{align*}
with respect to the scalar spherical harmonics $Y^{\ell m}$ with $\ell < 2$.
\end{proposition}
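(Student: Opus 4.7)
The plan is to construct $\delta g^{\ell \geq 2}$ explicitly using the Hodge decomposition of Proposition \ref{Hodge} together with scalar spherical harmonic expansions, and then set $\delta g^{\ell < 2} := \delta g - \delta g^{\ell \geq 2}$. By Proposition \ref{Hodge}, write
\[ h_{A\alpha} = \mathring{\nabla}_\alpha H_A + \epsilon_\alpha^\beta \mathring{\nabla}_\beta \underline{H}_A, \]
and express $\hat{h}_{\alpha\beta}$ via potentials $H_2, \underline{H}_2$ through \eqref{tracelessDecomposition}. At each fixed quotient point, each of the scalars $h_{AB}, H, H_A, \underline{H}_A, H_2, \underline{H}_2$ defines an $L^2$ function on $S^2$, admitting an expansion in the complete orthonormal basis $\{Y^{\ell m}\}$. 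For each such scalar $f$, let $f^{\ell \geq 2}$ denote the truncation of the expansion to $\ell \geq 2$, and construct $\delta g^{\ell \geq 2}$ by substituting these projections into \eqref{h_decomposition}, namely via $h_{AB}^{\ell \geq 2}$ and $H^{\ell \geq 2}$, together with the Hodge reconstructions of $h_{A\alpha}$ and $\hat{h}_{\alpha\beta}$ from $H_A^{\ell \geq 2}, \underline{H}_A^{\ell \geq 2}, H_2^{\ell \geq 2}, \underline{H}_2^{\ell \geq 2}$.

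Next verify the four integral conditions for $\delta g^{\ell \geq 2}$. The condition on $h_{AB}$ and the condition on $\mathring{\sigma}^{\alpha\beta}h_{\alpha\beta} = 2H$ are immediate consequences of the orthonormality of $\{Y^{\ell m}\}$. For the remaining two, a direct computation using \eqref{crossDecomposition}, together with $\epsilon^{\alpha\beta}\mathring{\nabla}_\alpha \mathring{\nabla}_\beta f = 0$ for any scalar $f$ and the two-dimensional identity $\epsilon^{\alpha\beta}\epsilon_\beta^{\;\gamma} = -\mathring{\sigma}^{\alpha\gamma}$, yields
\[ \mathring{\nabla}^\alpha h_{A\alpha} = \mathring{\Delta} H_A^{\ell \geq 2}, \qquad \epsilon^{\alpha\beta}\mathring{\nabla}_\alpha h_{A\beta} = -\mathring{\Delta}\underline{H}_A^{\ell \geq 2}. \]
Integration by parts together with the eigenvalue relation $\mathring{\Delta} Y^{\ell m} = -\ell(\ell+1) Y^{\ell m}$ then gives
\[ \int_{S^2} (\mathring{\Delta} H_A^{\ell \geq 2}) Y^{\ell m} = -\ell(\ell+1) \int_{S^2} H_A^{\ell \geq 2} Y^{\ell m} = 0 \quad \text{for } \ell < 2, \]
and the analogous identity handles the condition involving $\underline{H}_A^{\ell \geq 2}$.

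The main point to address is the non-uniqueness of the Hodge potentials. Adding an $\ell = 0$ mode (a constant on $S^2$) to $H_A$ or $\underline{H}_A$ does not alter $h_{A\alpha}$, and adding a combination of $\ell < 2$ modes to $H_2$ or $\underline{H}_2$ does not alter $\hat{h}_{\alpha\beta}$, since the tensor harmonics $Y_{\alpha\beta}^{\ell m}$ and $X_{\alpha\beta}^{\ell m}$ vanish identically for $\ell < 2$ (a fact reflected in the support conditions accompanying \eqref{twotensorLaplace}). Because this ambiguity lies entirely within the range of modes being discarded, the construction of $\delta g^{\ell \geq 2}$ is independent of the particular choice of potentials. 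This is the only subtle point; the remainder of the argument consists of bookkeeping the scalar spherical harmonic projections and applying the eigenvalue relation above.
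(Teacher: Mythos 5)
Your proof is correct and follows the same route the paper intends: the paper states Proposition \ref{mode_decomp} without an explicit proof, taking it to follow from the Hodge decomposition of Proposition \ref{Hodge} together with truncation of the scalar potentials in the $Y^{\ell m}$ basis, which is precisely your construction. Your identities $\mathring{\nabla}^{\alpha}h_{A\alpha}=\mathring{\Delta}H_A$ and $\epsilon^{\alpha\beta}\mathring{\nabla}_{\alpha}h_{A\beta}=-\mathring{\Delta}\underline{H}_A$ check out, and your observation that the ambiguity in the potentials ($\ell=0$ constants in $H_A,\underline{H}_A$; $\ell<2$ modes in $H_2,\underline{H}_2$, since $Y^{\ell m}_{\alpha\beta}$ and $X^{\ell m}_{\alpha\beta}$ vanish for $\ell<2$) lies entirely in the discarded modes is exactly the point needed to make the decomposition well-defined.
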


\section{Linearized Kerr Solutions and Pure Gauge Solutions}

\subsection{Linearized Kerr Solutions}

Considering the Boyer-Lindquist coordinates as an extension of the standard Schwarzschild coordinates, we write the Kerr metric in a form suggestive of \eqref{h_decomposition}:
\begin{align}
\begin{split}
g_{M,a} &= -\left(1-\frac{2M}{r}\right)dt^2 + \left(1-\frac{2M}{r}\right)^{-1}dr^2 + r^2\mathring\sigma_{\alpha\beta}dx^{\alpha}dx^{\beta}\\
&- \frac{4Ma}{r}\sin^2\theta d\phi dt + O(a^2).
\end{split}
\end{align}

We treat separately the linearized change in mass and change in angular velocity below.

\subsubsection{Linearized Change in Mass}
In the expression above, linearized mass solutions have the form
\begin{equation}
h_{ab}dx^adx^b = \frac{\delta M}{r} dt^2+\frac{\delta M r}{(r-2M)^2} dr^2,
\end{equation}
giving infinitesimal change in mass within the Schwarzschild family.  We can verify directly that the symmetric two-tensor $h_{ab}$ satisfies the linearized vacuum Einstein equations \eqref{linearized_Einstein}.  Note that linearized Schwarzschild solutions are closed solutions, supported at the lowest harmonic $\ell = 0$.

\subsubsection{Linearized Change in Angular Velocity}
Infinitesimal change in angular velocity appears in the linearized Kerr solution as
\begin{align}
\begin{split}
h_{ab}dx^a dx^b &= \frac{\delta a}{r}\epsilon_{\alpha}^{\beta}Y_{\beta}^{1m}dx^{\alpha}dt\\
&=\frac{\delta a}{r}X^{1m}_{\alpha}dx^{\alpha}dt,
\end{split}
\end{align}
for $m = -1, 0, 1$.  Again, direct computation shows that $h_{ab}$ is a solution of the linearized vacuum Einstein equations \eqref{linearized_Einstein}.  Such linearized Kerr solutions are co-closed solutions, supported at the harmonic $\ell = 1$.

The above linear perturbations of the Schwarzschild metric form the four dimensional family of linearized Kerr solutions. 
\begin{definition}\label{Kerr_sol} The linearized Kerr solutions $K, K_{-1}, K_0, K_1$ of the linearized vacuum Einstein equations on Schwarzschild are given by:
\begin{align*}
&K=\frac{1}{r} dt^2+\frac{r}{(r-2M)^2} dr^2, \\
&K_m=\frac{1}{r} \epsilon_\alpha^\beta \mathring{\nabla}_\beta Y^{1m} dx^\alpha dt,
\end{align*}
with $m = -1, 0 ,1.$
\end{definition}

\subsection{Pure Gauge Solutions}
The following calculation lemma characterizes pure gauge solutions:
\begin{lemma} \label{lemmaG}
Suppose $G$ is a co-vector on Schwarzschild, with
\begin{equation}\label{G}
G =G_A dx^A+(\mathring{\nabla}_\alpha G_2) dx^\alpha + (\epsilon_\alpha^\beta\mathring{\nabla}_\beta \underline{G}_2) dx^\alpha.
\end{equation}

Then the deformation tensor of $G$ decomposes into closed and co-closed parts, 
$\pi_G=\pi_1+\pi_2$, of the form
\begin{equation}\label{pi_1}
\begin{split}
\pi_1&= (\tilde{\nabla}_A G_B+\tilde{\nabla}_B G_A) dx^A dx^B\\
&+ \mathring{\nabla}_\alpha [\tilde{\nabla}_A G_2-2 (r^{-1} \partial_A r) G_2+G_A] dx^A dx^\alpha\\
&+ 2[\mathring{\nabla}_\alpha\mathring{\nabla}_\beta G_2+r(\partial^A r) G_A \mathring{\sigma}_{\alpha\beta}]dx^\alpha dx^\beta, 
\end{split}
\end{equation}
\begin{equation}\label{pi_2}
\begin{split}\pi_2&= \epsilon_\alpha^\beta \mathring{\nabla}_\beta [\tilde{\nabla}_A \underline{G}_2 -2 (r^{-1} \partial_A r) \underline{G}_2] dx^A dx^\alpha\\
&+ [  \mathring{\nabla}_{\alpha}\epsilon_\beta^\gamma \mathring{\nabla}_\gamma \underline {G}_2+  \mathring{\nabla}_{\beta}\epsilon_\alpha^\gamma \mathring{\nabla}_\gamma \underline{G}_2]dx^\alpha dx^\beta.
\end{split}
\end{equation}
\end{lemma}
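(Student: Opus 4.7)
The plan is to compute $\pi_G = \nabla_a G_b + \nabla_b G_a$ component-by-component in the $(A,B)$, $(A,\alpha)$, $(\alpha,\beta)$ blocks, substituting the four Christoffel symbols of the spherically symmetric spacetime listed in Section~3, and then to split the resulting expressions into closed and co-closed pieces via the Hodge decomposition of $G_\alpha := \mathring{\nabla}_\alpha G_2 + \epsilon_\alpha^\beta \mathring{\nabla}_\beta \underline{G}_2$. Since $\epsilon_\alpha^\beta$ and $\mathring{\sigma}_{\alpha\beta}$ are covariantly constant on $S^2$ and independent of $x^A$, these operators commute freely with $\mathring{\nabla}$ and with $\partial_A = \tilde{\nabla}_A$ on scalars, so the closed/co-closed split is preserved throughout the calculation.

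For the quotient block, both $\Gamma^\gamma_{AB}=0$ and $\Gamma^C_{AB}=\tilde{\Gamma}^C_{AB}$, so $(\pi_G)_{AB}$ collapses to $\tilde{\nabla}_A G_B + \tilde{\nabla}_B G_A$, producing the first line of \eqref{pi_1} and contributing nothing to $\pi_2$. For the mixed block, using $\Gamma^B_{A\alpha}=0$ and $\Gamma^\gamma_{A\alpha}=r^{-1}\partial_A r \,\delta^\gamma_\alpha$, I would find
\begin{equation*}
(\pi_G)_{A\alpha} = \partial_A G_\alpha + \mathring{\nabla}_\alpha G_A - 2r^{-1}\partial_A r \cdot G_\alpha.
\end{equation*}
Substituting the Hodge form of $G_\alpha$ and using commutativity of $\partial_A$ with $\mathring{\nabla}$ and $\epsilon$ then yields exactly the $\mathring{\nabla}_\alpha[\,\cdots\,]$ piece in \eqref{pi_1} and the $\epsilon_\alpha^\beta \mathring{\nabla}_\beta[\,\cdots\,]$ piece in \eqref{pi_2}. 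For the sphere block, using $\Gamma^B_{\alpha\beta} = -r\partial^B r \,\mathring{\sigma}_{\alpha\beta}$ and $\Gamma^\gamma_{\alpha\beta} = \mathring{\Gamma}^\gamma_{\alpha\beta}$ gives
\begin{equation*}
(\pi_G)_{\alpha\beta} = \mathring{\nabla}_\alpha G_\beta + \mathring{\nabla}_\beta G_\alpha + 2r\partial^B r\, G_B\, \mathring{\sigma}_{\alpha\beta},
\end{equation*}
and again substituting the Hodge decomposition of $G_\alpha$ separates this into the two expressions in the last lines of \eqref{pi_1} and \eqref{pi_2} respectively.

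There is no real obstacle beyond careful index bookkeeping; the only place one must be slightly attentive is verifying that $\mathring{\nabla}_\alpha(\epsilon_\beta^\gamma \mathring{\nabla}_\gamma \underline{G}_2)$ genuinely equals $\epsilon_\beta^\gamma \mathring{\nabla}_\alpha \mathring{\nabla}_\gamma \underline{G}_2$ (which uses covariant-constancy of $\epsilon$ with respect to $\mathring{\nabla}$) and that the trace contribution $2r\partial^B r\, G_B\, \mathring{\sigma}_{\alpha\beta}$ arising from $\Gamma^B_{\alpha\beta}$ is of pure-trace (hence closed) type, so that it lands entirely in $\pi_1$ rather than $\pi_2$. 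Once these identifications are made, the advertised decomposition $\pi_G = \pi_1 + \pi_2$ with $\pi_1$ of the form \eqref{h_1} and $\pi_2$ of the form \eqref{h_2} follows by inspection, and the lemma is proved.
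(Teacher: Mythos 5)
Your proposal is correct and is exactly the calculation the paper intends: the lemma is presented as a ``calculation lemma'' with no written proof, and the block-by-block evaluation of $(\pi_G)_{ab}=\nabla_a G_b+\nabla_b G_a$ using the warped-product Christoffel symbols of Section~3, followed by substitution of the Hodge form of $G_\alpha$ (with $\mathring\sigma$, $\epsilon$ covariantly constant and $r^{-1}\partial_A r$ constant on each sphere, so all operators commute as you say), is precisely the omitted computation. The only cosmetic step you leave implicit is splitting $2\mathring\nabla_\alpha\mathring\nabla_\beta G_2$ into its trace and traceless parts when matching $\pi_1$ against the canonical closed form \eqref{h_1}, which is immediate.
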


To ensure decay, we must exhaust all gauge freedom by fixing a gauge for both the closed and co-closed portions above.  We treat such gauge fixing, as well as an identification of the linearized Kerr parameters, for $\delta g^{\ell < 2}$ in the next section, deferring gauge fixing of $\delta g^{\ell \geq 2}$ until later in the paper.

\section{Analysis of the Lower Harmonics and Proof of Theorem 1}\label{lowerModesSection}

In this section, we analyze the lower harmonics represented in $\delta g^{\ell <2}$.  With the addition of a suitable pure gauge solution, we are able to extract the associated linearized Kerr parameters in Definition \ref{Kerr_sol}, encoding linearized change in mass and angular velocity, and prove Theorem 1. 

\subsection{The $\ell = 0$ Case}
\begin{proposition}
Let $\delta g$ be a smooth solution of the linearized vacuum Einstein equations \eqref{linearized_Einstein} on Schwarzschild, supported at $\ell=0$.  Then $\delta g$ is decomposable into a linearized Kerr solution and a pure gauge solution; that is,
\[\delta g=\pi_X+c K\] for a constant $c$,  a smooth co-vector field $X$, and the linearized Kerr solution $K$ in Definition \ref{Kerr_sol}.
\end{proposition}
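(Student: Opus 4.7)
The plan is to carry out a linearized Birkhoff-type reduction in three steps: (i) exploit the $\ell = 0$ support to reduce the tensor content of $\delta g$; (ii) normalize the reduced form by a suitable pure-gauge choice; (iii) integrate the resulting overdetermined ODE system to identify the single surviving Kerr parameter.

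For step (i): the constant $Y^{00}$ is the only scalar harmonic at $\ell = 0$, and by \eqref{twotensorLaplace} and the discussion preceding it, the co-vector and symmetric traceless two-tensor harmonics on $S^{2}$ are supported only on $\ell \geq 1$ and $\ell \geq 2$ respectively. Hence the cross term $h_{A\alpha}$ and the traceless angular part $\hat{h}_{\alpha\beta}$ vanish identically, leaving
\[
\delta g = h_{AB}(t,r)\,dx^{A}\,dx^{B} + H(t,r)\,\mathring{\sigma}_{\alpha\beta}\,dx^{\alpha}\,dx^{\beta},
\]
with all surviving coefficients pulled back from the quotient $\mathcal{Q}$.

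For step (ii): I use a co-vector $X = G_{A}(t,r)\,dx^{A}$ with the angular parts $G_{2}, \underline{G}_{2}$ of Lemma \ref{lemmaG} set to zero, which is consistent with $\ell = 0$ support. By that lemma, $\pi_{X}$ preserves the reduced structure of $\delta g$, contributing $(\tilde{\nabla}_{A}G_{B} + \tilde{\nabla}_{B}G_{A})$ to the $h_{AB}$ block and $2 r\, r^{A}G_{A}$ to the trace $H$. I impose two conditions on $\delta g - \pi_{X}$: first, $H - 2 r\, r^{A} G_{A} = 0$, absorbing the trace component; second, an off-diagonal normalization such as $h_{01} - \tilde{\nabla}_{0}G_{1} - \tilde{\nabla}_{1}G_{0} = 0$, analogous to a Schwarzschild-like gauge on the quotient. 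This is an underdetermined first-order PDE system for the pair $(G_{0}, G_{1})$, locally solvable; smoothness of $X$ across the future event horizon is arranged by working in an ingoing chart such as $(t_{*}, r)$ rather than the degenerate $(t, r)$ chart.

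For step (iii): the gauge-normalized solution has the form $\delta g = h_{00}(t,r)\,dt^{2} + h_{11}(t,r)\,dr^{2}$. Plugging this into the linearized Ricci expressions \eqref{AB} and \eqref{alphabetaTrace} and using the Schwarzschild identities \eqref{miscellaneous}, the $\delta R_{01} = 0$ equation combined with the trace equation forces time independence of both $h_{00}$ and $h_{11}$, the linearized analog of Birkhoff staticity. The remaining radial equations then integrate to a one-parameter family of static solutions, which by inspection matches
\[
c\,K = c\left(\frac{1}{r}\,dt^{2} + \frac{r}{(r - 2M)^{2}}\,dr^{2}\right)
\]
from Definition \ref{Kerr_sol}, possibly after absorbing residual integration constants into an additional pure-gauge piece built from the Killing field $T$.

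The main obstacle is step (ii): producing a co-vector $X$ that is smooth on the entire exterior up to and including the event horizon, since the factor $r^{A}r_{A} = 1 - 2M/r$ degenerates at $r = 2M$ in standard Schwarzschild coordinates. Solving the gauge PDEs in regular ingoing coordinates resolves this. Once the gauge is fixed, step (iii) is a routine linearized Birkhoff calculation and the identification with $cK$ is explicit.
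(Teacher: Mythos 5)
Your steps (i) and (ii) reproduce the paper's argument: the paper also reduces to $\delta g = h_{AB}dx^Adx^B + H\mathring\sigma_{\alpha\beta}dx^\alpha dx^\beta$ and kills $H$ and $h_{01}$ with an explicit quotient co-vector $X' = G_A dx^A$ (its $G_1 = \tfrac{1}{2(r-2M)}H$ is exactly your trace condition, and its $G_0$ is an explicit radial integral solving your off-diagonal condition), arriving at $\delta g - \pi_{X'} = h^*_{00}dt^2 + h^*_{11}dr^2$. The genuine gap is in your step (iii): the $\ell = 0$ equations do \emph{not} force time independence of $h_{00}$. The independent projected equations are $\partial_t h^*_{11} = 0$ (from $\delta R_{01}$), $\partial_r\bigl(\tfrac{r}{r-2M}h^*_{00}\bigr) + \tfrac{1}{r}h^*_{11} = 0$ (from $\delta G_{11}$), and the angular trace equation; none of these contains $\partial_t h^*_{00}$. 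Hence the general solution is $h^*_{11} = cr/(r-2M)^2$ together with
\begin{equation*}
h^*_{00} = \frac{c}{r} + \Bigl(1-\frac{2M}{r}\Bigr)c(t),
\end{equation*}
where $c(t)$ is an \emph{arbitrary function of time}, not an integration constant. There is no ``linearized Birkhoff staticity'' at this stage; staticity holds only modulo gauge.

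Your proposed remedy --- absorbing ``residual integration constants into an additional pure-gauge piece built from the Killing field $T$'' --- cannot close this gap, for two reasons. First, the residue is a function of $t$, not a constant. Second, any constant multiple of the Killing field $T$ has identically vanishing deformation tensor, so it absorbs nothing. What is actually needed is the \emph{non-Killing} residual gauge freedom \eqref{gaugefreedom0}: co-vectors $\bar X = \bigl(1-\tfrac{2M}{r}\bigr)\bar c(t)\,dt$ preserve the normalization of step (ii) and satisfy $\pi_{\bar X} = 2\bigl(1-\tfrac{2M}{r}\bigr)\bar c'(t)\,dt^2$, so choosing $\bar c'(t) = \tfrac{1}{2}c(t)$ and setting $X = X' + \bar X$ removes the time-dependent term and yields $\delta g - \pi_X = cK$; the Killing field $T$ enters only afterwards, as the statement that the decomposition is unique modulo the surviving freedom $\bar c(t)\equiv \mathrm{const}$. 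Without this absorption step your claimed ``one-parameter family of static solutions'' is false for generic data and the identification with $cK$ fails. Your remaining remarks --- the $\ell = 0$ vanishing of $h_{A\alpha}$ and $\hat h_{\alpha\beta}$, and the concern about horizon regularity of the gauge vector (which the paper handles implicitly via its explicit integral formulas, and which your ingoing-chart suggestion also addresses) --- are sound.
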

\begin{proof}
With the hypotheses above, $\delta g$ has the form \[\delta g=h_{AB} dx^A dx^B+H\mathring\sigma_{\alpha\beta} dx^\alpha dx^\beta,\] where $h_{AB}$ and $H$ are supported at $\ell=0$.  We eliminate the trace $H$ and diagonalize $h_{AB}$ by the co-vector field $X'=G_Adx^A$, with 
\begin{align*}
&G_0=\left(1-\frac{2M}{r}\right) \int_{3M}^r \frac{1}{1-\frac{2M}{s}}[ h_{01}(t, s)-\frac{1}{2(s-2M)}\partial_t H (t, s)]ds,\\
&G_1=\frac{1}{2(r-2M)} H.
\end{align*}
Direct calculation shows that 
\begin{equation} \label{reduction_0} 
\delta g-\pi_{X'} =h^*_{00} dt^2+h^*_{11} dr^2.
\end{equation}
Note that there is residual gauge freedom, as co-vectors $\bar{X}$ of the form 
\begin{align}\label{gaugefreedom0}
\bar{X}=\left(1-\frac{2M}{r}\right)\bar{c}(t)dt,
\end{align}  
with $\bar{c}(t)$ an arbitrary function of time, preserve this gauge reduction.

Projecting the linearized Einstein equations to the harmonic $\ell = 0$ yields
\begin{align}
\begin{split}
&\delta R_{01}: \partial_{t}h^*_{11} = 0,\\
&\delta G_{11}: \partial_{r}\left(\frac{r}{r-2M}h^*_{00}\right) + \frac{1}{r}h^*_{11} = 0,\\
&\delta R_{\alpha\beta}: \partial_{r}\left(\frac{r}{r-2M}h^*_{00}\right) + \partial_{r}\left(\frac{r-2M}{r}h^*_{11}\right) + \frac{2}{r}h^*_{11} = 0.
\end{split}
\end{align}
The system has the general solution
\begin{align}
\begin{split}
&h^*_{11} = \frac{cr}{(r-2M)^2},\\
&h^*_{00}=\frac{c}{r}+\left(1-\frac{2M}{r}\right) c(t),
\end{split}
\end{align}
for any constant $c$ and an arbitrary function $c(t)$ of $t$. That is, \[\delta g-\pi_{X'}=cK +\left(1-\frac{2M}{r}\right) c(t) dt^2.\]

For a co-vector $\bar{X}$ of the form $\left(1-\frac{2M}{r}\right) \bar{c}(t)dt$, the deformation tensor $\pi_{\bar{X}}$ is $2(1-\frac{2M}{r}) \bar{c}'(t) dt^2$. Choosing $\bar{c}(t)$ such that $\bar{c}'(t)=\frac{c(t)}{2}$, and letting $X = X' + \bar{X}$, we conclude that 
\[\delta g-\pi_{X}=c K.\]
Note that we still have gauge freedom, in the form $\bar{c}(t) \equiv \bar{c}$.  Such transformations correspond to scalar multiples of the static Killing field $T$, with vanishing deformation tensor.  Modulo these translations, the decomposition of $\delta g$ above is unique.
\end{proof} 

\subsection{The $\ell = 1$, Closed Case}
\begin{proposition}
Let $\delta g$ be a smooth, closed solution of the linearized vacuum Einstein equations \eqref{linearized_Einstein} on Schwarzschild, supported at $\ell=1$.  Then $\delta g$ is a pure gauge solution; that is,
\[\delta g=\pi_X\] for a smooth co-vector field $X$.
\end{proposition}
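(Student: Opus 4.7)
The plan is to parallel the $\ell=0$ proof: use a closed pure gauge to reduce $\delta g$ to a form supported only on the quotient block, then exploit the remaining linearized vacuum Einstein equations to show the reduction is itself a pure gauge at the quotient level.

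First, I would exploit the key algebraic simplification at $\ell=1$: the round-sphere identity $\mathring{\nabla}_\alpha \mathring{\nabla}_\beta Y^{1m} = -Y^{1m} \mathring{\sigma}_{\alpha\beta}$ (equivalent to the vanishing $Y^{1m}_{\alpha\beta}\equiv 0$ noted after \eqref{twotensorLaplace}) forces the $H_2$ contribution in the closed decomposition \eqref{h_1} to vanish identically. Hence a closed $\ell=1$ solution has the reduced form
\begin{equation*}
\delta g = h_{AB}\, dx^A dx^B + 2(\mathring{\nabla}_\alpha H_A)\, dx^A dx^\alpha + H\, \mathring{\sigma}_{\alpha\beta}\, dx^\alpha dx^\beta,
\end{equation*}
with $h_{AB}(t,r)Y^{1m}$, $H_A(t,r)Y^{1m}$, $H(t,r)Y^{1m}$ all at $\ell=1$.

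Next, I would eliminate the cross and pure-trace angular pieces using a closed pure gauge $X' = G_A\, dx^A + (\mathring{\nabla}_\alpha G_2)\, dx^\alpha$ as in Lemma~\ref{lemmaG}, with $G_A, G_2$ supported at $\ell=1$. Matching the cross term fixes $G_A$ algebraically,
\begin{equation*}
G_A = H_A - \tilde{\nabla}_A G_2 + 2(r^{-1}\partial_A r)\, G_2,
\end{equation*}
while matching the trace piece (using the same Hessian identity applied now to $G_2$) produces a first-order linear ODE in $r$ at each fixed $t$ for $G_2$, with smooth coefficients on the exterior; the ODE is solved by direct radial integration. After this reduction, $\delta g - \pi_{X'} = h^*_{AB}(t,r)\, Y^{1m}\, dx^A dx^B$, with the cross and angular-trace structure entirely eliminated.

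Finally, I would apply the linearized vacuum Einstein equations to $h^*_{AB}$. The traceless angular equation \eqref{alphabetaTraceless} reduces to a multiple of $\mathring{\nabla}_\alpha \mathring{\nabla}_\beta Y^{1m} - \frac{1}{2}\mathring{\sigma}_{\alpha\beta}\mathring{\Delta} Y^{1m}$ and is automatically satisfied at $\ell=1$. The mixed equation $\delta R_{A\alpha}=0$ from \eqref{Aalpha}, the trace $\delta R_{\alpha\beta}\mathring{\sigma}^{\alpha\beta}=0$ from \eqref{alphabetaTrace}, and the quotient equations $\delta R_{AB}=0$ from \eqref{AB} all project cleanly onto scalar or vector $\ell=1$ harmonics, yielding a coupled ODE system on $h^*_{AB}(t,r)$. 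An integration argument directly parallel to the $\ell=0$ case, now accommodating the source $\mathring{\Delta}Y^{1m} = -2Y^{1m}$, shows that the general solution must take the form $h^*_{AB} = \tilde{\nabla}_A \bar{G}_B + \tilde{\nabla}_B \bar{G}_A$ for some quotient co-vector $\bar{X} = \bar{G}_A(t,r)\, Y^{1m}\, dx^A$. Setting $X = X' + \bar{X}$ then gives $\pi_X = \delta g$. The main obstacle is this last step: organizing the overdetermined ODE system so that the absence of any physical $\ell=1$ closed mode (the Kerr parameters $K_m$ lie entirely in the co-closed sector) becomes manifest, and explicitly exhibiting $\bar{G}_A$ without leaving a residual non-gauge degree of freedom.
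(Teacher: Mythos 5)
Your reduction step is fine, and it genuinely differs from the paper's route: you use $\mathring{\nabla}_\alpha\mathring{\nabla}_\beta Y^{1m}=-Y^{1m}\mathring{\sigma}_{\alpha\beta}$ to kill $\hat{h}_{\alpha\beta}$ and then gauge away both $H_A$ and $H$ by a radial first-order ODE for $G_2$, whereas the paper imposes the Chandrasekhar gauge $h_{01}=0=H_A$ (solving the hyperbolic equation \eqref{hyper_eq} for $G_2$) together with the initial constraint $\mathcal{L}(h)=0$, deliberately keeping the trace $H^*$. The genuine gap is in your final step, which is exactly where the content of the proposition lives. You claim the ODE system forces $h^*_{AB}=\tilde{\nabla}_A\bar{G}_B+\tilde{\nabla}_B\bar{G}_A$ for a quotient co-vector $\bar{X}=\bar{G}_A(t,r)Y^{1m}dx^A$, and that $X=X'+\bar{X}$ then gives $\pi_X=\delta g$. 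But at $\ell=1$ a quotient co-vector never has a purely quotient-block deformation tensor: by Lemma \ref{lemmaG} with $G_2=0$, $\pi_{\bar{X}}$ contains the cross term $\mathring{\nabla}_\alpha\bar{G}_A\,dx^Adx^\alpha$ and the angular trace term $2(r-2M)\bar{G}_1\,\mathring{\sigma}_{\alpha\beta}dx^\alpha dx^\beta$, which vanish at $\ell=1$ only if $\bar{G}_A=0$. Adding $\pi_{\bar{X}}$ therefore reintroduces precisely the components you just gauged away, and $\delta g-\pi_{X'+\bar{X}}\neq 0$ in general; exhibiting $h^*_{AB}$ as a symmetrized quotient gradient does not prove the proposition.

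What you must show instead is that $h^*_{AB}$ lies in the \emph{residual} gauge family of your own gauge: co-vectors $\bar{X}=\bar{G}_A\,dx^A+(\mathring{\nabla}_\alpha\bar{G}_2)\,dx^\alpha$ preserving $H_A=H=0$, which forces $\bar{G}_A=-\tilde{\nabla}_A\bar{G}_2+2r^{-1}r_A\bar{G}_2$ together with $(r-2M)\,\partial_r\bar{G}_2=\frac{r-4M}{r}\bar{G}_2$, i.e. $\bar{G}_2=c(t,\theta,\phi)\,r^2/(r-2M)$ --- a one-function family with completely rigid $r$-dependence, far smaller than the arbitrary $\bar{G}_A$ you allow. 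This is structurally the same matching the paper performs in its gauge: there the residual family \eqref{residual} carries two free functions, and the heart of the proof is integrating the projected equations \eqref{first_set} and verifying that the general solution for $H^*$ and $h^*_{00}$ coincides with the residual deformation tensors \eqref{h_X} and \eqref{pi_00}. Your appeal to "an integration argument directly parallel to the $\ell=0$ case" also cuts the wrong way: at $\ell=0$ that integration produced a \emph{non-gauge} mode $cK$, and the absence of any analogous mode in the closed $\ell=1$ sector is exactly what the explicit integration must establish, not something that can be assumed. As written, the proposal both defers and mis-identifies this crucial step, so it does not yet constitute a proof.
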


\begin{proof}
    We reduce the solution using the Chandrasekhar gauge outlined in the Section \ref{Chandra_gauge}.  Briefly, we are able to choose a co-vector $X'$ of the form $G_A dx^A+(\mathring{\nabla}_\alpha G_2) dx^\alpha $ such that
\[\delta g-\pi_{X'}=     h^{*}_{00} dt^2+h^{*}_{11} dr^2+H^{*}\mathring\sigma_{\alpha\beta} dx^\alpha dx^\beta\] along with the initial value condition \begin{equation}\label{initial} H^{*}+(r^2-2Mr)h^{*}_{11}=0\end{equation} on the time-slice $\{ t = 0\}$; see Lemma \ref{ChandraGauge} for details.  In particular, $G_2$ is a solution of  the inhomogeneous equations \eqref{hyper_eq} and \eqref{initial_eq}, and $G_0$ and $G_1$ are given by \eqref{G_A}. The residual gauge, or the solution of the corresponding homogeneous equations, is of the form \eqref{residual} in which
\begin{equation}\label{closedgaugefreedom1}
\bar{X}=-r^2\tilde{\nabla}_A(r^{-2} G )dx^A+(\mathring{\nabla}_\alpha  G)  dx^\alpha,\\
\end{equation} with \[G =r^{1/2}\left(r-2M\right)^{1/2}     \left[ \bar{c}_1(\theta,\phi)   p(r)+\bar{c}_2(t,\theta,\phi)\right],\]
where 
\[p(r)=\int_{3M}^r s^{1/2}\left(s-2M\right)^{-3/2} ds\]
and $\bar{c}_1(\theta,\phi)$ and $\bar{c}_2(t,\theta,\phi)$ are supported on $\ell=1$. We compute
\[\pi_{\bar{X}}= \pi_{00}  dt^2-\frac{1}{r(r-2M)}  H^{\bar{X}}  dr^2+H^{\bar{X}} \mathring\sigma_{\alpha\beta} dx^\alpha dx^\beta,\]where \begin{equation}\label{h_X} H^{\bar{X}}=-6M r^{-1/2}(r-2M)^{1/2}[ \bar{c}_1 p(r)+\bar{c}_2]-2\bar{c}_1r.\end{equation}
and 
\begin{align}
\begin{split}\label{pi_00}
\pi_{00}&=-2  r^{1/2} (r-2M)^{1/2} (\partial_t^2 \bar{c}_2)\\
&+2 M(-1+3Mr^{-1}) r^{-5/2}(r-2M)^{1/2}[ \bar{c}_1 p(r)+\bar{c}_2]+2 Mr^{-2} \bar{c}_1.
\end{split}
\end{align}

 Under this gauge condition, the $\delta R_{0\alpha}$ component of the linearized Ricci tensor
gives 
\[\partial_t (H^{*}+(r^2-2Mr)h^{*}_{11})=0.\]

On the other hand,  the $\delta R_{01}$, $\delta G_{11}$, $\delta R_{1\alpha}$ and $\delta R_{00}$  components give the following two equations for $H^*$:
\begin{align}\label{first_set}
\begin{split}
&\partial_t \left[ \partial_{r}(r^{-2}H^{*}) + \frac{2r-5M}{r^3(r-2M)}H^{*} \right]=0,\\
&\partial_r\left((r^2-2Mr)  \left[ \partial_r( r^{-2}H^{*}) + \frac{2r-5M}{r^3(r-2M)}H^{*}\right]\right)=0.
\end{split}
\end{align}

See Proposition \ref{gaugefreedom} for details. The system implies 
\[(r^2-2Mr)\left[ \partial_r( r^{-2}H^{*}) + \frac{2r-5M}{r^3(r-2M)}H^{*} \right]=c_1(\theta, \phi)\]
for $c_1$ independent of $t, r$ and supported at $\ell=1$, with general solution $H^{*}$ of the form \eqref{h_X}. 
The $\delta R_{1\alpha}$ and $\delta G_{11}$ components give the following equations:

\begin{equation}
\begin{split}
&\partial_r(\frac{r}{r-2M} h^*_{00})+ \frac{M}{r^2} h^*_{11} +\frac{M}{(r-2M)^2} h^*_{00}\\
& +r^{-1}(-\frac{r}{r-2M} h^*_{00}+\frac{r-2M}{r} h^*_{11})   - \partial_r (r^{-2}H^*)=0\\
&-\frac{2}{r^2}h^*_{11}+\frac{2}{(r-2M)^2}h^*_{00}-\frac{2}{r}\partial_r\left( \frac{r}{r-2M}h^*_{00} \right)\\
&+\frac{2(r-M)}{r^2-2Mr}\partial_r(r^{-2}H^*)-\frac{2}{(r-2M)^2}\partial_t^2 H^*=0
\end{split}
\end{equation}


Replacing $h^*_{11}$ with $-(r^2-2Mr)^{-1} H^*$ and solving $h^*_{00}$ in terms of $H^*$ shows that $h^*_{00}$ is of the form \eqref{pi_00}.  We are thus able to account for $\delta g-\pi_{X'}$ by exercising our residual gauge freedom.  That is, with appropriate choices of $\bar{c}_1$ and $\bar{c}_2$ in $\bar{X}$, we define $X = X' + \bar{X}$ such that $\delta g = \pi_{X}$.
\end{proof}

\subsection{The $\ell = 1$, Co-closed Case}
\begin{proposition}
Let $\delta g$ be a smooth, co-closed solution of the linearized vacuum Einstein equations \eqref{linearized_Einstein} on Schwarzschild, supported at $\ell=1$.  Then $\delta g$ is decomposable into a linearized Kerr solution and a pure gauge solution; that is,
\[\delta g=\pi_X+\sum_{m = -1,0,1}d_{m} K_{m}\] for constants $d_{m}$,  a smooth co-vector field $X$, and the linearized Kerr solutions $K_{m}$ in Definition \ref{Kerr_sol}.
\end{proposition}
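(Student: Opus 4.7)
The plan is to Hodge-decompose the co-closed $\ell=1$ solution into scalar potentials, use the pure-gauge freedom from Lemma \ref{lemmaG} to set one of them to zero, solve the resulting linearized Einstein equations, and identify the surviving one-parameter family per azimuthal number with the linearized Kerr solutions $K_m$. By Proposition \ref{Hodge} the co-closed portion is parametrized by $\underline{H}_A$ (from the $dx^A dx^\alpha$ part) and $\underline{H}_2$ (from the traceless $dx^\alpha dx^\beta$ part). At $\ell=1$ the tensor harmonic $X^{1m}_{\alpha\beta}$ vanishes, because the identity $\mathring\Delta X^{\ell m}_{\alpha\beta} = (4-\ell(\ell+1))X^{\ell m}_{\alpha\beta}$ from \eqref{twotensorLaplace} would require a positive eigenvalue of a Laplace-type operator on $S^2$. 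Hence $\delta g = 2\epsilon_\alpha^\beta(\mathring\nabla_\beta \underline{H}_A)\,dx^\alpha dx^A$ with $\underline{H}_A = \sum_m h_A^m(t,r) Y^{1m}$. Applying Lemma \ref{lemmaG} with the pure-gauge vector $X' = \epsilon_\alpha^\beta(\mathring\nabla_\beta \underline{G}_2)\,dx^\alpha$ and $\underline{G}_2 = \bar G^m(t,r) Y^{1m}$ shifts $\underline{H}_A$ by a multiple of $\tilde\nabla_A \underline{G}_2 - 2r^{-1}(\partial_A r)\underline{G}_2$; again by $X^{1m}_{\alpha\beta}=0$ no $dx^\alpha dx^\beta$ contribution arises. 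The $A=1$ component of this shift is a first-order linear ODE in $r$ for $\bar G^m$, solvable with integrating factor $r^{-2}$, which enforces the gauge $\underline{H}_1 \equiv 0$ uniquely up to the kernel $\bar G^m = c^m(t) r^2$.

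In this gauge only $h_{0\alpha} = \epsilon_\alpha^\beta \mathring\nabla_\beta \underline{H}_0$ survives, with $\underline{H}_0 = h_0^m(t,r) Y^{1m}$. Substituting into \eqref{AB}--\eqref{alphabeta}: the identity $\epsilon^{\alpha\beta}\mathring\nabla_\alpha\mathring\nabla_\beta \equiv 0$ on scalars makes $\mathring\nabla^\alpha h_{A\alpha}\equiv 0$, so $\delta R_{AB}\equiv 0$; and at $\ell=1$ the vanishing of $Y^{1m}_{\alpha\beta}$ and $X^{1m}_{\alpha\beta}$ forces the pointwise identity $\mathring\nabla_\alpha h_{A\beta} + \mathring\nabla_\beta h_{A\alpha} \equiv 0$ which, together with the commutation formulae \eqref{commFormulae}, collapses the $\delta R_{\alpha\beta}$ expressions to zero. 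The only nontrivial conditions are $\delta R_{0\alpha} = \delta R_{1\alpha} = 0$; after factoring out the common angular factor $X^{1m}_\alpha = \epsilon_\alpha^\beta\mathring\nabla_\beta Y^{1m}$, these reduce to a rigid pair of equations in $(t,r)$ for the single scalar $h_0^m(t,r)$. Direct integration yields $h_0^m(t,r) = d_m/r$ modulo the pure-gauge remainder generated by the kernel $\bar G^m = c^m(t)r^2$ above, matching the Hodge potential of $K_m$ from Definition \ref{Kerr_sol}. Absorbing the residual into a further pure-gauge vector $\bar X$ and setting $X = X' + \bar X$ then yields $\delta g = \pi_X + \sum_m d_m K_m$.

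The main obstacle is the explicit reduction of $\delta R_{0\alpha} = \delta R_{1\alpha} = 0$ to the rigid system forcing the $1/r$ profile for $h_0^m$, together with a clean accounting of the residual gauge freedom---in particular, verifying that the remaining ambiguity corresponds exactly to scalar multiples of the rotational Killing fields of Schwarzschild (whose deformation tensors vanish), so that the decomposition is unique in the sense parallel to the $\ell=0$ argument and Theorem 1.
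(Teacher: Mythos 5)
Your proposal is correct and takes essentially the same route as the paper: gauge away $\underline{H}_1$ via the radial ODE $\partial_r G - \tfrac{2}{r}G = \underline{H}_1$ (with kernel $c_m(t)r^2$ as the residual freedom), observe that the $\ell=1$ tensor harmonics $X^{1m}_{\alpha\beta}$ vanish so that only $\delta R_{A\alpha}=0$ carries content, integrate to obtain $\underline{H}^{*m}_0 = d_m/r + c_m(t)r^2$, absorb the $c_m(t)r^2$ term by the residual gauge, and identify the constant remainder with the rotational Killing fields. The paper implements your ``direct integration'' step as the first integral $\epsilon^{CD}r^4\tilde{\nabla}_D(r^{-2}\underline{H}^{*m}_C)=d_m$ of the equations $\epsilon^{CD}\tilde{\nabla}^B[r^4\tilde{\nabla}_D(r^{-2}\underline{H}^{*m}_C)]=0$, $B=0,1$, but the content is identical.
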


\begin{proof}
Projecting to the harmonic $\ell = 1$, the co-closed portion of $\delta g$ has the form
\[\delta g = 2 (\epsilon_\alpha^\beta \mathring{\nabla}_\beta \underline{H}_A) dx^\alpha dx^A,\] where $\underline{H}_A$ are supported at $\ell=1$.  We reduce $\delta g$ using the co-vector field $X' = \epsilon_{\alpha}^{\beta}\mathring\nabla_{\beta} Gdx^{\alpha}$, with 
\[\partial_r G-\frac{2}{r}  G=\underline{H}_1,\] such that
\[\delta g-\pi_{X'} =2 (\epsilon_\alpha^\beta \mathring{\nabla}_\beta \underline{H}^*_0) dx^\alpha dt.\]

Co-vectors of the form
\begin{align}
\bar{X} =   \epsilon_{\alpha}^{\beta}\mathring\nabla_{\beta} ( r^2 \sum_m \bar{c}_m (t) Y^{1m}) dx^{\alpha},
\end{align} with $\bar{c}_{m}(t)$ an arbitrary function of time, act as residual gauge solutions, preserving the elimination of $\underline{H}_1$ above.

Writing $\underline{H}^{*}_A=\sum_m \underline{H}_A^{*m} (t, r) Y^{1m}$, the linearized vacuum Einstein equations (see Section 6.1) amount to
\[\epsilon^{CD} \tilde\nabla^{B}\left[ r^4 \tilde{\nabla}_D(r^{-2} \underline{H}^{*m}_{C})\right]=0, B=0, 1.\] Therefore, 
\[\epsilon^{CD} r^4 \tilde{\nabla}_D(r^{-2} \underline{H}^{*m}_{C})=d_{m},\] for a constant $d_m$. With $\underline{H}^{*}_1=0$, we deduce
\[\partial_r (r^{-2} \underline{H}^{*m}_0)={d_{m}}r^{-4},\]
with general solution
\[\underline{H}^{*m}_{0} = \frac{d_{m}}{r} + c_m(t) r^2.\]  That is,
\[\delta g-\pi_{X'}= \sum_{m = -1,0,1} \left(d_{m}K_{m} + c_m(t) r^2\right).\]

Taking $\bar{X}$ with $\bar{c}_m'(t)=c_m(t)$, and letting $X = X' + \bar{X}$, we find
\[\delta g-\pi_{X}= \sum_{m = -1,0,1} d_{m}K_{m}.\]
Again, there remains gauge freedom in the form $\bar{c}(t) \equiv \bar{c}$.  Such transformations correspond to scalar multiples of the angular Killing fields $\Omega_{i}$, with vanishing deformation tensor.  Modulo these rotations, the decomposition of $\delta g$ in the proposition is unique.
\end{proof}

\subsection{Proof of Theorem 1}

Combining the propositions above, we have a proof of Theorem 1.  In particular, adding the various linearized solutions, we find that there exists a smooth co-vector $X^{\ell<2}$ (unique modulo Killing fields) on the Schwarzschild spacetime and constants $c$, $d_{-1}, d_0, d_1$ such that 
\begin{equation}\label{lowerHarmonics} 
\delta g^{\ell < 2}=\pi_{X^{\ell<2}}+c K+\sum_{m=-1, 0, 1} d_m K_m,
\end{equation}
where $K, K_{-1}, K_0, K_1$ are smooth symmetric two-tensors that correspond to linearized Kerr solutions specified in Definition \ref{Kerr_sol}.

The rest of the paper is concerned with the analysis of the closed and co-closed pieces of the remainder $\delta g^{\ell \geq 2}$.

\section{The Co-closed Solution}

In this section, we analyze the co-closed portion $h_2$ \eqref{h_2} of $\delta g^{\ell \geq 2}$.  Recall that the co-closed portion has vanishing components
\begin{align}
\begin{split}
&h_{AB} = 0,\\
&H= 0,
\end{split}
\end{align}
with its remaining components satisfying the divergence conditions
\begin{align}
\begin{split}
&\mathring\nabla^{\alpha}h_{A\alpha} = 0,\\
&\mathring\nabla^{\alpha}\mathring\nabla^{\beta}\hat{h}_{\alpha\beta} = 0.
\end{split}
\end{align}

\subsection{The Linearized Vacuum Einstein Equations} 

With the vanishing of $h_{AB}$ and $H$ and the divergence-free conditions above, we need only consider the equations \eqref{Aalpha} and \eqref{alphabetaTraceless}.  After some simplification, we find
\begin{align}
\begin{split}\label{Aalpha_2}
2\delta R_{A\alpha} = &-r^{-2}\epsilon_{AB}\epsilon^{CD} \tilde\nabla^{B}\left( r^4 \tilde{\nabla}_D(r^{-2} h_{C\alpha})\right) \\
&-r^{-2}\mathring\nabla^{\beta}\left(\mathring\nabla_{\alpha}h_{A\beta}+\mathring\nabla_{\beta} h_{A\alpha} - r^{2}\tilde\nabla_{A}(r^{-2}\hat{h}_{\alpha\beta})\right)\\
&- r^{-2}\tilde\Box (r^2)h_{A\alpha} + 2r^{-2}h_{A\alpha},
\end{split}
\end{align}
\begin{align}
\begin{split}\label{alphabeta_2}
2\widehat{\delta R}_{\alpha\beta} &= \tilde{\nabla}^{A}\left(\mathring\nabla_{\alpha}h_{A\beta} +  \mathring\nabla_{\beta}h_{A\alpha} - r^2\mathring\nabla_{A}(r^{-2}\hat{h}_{\alpha\beta})\right)\\ 
&- 2r^{-2}r^{A}r_{A}\hat{h}_{\alpha\beta} -2r^{-1}(\tilde\Box r)\hat{h}_{\alpha\beta} + 2r^{-2}\hat{h}_{\alpha\beta}.
\end{split}
\end{align}

In \eqref{Aalpha_2} we have used the generic calculation
\[\begin{split}&r^{-2}\epsilon_{AB}\epsilon^{CD} \tilde\nabla^{B}\left( r^4 \tilde{\nabla}_D(r^{-2} h_{C\alpha})\right) = \tilde{\Box} h_{A\alpha}-\tilde{\nabla}^B\tilde{\nabla}_A h_{B\alpha}
-r^{-2}\tilde\Box(r^2)h_{A\alpha}\\
 &+ 2r^{-2}r^{B}r_{A}h_{B\alpha} + 2r^{-1}(\tilde\nabla^{B}\tilde\nabla_{A}r)h_{B\alpha}+ 2r^{-1}r_{A}\tilde\nabla^{B}h_{B\alpha} - 2r^{-1}r^{B}\tilde\nabla_{A}h_{B\alpha},\end{split}\]
in addition to the first commutation formula \eqref{commFormulae}.

Next, we define the spherical one-form
\begin{equation}\label{Pdef}
P = P_{\alpha}dx^{\alpha} := r^3\epsilon^{AB}\tilde\nabla_{B}(r^{-2}h_{A\alpha})dx^{\alpha},
\end{equation}
and the mixed quantity
\begin{align}\label{Qdef}
\begin{split}
Q &= Q_{\alpha\beta A}dx^{\alpha}dx^{\beta}dx^{A} \\
&:= \left(\mathring\nabla_{\beta} h_{A\alpha} + \mathring\nabla_{\alpha} h_{A\beta} - r^{2}\tilde\nabla_{A}(r^{-2}\hat{h}_{\alpha\beta})\right)dx^{\alpha}dx^{\beta}dx^{A},
\end{split}
\end{align}
each of which is gauge-invariant with respect to co-closed pure gauge solutions \eqref{pi_2}.

Specializing to the Schwarzschild background (that is, applying \eqref{miscellaneous}), the linearized vacuum Einstein equations \eqref{linearized_Einstein} take the form
\begin{equation}\label{one}
2\delta R_{A\alpha} = -r^{-2}\epsilon_{AB}\tilde\nabla^{B}(r P_{\alpha}) - r^{-2}\mathring\nabla^{\beta}Q_{\alpha \beta A} = 0,
\end{equation}
\begin{equation}\label{two}
2\widehat{\delta R}_{\alpha\beta} = \tilde\nabla^{A}Q_{\alpha\beta A} = 0.
\end{equation}
By definition, the two objects also satisfy the relation
\begin{equation}\label{three}
\epsilon^{AB}\tilde\nabla_{B}\left(r^{-2}Q_{\alpha \beta A}\right) - r^{-3}\mathring\nabla_{\alpha} P_{\beta} - r^{-3}\mathring\nabla_{\beta} P_{\alpha} = 0.
\end{equation}

\subsection{Decoupled Quantities}
Applying $r\epsilon^{AB}\tilde\nabla_{B}$ to \eqref{one} and $r\mathring\nabla^{\beta}$ to \eqref{three}, we find
\begin{align*}
& r\epsilon^{AB}\tilde\nabla_{B}\left(r^{-2}\epsilon_{AC}\tilde\nabla^{C}(rP_{\alpha})\right) + r\mathring\nabla^{\beta}\epsilon^{AB}\tilde\nabla_{B}(r^{-2}Q_{\alpha\beta A}) = 0,\\
& -r\mathring\nabla^{\beta}\epsilon^{AB}\tilde\nabla_{B}(r^{-2}Q_{\alpha\beta A}) + r^{-2}\mathring\nabla^{\beta}\mathring\nabla_{\alpha}P_{\beta} + r^{-2}\mathring\nabla^{\beta}\mathring\nabla_{\beta}P_{\alpha} = 0.
\end{align*}

Adding the two, we have
\[rg^{BC}\tilde\nabla_{B}\left(r^{-2}\tilde\nabla_{C}(rP_{\alpha})\right)+ r^{-2}P_{\alpha} + r^{-2}\mathring\Delta P_{\alpha} = 0,\]
decoupling $P$.

Expanding the first term above, we find
\[ \tilde\Box P_{\alpha} + r^{-2}\mathring\Delta P_{\alpha} + \left(r^{-2} - 2r^{-2}r^{B}r_{B} + r^{-1}\left(\tilde\Box r\right)\right)P_{\alpha} = 0.\]
Applying the formula for the spin-$1$ d'Alembertian \eqref{CovectorWave}, along with the background formulae \eqref{miscellaneous}, we arrive at the Regge-Wheeler type equation
\begin{equation}\label{RW1}
\slashed{\Box}_{\mathcal{L}(-1)} P = W P,
\end{equation}
with potential
\begin{equation}
W := \frac{1}{r^2}\left(1 - \frac{8M}{r}\right).
\end{equation}
We refer to the co-vector $P$ as the Cunningham-Moncrief-Price function, following the work \cite{CMP}; see also \cite{MartelPoisson}.

Next, we act on \eqref{one} by the operator $\mathring\nabla_{\beta}$.  Multiplying \eqref{three} by $r^{4}$, and applying the operator $r^{-2}\epsilon_{AB}\tilde\nabla^{B}$ to the result, we find
\begin{align*}
&  r^{-2}\epsilon_{AB}\tilde\nabla^{B}(r \mathring\nabla_{\beta}P_{\alpha}) + r^{-2}\mathring\nabla_{\beta}\mathring\nabla^{\gamma}Q_{\alpha \gamma A} = 0,\\
& r^{-2}\epsilon_{AB}\tilde\nabla^{B}\left(r^4\epsilon^{CD}\tilde\nabla_{D}(r^{-2}Q_{\alpha\beta C})\right)\\
&- r^{-2}\epsilon_{AB}\tilde\nabla^{B}(r\mathring\nabla_{\alpha}P_{\beta}) - r^{-2}\epsilon_{AB}\tilde\nabla^{B}(r\mathring\nabla_{\beta}P_{\alpha})=0.
\end{align*}
Symmetrizing the first equation and summing, we find
\begin{equation}\label{Qeqn}
\begin{split}
&r^{-2}\epsilon_{AB}\tilde\nabla^{B}\left(r^4\epsilon^{CD}\tilde\nabla_{D}(r^{-2}Q_{\alpha\beta C})\right)\\
&+ r^{-2}\mathring\nabla_{\beta}\mathring\nabla^{\gamma}Q_{\alpha\gamma A} + r^{-2}\mathring\nabla_{\alpha}\mathring\nabla^{\gamma}Q_{\beta\gamma A} = 0.
\end{split}
\end{equation}

The first term above can be expanded by appealing to the relation
\[ \epsilon_{AB}\epsilon^{CD}P^{B}_{DC} = P^{B}_{BA} - P^{B}_{AB},\]
valid for tensors on the two-dimensional quotient space.  Applying this result, and contracting the equation with $r^{A}$, we find
\begin{align*}
&\left(\tilde\Box Q_{\alpha\beta A}\right)r^{A} - 2r^{-1}r^{A}r^{B}\tilde\nabla_{B}Q_{\alpha\beta A} - r^{-1}\left(\tilde\Box r\right)Q_{\alpha\beta A}r^{A}\\
&+2r^{-1}(r^{B}r_{B})\tilde\nabla^{B}Q_{\alpha\beta B} - r^{A}\left(\tilde\nabla^{B}\tilde\nabla_{A} Q_{\alpha\beta B}\right)\\
&+r^{-2}\mathring\nabla_{\alpha}\mathring\nabla^{\gamma}\left(Q_{\beta\gamma A}r^{A}\right) + r^{-2}\mathring\nabla_{\beta}\mathring\nabla^{\gamma}\left(Q_{\alpha\gamma A}r^{A}\right) = 0.
\end{align*}

Commuting the covariant derivative, and applying \eqref{two}, we rewrite the term
\[r^{A}\left(\tilde\nabla^{B}\tilde\nabla_{A}Q_{\alpha\beta B}\right) = \tilde{K}Q_{\alpha\beta A}r^{A}.\]

With this and application of \eqref{two} to the other divergence term, our equation takes the form
\begin{align*}
&\left(\tilde\Box Q_{\alpha\beta A}\right)r^{A} - 2r^{-1}r^{A}r^{B}\tilde\nabla_{B}Q_{\alpha\beta B} - r^{-1}\left(\tilde\Box r\right)Q_{\alpha\beta A}r^{A}\\
&- \tilde{K}Q_{\alpha\beta A}r^{A} +r^{-2}\left(\mathring\nabla_{\alpha}\mathring\nabla^{\gamma}Q_{\beta\gamma A}r^{A}\right) + r^{-2}\left(\mathring\nabla_{\beta}\mathring\nabla^{\gamma}Q_{\alpha\gamma A}r^{A}\right) = 0.
\end{align*}

Comparing this expression with the spin-$2$ d'Alembertian \eqref{TwoTensorWave} applied to $Q_{\alpha\beta A}r^{A}$,
\begin{align*}
&\slashed{\Box}_{\mathcal{L}(-2)}\left(Q_{\alpha\beta A}r^{A}\right) = \left(\tilde\Box Q_{\alpha\beta A}\right)r^{A} + \left(\tilde\Box r^{A}\right)Q_{\alpha\beta A}\\
&+2\left(\tilde\nabla^{A}\tilde\nabla^{B} r\right)\tilde\nabla_{B}Q_{\alpha\beta A} - 2r^{-1}r^{A}\left(\tilde\nabla_{A}\tilde\nabla^{B} r\right)Q_{\alpha\beta B}\\
&-2r^{-1}r^{A}r^{B}\tilde\nabla_{A}Q_{\alpha\beta B} + r^{-2}\mathring\Delta\left(Q_{\alpha\beta A}r^{A}\right)\\
&+2r^{-2}\left(r^{B}r_{B}\right)\left(Q_{\alpha\beta A}r^{A}\right) - 2r^{-1}\left(\tilde\Box r\right)\left(Q_{\alpha\beta A}r^{A}\right),
\end{align*}
a lengthy reduction, using the linearized Einstein equation \eqref{two}, the background calculations \eqref{miscellaneous}, and the commutation relation
\[ \mathring\nabla_{\alpha}\mathring\nabla^{\gamma}Q_{\beta\gamma A} + \mathring\nabla_{\beta}\mathring\nabla^{\gamma}Q_{\alpha \gamma A} - \mathring\Delta Q_{\alpha\beta A} = -2Q_{\alpha\beta A},\]
yields the equation
\begin{equation}\label{RW2}
\slashed\Box_{\mathcal{L}(-2)}\left(Q_{\alpha\beta A}r^{A}\right) = V^{(-)}\left(Q_{\alpha\beta A}r^{A}\right),
\end{equation}
with
\begin{equation}\label{RWPotential}
V^{(-)} := \frac{4}{r^2}\left(1-\frac{2M}{r}\right).
\end{equation}

Subsequently, we denote $Q^{(-)}_{\alpha\beta} := Q_{\alpha\beta A}r^{A}$, referred to as the Regge-Wheeler function per Martel-Poisson \cite{MartelPoisson}.  We further denote
\begin{equation}
Q_{t} := Q_{\alpha\beta A}T^{A}.
\end{equation}

\subsection{Spin-Raising of $P_{\alpha}$}

We denote by $\mathcal{D}$ the symmetrized gradient operation, and consider the quantity
\[ S_{\alpha\beta} := r\left(\mathcal{D}P\right)_{\alpha\beta} = r\left(\mathring\nabla_{\alpha}P_{\beta} + \mathring\nabla_{\beta}P_{\alpha}\right).\]

Expanding with definition \eqref{TwoTensorWave}, we find
\begin{align*}
\slashed\Box_{\mathcal{L}(-2)}S_{\alpha\beta} &= \tilde\Box S_{\alpha\beta} -2r^{-1}r^{A}\tilde\nabla_{A}S_{\alpha\beta} \\
&+r^{-2}\mathring\Delta S_{\alpha\beta} + 2r^{-2}r^{A}r_{A}S_{\alpha\beta} -2r^{-1}\left(\tilde\Box r\right)S_{\alpha\beta}\\
&=\mathcal{D}\left(\left(\tilde\Box r\right)P_{\alpha} + r\tilde\Box P_{\alpha} + 2r^{A}\tilde\nabla_{A} P_{\alpha}\right)\\
&+\mathcal{D}\left(-2r^{A}\tilde\nabla_{A}P_{\alpha} -2r^{-1}r^{A}r_{A}P_{\alpha}\right)\\
&+r^{-1}\left(\mathcal{D}\mathring\Delta P_{\alpha} + 3\mathcal{D}P_{\alpha}\right)\\
&+2r^{-2}r^{A}r_{A}S_{\alpha\beta} -2r^{-1}\left(\tilde\Box r\right)S_{\alpha\beta},
\end{align*}
where we have used
\[ \mathring\Delta \mathcal{D} P = \mathcal{D}\mathring\Delta P + 3\mathcal{D} P.\]

Grouping terms and applying the definition \eqref{CovectorWave}, we find
\begin{align*}
\slashed\Box_{\mathcal{L}(-2)} S_{\alpha\beta} &= r\mathcal{D}\slashed\Box_{\mathcal{L}(-1)}P + 3r^{-2}S_{\alpha\beta}\\
&= \left(W + 3r^{-2}\right)S_{\alpha\beta},
\end{align*}
where we have used the wave equation \eqref{RW1}.  That is, the spin-raised quantity $S_{\alpha\beta}$ satisfies
\begin{equation}
\slashed\Box_{\mathcal{L}(-2)} S_{\alpha\beta} = V^{(-)} S_{\alpha\beta},
\end{equation}
with $V^{(-)}$ defined by \eqref{RWPotential}.

\subsection{Master Quantity for the Co-Closed Portion} \label{master_coclosed} 
The master quantity $Q^{(-)}$ can be rewritten in an alternate form to facilitate comparison with the master quantity \eqref{closed} in the closed portion. For a co-closed solution $h_2$ \eqref{h_2}, we define
\begin{equation}
\underline{\epsilon}_{A} := \underline {H}_{A} - \frac{1}{2}r^2\tilde\nabla_{A}(r^{-2}\underline{H}_{2}).
\end{equation} 
It is not hard to see that 
\begin{equation}\label{co-closed} Q^{(-)}_{\alpha\beta} =\left(\epsilon_{\alpha}^{\gamma}\mathring\nabla_{\beta} + \epsilon_{\beta}^{\gamma}\mathring\nabla_{\alpha}\right)\mathring\nabla_{\gamma} (r^A \underline{\epsilon}_A).\end{equation}

It can be shown that $r^A\underline{\epsilon}_{A}$ satisfies a Regge-Wheeler type equation with respect to the operator $\tilde{\Box}$. However, the subsequent decay estimates in this article are obtained after identifying $r^A\underline{\epsilon}_A$ through \eqref{co-closed} as a section of the bundle of symmetric traceless two-tensors. 

\subsection{Relation to Chandrasekhar}
Following Chandrasekhar \cite{Chandra1}, in the axial case,
\begin{equation}
h_{ab}dx^{a}dx^{b} = -2r^{2}\sin^2\theta(\omega dt d\phi + q_2 dr d\phi + q_3 d\theta d\phi).
\end{equation}

Using the definitions \eqref{Pdef} and \eqref{Qdef}, we calculate
\begin{align}
P &= r^3\sin^2\theta Q_{02}d\phi,\\
Q &= \sin^2\theta Q_{03} (r^{-2}d\theta d\phi) dt + \frac{\Delta}{r^2}\sin^2\theta Q_{23} (r^{-2}d\theta d\phi) dr_{*}.
\end{align}

The decoupled quantities $P$ and $Q^{(-)}$ correspond, respectively, to $\beta$ and $\alpha$ in the first two authors' earlier work on axial perturbations \cite{HK}.  Indeed, the decoupling procedures are the same as those in the previous work.  In contrast to \cite{HK}, however, the above decoupling is achieved without the need for axisymmetry of the co-closed solution.

\subsection{Analysis of the Co-Closed Solution}

\subsubsection{Decay of $P$, $Q^{(-)}$, and $Q_{t}$}

The analysis of the Cunningham-Moncrief-Price function $P$ and the Regge-Wheeler function $Q^{(-)}$ via their equations, together with the derived estimates on $Q_{t}$ via the linearized vacuum Einstein equations, is identical to that carried out in the earlier work \cite{HK}; although the results of \cite{HK} apply to axisymmetric perturbations, in accordance with the framework laid out in \cite{Chandra1}, the associated analysis does not depend upon axisymmetry.  Estimates for the Regge-Wheeler equation appear in the earlier papers \cite{BlueSoffer, DSS, FriedmanMorris, DHR}; the novel feature of the present work is in exploiting these estimates for both the Cunningham-Moncrief-Price function and the Regge-Wheeler function to deduce estimates on $Q_{t}$, with estimates on all three leading to control of the linearized metric following a suitable gauge normalization.  After listing relevant notation from \cite{HK}, we collect the main results below.

Decay is expressed with respect to the foliation of smooth spacelike hypersurfaces $\tilde{\Sigma}_{\tau}$, characterized by
\begin{align}\label{decayFoliation}
\begin{split}
\tau &= t+2M\log(r-2M)+c_0\textup{, for}\ r\leq 3M,\\
 &= t-\sqrt{r^2+1}+c_1\textup{, for}\ r \geq 20M,
\end{split}
\end{align}
with the specification in the spatially precompact region $3M < r < 20M$ and the choice of constants $c_0$ and $c_1$ made smoothly in such a way that $u,v\geq \tau$ on $\tilde{\Sigma}_\tau$.  

Energies are specified in terms of the red-shift multiplier $N$, first identified in \cite{DR}.  For the purposes of the energy calculations below, it suffices to note that $N$ is a future-directed, strictly timelike commutator yielding a positive-definite energy
\begin{equation}\label{NenergyXi}
E^{N}_{\xi}(\tilde{\Sigma}_{\tau}) := \int_{\tilde{\Sigma}_{\tau}} J^{N}_{a}[\xi]\eta^{a}_{\tilde{\Sigma}_{\tau}}
\end{equation}
for a symmetric traceless two-tensor $\xi$ on the folation $\tilde{\Sigma}_{\tau}$.  More specific details on the red-shift multiplier are presented in Section \ref{RedShift}.
We also make use of the initial energies
\begin{align}
E_0[\xi] &:= \sum_{(q)\leq 2}\int_{\{t = 0 \}} J^{N}_{a}[\mathcal{K}^{(q)}\xi]\eta^{a}\label{E0def},\\
E_1[\xi] &:= \sum_{(q)\leq 4} \int_{\{t = 0 \}} (1+r_*^2)J^{N}_{a}[\mathcal{K}^{(q)}\xi]\eta^{a}\label{E1def},\\
E_2[\xi] &:= \sum_{(q)\leq 6} \int_{\{t = 0 \}} (1+r_*^2)J^{N}_{a}[\mathcal{K}^{(q)}\xi]\eta^{a}\label{E2def},
\end{align}
defined for the same $\xi$ on the time slice $\{ t = 0 \}$.  Note that the sum is taken over multi-indices $(q)$ of length $q$ and less, over all Killing commutators in $\mathcal{K}$.  Finally, we use $|\ |_{g}$ to denote the spacetime norm; we remind the reader that the spacetime norm is positive-definite on the relevant sphere bundles.

Analysis of the decoupled quantities $P$ and $Q^{(-)}$  yields the following decay estimates:
\begin{theorem}\label{decay_r_large}
Suppose $P$ and $Q$ are defined as in \eqref{Pdef} and \eqref{Qdef}, respectively, and satisfy the linearized vacuum Einstein equations \eqref{one} and \eqref{two}.  Owing to the decoupling procedure above, $P$ and $Q^{(-)}$ satisfy the Regge-Wheeler type equations \eqref{RW1} and \eqref{RW2}.  Assume that $P$ and $Q$ are smooth and compactly supported on $\{ t = 0 \}$, with support on $\ell \geq 2$.  Then we have the decay estimates

\begin{align*}
|Q^{(-)}|_{g}&\leq C\sqrt{E_2[Q^{(-)}]}r^{-1}\tau^{-1/2},\\
|Q^{(-)}|_{g}&\leq C\sqrt{E_2 [Q^{(-)}]}\tau^{-1},\\
|P|_{g}&\leq C\sqrt{E_2[P]}r^{-1}\tau^{-1/2},\\
|P|_{g}&\leq C\sqrt{E_2 [P]}\tau^{-1},
\end{align*}
on the family of hypersurfaces $\tilde{\Sigma}_{\tau}$ \eqref{decayFoliation}.
\end{theorem}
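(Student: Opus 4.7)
The approach is to obtain both decay statements by the standard vector-field/Morawetz toolbox applied to the Regge-Wheeler type equations \eqref{RW1} and \eqref{RW2}, working componentwise in the sphere bundles $\mathcal{L}(-1)$ and $\mathcal{L}(-2)$. Since $|\cdot|_g$ is positive-definite on those bundles and the potentials $W = r^{-2}(1-8M/r)$ and $V^{(-)} = 4r^{-2}(1-2M/r)$ are of favorable Regge-Wheeler shape (with $V^{(-)} > 0$ throughout the exterior, and the short-range negative part of $W$ for $r < 8M$ easily absorbed by a Hardy inequality and the red-shift), the argument reduces to that of the scalar Regge-Wheeler equation on Schwarzschild --- the very scheme that the first two authors used in the axisymmetric co-closed case of \cite{HK}.

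The plan is to chain three successive energy estimates, then convert them to pointwise bounds. First, I would establish uniform boundedness of the nondegenerate $N$-energy, $E^N_\xi(\tilde\Sigma_\tau) \le C\,E_0[\xi]$ for $\xi \in \{P, Q^{(-)}\}$, combining the static Killing multiplier $T$ away from the horizon with the red-shift multiplier $N$ of \cite{DR} in a neighborhood of the horizon; the favorable signs of the potentials make the zeroth-order bulk terms benign. Second, I would prove a (degenerate) integrated local energy decay (ILED) estimate using a Morawetz multiplier of the form $f(r_*)\partial_{r_*}$, with $f$ vanishing at the photon sphere $r = 3M$; the trapping is absorbed by commuting once with a Killing field in $\mathcal{K}$, and the requisite ILED for precisely these Regge-Wheeler equations is already established in \cite{BlueSoffer, DSS, FriedmanMorris, DHR}. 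Third, I would apply the Dafermos-Rodnianski $r^p$-hierarchy on the foliation $\tilde\Sigma_\tau$, using the weighted initial energies $E_1$ (at level $p=1$) and $E_2$ (iterating to $p=2$), to upgrade the ILED to $E^N_\xi(\tilde\Sigma_\tau) \lesssim E_2[\xi]\,\tau^{-2}$.

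The pointwise bounds then follow by commuting \eqref{RW1} and \eqref{RW2} with the Killing fields in $\mathcal{K}$ (a legitimate step, since $W$ and $V^{(-)}$ depend only on $r$), applying the above energy decay to the commuted quantities --- this is why $E_2$ sums over multi-indices of length at most six --- and invoking the Sobolev embedding $H^2(S^2) \hookrightarrow L^\infty(S^2)$ on each orbit sphere. The bound $|\xi|_g \le C\sqrt{E_2[\xi]}\,\tau^{-1}$ on a bounded range of $r$ comes directly from the $\tau^{-2}$ energy decay, while the bound $|\xi|_g \le C\sqrt{E_2[\xi]}\,r^{-1}\tau^{-1/2}$ at large $r$ follows from the $r^{-2}$-weighted $p = 2$ flux of the hierarchy together with a fundamental-theorem-of-calculus argument in $r$.

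The main obstacle is the well-known trapping at the photon sphere $r = 3M$, which generically obstructs a derivative-loss-free ILED. The key reason the argument still closes is that the Regge-Wheeler form of $W$ and $V^{(-)}$ permits a Morawetz multiplier whose trapping degeneracy is neutralized by a single Killing commutation --- there is no need for mode decomposition or for a Teukolsky-type transformation. This is precisely the advantage of working with the decoupled quantities $P$ and $Q^{(-)}$, and once one observes that the axisymmetry hypothesis of \cite{HK} played no role in the multiplier estimates themselves, the argument carries over verbatim to the bundle-valued setting here.
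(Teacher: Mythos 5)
Your proposal is correct in substance, and its skeleton --- $T$-energy plus red-shift for uniform boundedness, a Morawetz current for integrated local energy decay, Killing commutation plus spherical Sobolev embedding for pointwise bounds --- coincides with the paper's, which for this theorem simply observes that the multiplier analysis of \cite{HK} (reproduced in detail for the Zerilli equation in Section 8) nowhere uses axisymmetry and hence applies verbatim to $P$ and $Q^{(-)}$ as sections of $\mathcal{L}(-1)$ and $\mathcal{L}(-2)$; you arrived at the same observation independently in your closing remark. The genuine divergence is in the decay mechanism: you invoke the Dafermos--Rodnianski $r^p$-hierarchy, whereas the paper (following \cite{HK}, and as carried out explicitly for $Q^{(+)}_{n}$ in Section 8) uses the quasi-conformal multiplier $Z = u^2\partial_{u} + v^2\partial_{v}$ with a weight function and the original bootstrap scheme of \cite{DR}. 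This is why the initial energies $E_1, E_2$ in the statement carry $(1+r_*^2)$ weights --- these are precisely conformal-energy weights; on the $r^p$ route the natural hypotheses are $p$-weighted fluxes of $r\xi$, which match the stated norms only up to an equivalence at large $r$ that you would need to record. Both routes deliver $\tau^{-2}$ energy decay and the stated $\tau^{-1}$ and $r^{-1}\tau^{-1/2}$ pointwise rates, but the $Z$-multiplier output is reused elsewhere (the bound on the $Z$-energy drives the decay of the $\Phi$ in Section 9), so the paper's choice is not incidental. Two smaller cautions: your ``commuting once with a Killing field'' at the photon sphere must mean the rotations $\Omega_i$ --- commuting with $T$ gains nothing against trapping --- and the paper in fact avoids commutation in the Morawetz step by borrowing from the angular term via the Poincar\'{e} inequality \eqref{Poincare}, accepting the standard angular degeneracy at $r = 3M$; likewise the negative part of $W$ on $2M < r < 8M$ is absorbed by the same $\ell \geq 2$ spectral gap rather than a Hardy inequality, and for the bundle-valued stress-energy one must note, as the paper does after \eqref{stressDivergence}, that the curvature commutator $[\slashed{\nabla}_a, \slashed{\nabla}_b]$ contracted with any $\Omega_i$-invariant multiplier vanishes --- the one genuinely tensorial point that your ``componentwise reduction to the scalar case'' glosses over.
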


Near the horizon, it is possible to apply the transverse direction $\hat{Y}:= \frac{1}{1-\mu}(\partial_{t} - \partial_{r_{*}})$ as a commutator to obtain further decay estimates for $P$.  The multiplier is not Killing, but the error terms stemming from the calculation
\begin{align}
\begin{split}
&\slashed{\Box}_{\mathcal{L}(-1)}(\slashed\nabla_{\hat{Y}}P) = \frac{2(r-M)}{r^2}\slashed\nabla_{\hat{Y}}\slashed\nabla_{\hat{Y}}P - \frac{4}{r}\slashed\nabla_{\hat{Y}}\slashed\nabla_{T}P\\
&+ \frac{2}{r^2}(\slashed{\nabla}_{T}-\slashed\nabla_{\hat{Y}})P + (\hat{Y}W-\frac{2}{r}W)P.
\end{split}
\end{align}
are controllable, in much the same way as in Section 3.3.4 of \cite{DRClay}.  As a consequence, we have the following theorem:
\begin{theorem}\label{decay_r_small}
Suppose that $P$ is a solution of \eqref{RW1}, smooth and compactly supported on $\{ t = 0 \}$, with support on $\ell \geq 2$.  Fixing a sufficiently small radius $r_1$ (see the red-shift construction of Section \ref{RedShift}), we have the decay estimate
\begin{equation}
\sup_{\tilde{\Sigma}_{\tau} \cap \{r\leq r_1\}}|\slashed\nabla_{\hat{Y}}P|_{g}\leq C\left(\sqrt{E_2 [P]} + \sqrt{E_2[\slashed{\nabla}_{\hat{Y}}P]}\right)\tau^{-1},
\end{equation}
where $\hat{Y}$ is the transverse direction \eqref{hatY}.
\end{theorem}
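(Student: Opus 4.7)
The plan is to commute the Regge-Wheeler type equation \eqref{RW1} with the transverse vector field $\hat Y$ and then apply the red-shift vector field method of Dafermos-Rodnianski to the resulting inhomogeneous equation. First I would verify the displayed commuted equation for $\slashed{\Box}_{\mathcal{L}(-1)}(\slashed\nabla_{\hat Y}P)$ by direct calculation, using that $[\slashed\Box_{\mathcal{L}(-1)},\slashed\nabla_{\hat Y}]$ is explicitly computable on the Schwarzschild background. Although $\hat Y$ is not Killing, the coefficient $\frac{2(r-M)}{r^2}$ appearing in front of $\slashed\nabla_{\hat Y}\slashed\nabla_{\hat Y}P$ is strictly positive at $r=2M$; this is precisely the red-shift effect and is what makes the subsequent analysis close.

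Next I would apply the red-shift $N$-multiplier to the commuted equation, following the recipe of Section 3.3.4 of \cite{DRClay} and mirroring the argument already used to establish Theorem \ref{decay_r_large}. Restricted to a sufficiently small neighborhood $\{r\leq r_1\}$ of the horizon, with $r_1$ depending only on $M$, the term $\frac{2(r-M)}{r^2}\slashed\nabla_{\hat Y}\slashed\nabla_{\hat Y}P$ generates a favorably signed bulk contribution to the $N$-energy identity. The remaining inhomogeneous terms on the right-hand side carry either one fewer $\hat Y$ derivative, as with $\frac{4}{r}\slashed\nabla_{\hat Y}\slashed\nabla_{T}P$, or no $\hat Y$ derivative at all, as with the two zeroth-order-in-$\hat Y$ terms; these are absorbed by the favorable bulk term after Cauchy-Schwarz, or treated as source terms controlled by the energy and pointwise decay for $P$ already guaranteed by Theorem \ref{decay_r_large}.

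Having produced a non-degenerate energy estimate for $\slashed\nabla_{\hat Y}P$ on $\tilde\Sigma_\tau\cap\{r\leq r_1\}$, I would propagate $\tau^{-1}$ decay by commuting additionally with the Killing vector fields in $\mathcal{K}$, whose commutators with $\slashed\nabla_{\hat Y}$ produce only lower-order terms of the same type handled above, then running the $r^p$-weighted hierarchy and integrated local energy decay estimate used in Theorem \ref{decay_r_large}, and finally applying Sobolev embedding on $\tilde\Sigma_\tau$ to convert energy decay into the claimed pointwise bound. The initial data energies that appear in this procedure are exactly $E_2[P]$ and $E_2[\slashed\nabla_{\hat Y}P]$, consistent with the two-term right-hand side in the statement.

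The main obstacle is the principal term $\frac{2(r-M)}{r^2}\slashed\nabla_{\hat Y}\slashed\nabla_{\hat Y}P$: it is quadratic in the transverse derivative one is trying to estimate and so cannot be absorbed perturbatively by any Gr\"onwall-style argument. The resolution, signalled by the positive sign of its coefficient at $r=2M$, is that against the red-shift multiplier this term is not a source but rather a coercive bulk contribution; achieving this requires choosing $r_1$ sufficiently small depending on $M$, and this is precisely why the estimate is localized to $\tilde\Sigma_\tau\cap\{r\leq r_1\}$ rather than holding on the whole exterior.
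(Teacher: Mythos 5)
Your proposal takes essentially the same route as the paper: the paper's entire argument consists of displaying the commuted equation for $\slashed\nabla_{\hat{Y}}P$ and noting that, because the coefficient $\frac{2(r-M)}{r^2}$ of the quadratic term $\slashed\nabla_{\hat{Y}}\slashed\nabla_{\hat{Y}}P$ is positive at the horizon, all error terms are controllable by red-shift commutation exactly as in Section 3.3.4 of \cite{DRClay} --- precisely the mechanism you identify as the crux. The only cosmetic deviation is your appeal to an $r^p$-weighted hierarchy for the $\tau^{-1}$ decay of the base quantities, whereas the paper inherits this from Theorem \ref{decay_r_large} via the quasi-conformal $Z$-multiplier bootstrap of \cite{HK}; this does not affect the near-horizon argument.
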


Using the estimates of Theorem \ref{decay_r_large}, we are also able to deduce pointwise decay of the remaining component $Q_{t}$.  As these estimates arise from a direct analysis of the linearized vacuum Einstein equations, rather than that of a decoupled Regge-Wheeler type equation, we present the details of the argument below.  Throughout, we make use of notation arising in the vector-field multiplier method; we refer the reader to Section 8 for further details.

\begin{theorem}\label{QtEstimates}
Suppose $P$ and $Q$ are defined as in \eqref{Pdef} and \eqref{Qdef}, respectively, and satisfy the linearized vacuum Einstein equations \eqref{one} and \eqref{two}.  Assume, moreover, that $P$ and $Q$ are smooth and compactly supported on $\{ t = 0 \}$, with support on $\ell \geq 2$.  Then the component $Q_{t}$ satisfies the decay estimates
\begin{equation}
\begin{split}
|Q_{t}|_{g} &\leq C\left(\sqrt{E_2[Q^{(-)}]}+\sqrt{E_2[P]}\right)r^{-1}\tau^{-1/2},\\
|Q_{t}|_{g}&\leq C\left(\sqrt{E_2 [Q^{(-)}]}+\sqrt{E_2 [P]}\right)\tau^{-1}.
\end{split}
\end{equation}
\end{theorem}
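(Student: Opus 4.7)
My plan is to extract $Q_{t}$ directly from the system of constraint equations \eqref{one}--\eqref{three} rather than by attempting to decouple it into a Regge-Wheeler type wave equation. The key observation is that, upon evaluating \eqref{three} in Schwarzschild coordinates, the quotient Christoffel contributions cancel in pairs, and using the identification $Q^{(-)} = (1-\mu)Q_{r}$ one obtains the first-order radial transport equation
\begin{equation*}
\partial_{r}(r^{-2}Q_{t}) = r^{-2}(1-\mu)^{-1}\slashed\nabla_{T} Q^{(-)} - r^{-3}\mathcal{D}P,
\end{equation*}
where $\mathcal{D}P_{\alpha\beta} := \mathring\nabla_{\alpha}P_{\beta} + \mathring\nabla_{\beta}P_{\alpha}$ is the symmetrized angular gradient. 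By finite speed of propagation from compactly supported initial data on $\{t=0\}$, the quantities $Q^{(-)}$, $P$, and hence $Q_{t}$ all vanish for $r > R(\tau)$ on each leaf $\tilde\Sigma_{\tau}$. Integrating the transport equation from $R(\tau)$ inward gives the explicit representation
\begin{equation*}
Q_{t}(\tau, r, \cdot) = -r^{2}\int_{r}^{R(\tau)}\left[s^{-2}(1-\mu(s))^{-1}\slashed\nabla_{T} Q^{(-)} - s^{-3}\mathcal{D}P\right](\tau, s, \cdot)\, ds,
\end{equation*}
viewed as an identity of symmetric traceless two-tensors on $S^{2}$.

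Next one inputs pointwise decay for the two integrand pieces. Since $T$ is Killing, $\slashed\nabla_{T}Q^{(-)}$ again satisfies \eqref{RW2}; similarly each angular Killing commutator $\Omega_{i}P$ satisfies \eqref{RW1}, and $|\mathcal{D}P|_{g}$ is dominated in $L^{\infty}$ by a finite sum of $|\Omega_{i}P|_{g}$ via Sobolev on the spheres together with spherical scaling factors. The initial energies of these commuted quantities are controlled by $E_{2}[Q^{(-)}]$ and $E_{2}[P]$ respectively, so Theorem \ref{decay_r_large} yields pointwise bounds of the form $|\slashed\nabla_{T}Q^{(-)}|_{g} \leq C\sqrt{E_{2}[Q^{(-)}]}\min(r^{-1}\tau^{-1/2},\tau^{-1})$ and $|\mathcal{D}P|_{g} \leq Cr^{-1}\sqrt{E_{2}[P]}\min(r^{-1}\tau^{-1/2},\tau^{-1})$. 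Substituting these into the representation formula, splitting the $s$-integration at $s=\tau^{1/2}$ so that each estimate is used in its favorable regime, and tracking the powers of $r$ and $\tau$ then produces the two stated decay bounds for $|Q_{t}|_{g}$.

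The main technical obstacle is the near-horizon behavior: the factor $(1-\mu)^{-1}$ multiplying $\slashed\nabla_{T}Q^{(-)}$ is unbounded as $r \to 2M$, so a naive integration fails close to the horizon. This is remedied by invoking the red-shift mechanism underlying Theorem \ref{decay_r_small}: commuting with the transverse direction $\hat{Y}$ and exploiting the red-shift multiplier $N$ provides extra near-horizon decay for $\slashed\nabla_{T}Q^{(-)}$ in the region $r \leq r_{1}$, sufficient to absorb the $(1-\mu)^{-1}$ factor; for $r \geq r_{1}$ the coefficient is uniformly bounded and the integration proceeds directly. Matching the two regimes via the vector-field multiplier framework (recalled in Section 8) then gives the uniform pointwise estimate throughout $\tilde\Sigma_{\tau}$ and completes the proof.
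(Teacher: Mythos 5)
Your derivation of the transport identity from \eqref{three} is essentially correct---the quotient Christoffel contributions do cancel, and one obtains $\partial_r\bigl(r^{-2}Q_{\alpha\beta t}\bigr)=r^{-2}(1-\mu)^{-1}\slashed\nabla_{T}Q^{(-)}_{\alpha\beta}-r^{-3}\bigl(\mathring\nabla_{\alpha}P_{\beta}+\mathring\nabla_{\beta}P_{\alpha}\bigr)$ up to sign---but the integration step fails at both ends of the leaf, and the proposed repairs do not work. Near the horizon, the coefficient $(1-\mu)^{-1}$ is non-integrable in $r$, and you would need the degenerate bound $|\slashed\nabla_{T}Q^{(-)}|_{g}=O(1-\mu)$ to absorb it; no such bound exists. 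Writing $T=\partial_{u}+\partial_{v}$ gives $\slashed\nabla_{T}=(1-\mu)\slashed\nabla_{\hat{Y}}+\slashed\nabla_{\partial_{v}}$, and $\slashed\nabla_{\partial_{v}}Q^{(-)}$ is generically nonzero \emph{on} the horizon; the red-shift mechanism (and Theorem \ref{decay_r_small}, which in any case concerns $P$, not $Q^{(-)}$) yields non-degenerate boundedness and decay of transversal derivatives, never an extra vanishing factor of $(1-\mu)$. Note also that your identity holds for $\partial_{r}$ at fixed $t$, while you integrate at fixed $\tau$: converting to the leaf-tangential derivative introduces the term $\frac{\partial t}{\partial r}\big|_{\tilde{\Sigma}_{\tau}}\slashed\nabla_{T}Q_{t}\sim -\frac{2M}{r-2M}\slashed\nabla_{T}Q_{t}$, which is just as singular. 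The paper's proof circumvents exactly this by using \emph{both} relations---yours, from \eqref{three}, together with the divergence relation from \eqref{two}---so that in the leaf-adapted frame the singular $\Delta^{-1}$ coefficients recombine into the manifestly regular expression $\slashed\nabla_{\rho}Q_{\alpha\beta t}\sim \slashed\nabla_{\eta_{\tilde{\Sigma}_{\tau}}}Q_{\alpha\beta r_{*}}-\frac{4M}{r^2}Q_{\alpha\beta r_{*}}+2r^{-1}\slashed\nabla_{(\alpha}P_{\beta)}$.

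The large-$r$ regime also does not close with pointwise inputs. In $\mathring\sigma$-normalized components ($|\cdot|_{\mathring\sigma}=s^{2}|\cdot|_{g}$ for two-tensors), the dangerous part of your integrand is $s^{-2}(1-\mu)^{-1}|\slashed\nabla_{T}Q^{(-)}|_{\mathring\sigma}\sim \min\bigl(s^{-1}\tau^{-1/2},\tau^{-1}\bigr)$, which has no integrable decay in $s$; and since $r_{*}-r\sim 2M\ln r$, the outer support radius $R(\tau)$ on these hyperboloidal leaves is exponentially large in $\tau$, so the integral diverges no matter where you split---at best you incur logarithmic losses, and the $r^{-1}\tau^{-1/2}$ weight is unobtainable for this term. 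The paper replaces pointwise decay of the integrand by Cauchy--Schwarz against the $\rho^{2}$-weighted $N$-energy flux through $\tilde{\Sigma}_{\tau}$, an $L^{2}$ quantity decaying like $\tau^{-2}$ that absorbs the infinite range because $\rho\sim\log r$ near infinity; and it obtains the $r^{-1}\tau^{-1/2}$ estimate by an entirely different mechanism, namely the angular divergence equation \eqref{one}, a Poincar\'{e} inequality on the spheres (using support in $\ell\geq 2$), and a $W^{2,q}$ elliptic estimate with Sobolev embedding to control $\slashed\nabla_{r}P$ pointwise. Both ingredients, or substitutes for them, are missing from your argument, so as written it does not prove either stated estimate.
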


\begin{proof}
Recall \eqref{two} and \eqref{three}:
\begin{align*}
&2\widehat{\delta R}_{\alpha \beta} = \tilde\nabla^{A}Q_{\alpha\beta A} = 0,\\
&\epsilon^{AB}\tilde\nabla_{B}\left(r^{-2}Q_{\alpha \beta A}\right) - 2r^{-3}\mathring\nabla_{(\alpha} P_{\beta)}= 0.
\end{align*}

Expanding and rewriting in terms of the projected covariant derivative $\slashed{\nabla},$ we have the relations
\begin{align}
&\slashed{\nabla}_{t}Q_{\alpha\beta t} = \slashed{\nabla}_{r_{*}}Q_{\alpha\beta r_{*}} + \frac{2\Delta}{r^3}Q_{\alpha\beta r_{*}},\\
&\slashed{\nabla}_{r_{*}}Q_{\alpha\beta t} = \slashed{\nabla}_{t}Q_{\alpha\beta r_{*}} + 2\left(1-\frac{2M}{r}\right)r^{-1}\slashed\nabla_{(\alpha} P_{\beta)}.
\end{align}

Let $\rho$ be a geodesic radial coordinate on $\tilde{\Sigma}_{\tau}$, normalized to $\rho = 1$ on the horizon.  Further, denote by $\eta_{\tilde{\Sigma}_{\tau}}$ the unit normal vector of $\tilde{\Sigma}_{\tau}$.  

Near the horizon, 
\begin{align*}
&\frac{\partial}{\partial \rho} \sim -\frac{2Mr}{\Delta}\partial_{t} + \frac{r^2}{\Delta}\partial_{r_{*}},\\
&\eta_{\tilde{\Sigma}_{\tau}} \sim \frac{r^2}{\Delta}\partial_{t} -\frac{2Mr}{\Delta} \partial_{r_{*}},
\end{align*}
such that
\begin{align*}
\slashed{\nabla}_{\rho}Q_{\alpha\beta t} &\sim \slashed{\nabla}_{\eta_{\tilde{\Sigma}_{\tau}}}Q_{\alpha\beta r_{*}} -\frac{4M}{r^2}Q_{\alpha\beta r_{*}} + 2r^{-1}\slashed\nabla_{(\alpha} P_{\beta)},\\
|\slashed{\nabla}_{\rho}Q_{t}|^2 &\leq C\left(|\slashed{\nabla}_{\eta_{\tilde{\Sigma}_{\tau}}}Q_{r_{*}}|^2 + |Q_{r_{*}}|^2 + |\tilde{\slashed{\nabla}}P|^2\right)\\
&\leq C\left(J^{N}_{a}[Q_{r_{*}}]\eta^{a}_{\tilde{\Sigma}_{\tau}} + J^{N}_{a}[P]\eta^{a}_{\tilde{\Sigma}_{\tau}}\right).
\end{align*}

Near null infinity, 
\begin{align*}
&\frac{\partial}{\partial\rho}  \sim \sqrt{r^2+1}\partial_r+r\partial_t,\\
&\eta_{\tilde{\Sigma}_{\tau}}\sim r\partial_r+\sqrt{r^2+1}\partial_t,
\end{align*}
such that
\begin{align*}
\slashed{\nabla}_\rho Q_{\alpha\beta t} &\sim \slashed{\nabla}_{\eta_{\tilde{\Sigma}_\tau}} Q_{\alpha\beta r_*}+2 Q_{\alpha\beta r_*}+2\slashed{\nabla}_{(\alpha}P_{\beta )},\\
\left|\slashed{\nabla}_{\rho} Q_{t}\right|^2&\leq C \left( |\slashed{\nabla}_{\eta_{\tilde{\Sigma}_\tau}} Q_{r_*}|^2+|Q_{ r_*}|^2+|\tilde{\slashed{\nabla}}P|^2 \right)\\
&\leq C r(J^N_a [Q_{r_*}]\eta^a_{\tilde{\Sigma}_\tau}+J^N_a [P]\eta^a_{\tilde{\Sigma}_\tau}).
\end{align*}

As the governing equations commute with the angular Killing fields $\Omega_{i}$, the estimates above also hold for $\Omega^{(q)}Q_{t}$, $\Omega^{(q)}Q_{r_{*}}$, $\Omega^{(q)}P$, with $(q)$ a multi-index of length $q$.

Integrating, we deduce the first decay estimate
\begin{align*}
|Q_{t}|\leq &\int_1^\infty |\slashed{\nabla}_{\rho}Q_{t}|d\rho \leq C \left(\int_1^\infty |\slashed{\nabla}_{\rho}Q_{t}|^2\rho^2 d\rho\right)^{1/2}\\
     \leq &C\left( \int_1^\infty d\rho\int_{S^2}d\sigma \rho^2\left( |\slashed{\nabla}_{\rho}Q_{t}|^2+|\slashed{\nabla}_{\rho}\Omega Q_{t}|^2+|\slashed{\nabla}_{\rho}\Omega^2Q_{t}|^2 \right) \right)^{1/2}\\
     \leq &C\left(\int_{\tilde{\Sigma}_{\tau}} \frac{\rho^2}{r^2}(|\slashed{\nabla}_{\eta_{\tilde{\Sigma}_{\tau}}}Q_{r_{*}}|^2 + |Q_{r_{*}}|^2 + |\tilde{\slashed{\nabla}}P|^2 + \hdots)\right)^{1/2}\\
     \leq &C\left(\sqrt{E_2[Q_{r_{*}}]} + \sqrt{E_2[P]}\right)\tau^{-1},
\end{align*}
where we have used the comparison $r \sim e^{\rho}$ near null infinity.

Regarding the second estimate, we restrict our attention to the region $r\geq 3M$, and write
\begin{align*}
\int_{S^2(r)}|Q_{\alpha\beta t}|_g^2d \mathring{\sigma}&=\int_{S^2(r)}r^{-4}|Q_{\alpha\beta t}|^2_{\mathring{\sigma}} d\mathring{\sigma}\\
&\leq C\int_{S^2(r)}r^{-4}|\mathring{\nabla}^{\beta} Q_{\alpha\beta t}|^2_{\mathring{\sigma}} d\mathring{\sigma}\\
&= C\int_{S^2(r)}r^{-2}|\mathring{\nabla}^{\beta} Q_{\alpha\beta t}|^2_{g} d\mathring{\sigma}\\
&\leq C\int_{S^2 (r)}r^{-2}\left(r^2|\slashed{\nabla}_r P|^2+|P|^2\right)d\mathring{\sigma},
\end{align*}
where we have used the equation \eqref{one}.  From $|P|^2\leq CE_2[P]r^{-2}\tau^{-1}$ and commutation with $T$, we have $|\slashed{\nabla}_t\slashed{\nabla}_tP|^2\leq CE_2[T^2P]r^{-2}\tau^{-1}$.  The vector field $T$ is strictly timelike for $r\geq 3M$, and the equation \eqref{RW1} gives $|\slashed{\Delta}_{\Sigma_t} P|^2\leq C\left(E_2[P] + E_2[T^2P]\right)r^{-2}\tau^{-1}$, where $\Sigma_t$ are the constant time $t$-slices.  This last estimate, together with $|P|^2\leq CE_2[P]r^{-2}\tau^{-1}$, yields a $W^{2,q}$ estimate for $P$ in a unit ball on $\Sigma_t$, for any $1<q<\infty$. Using the Sobolev embedding $W^{2,q}\subset C^1$ for $q>3$, we can estimate the term $|\slashed{\nabla}_r P|^2$ in the integral above, from which the estimate on $Q_{t}$ follows.

\end{proof}

Note that the quantities $P$, $Q^{(-)},$ and $Q_{t}$ also exhibit uniform decay in energy, as well as uniform boundedness, both pointwise and energy; see \cite{HK}.  For the purposes of pulling back decay estimates at the linearized metric level, it suffices to consider just the pointwise decay estimates above.

\subsubsection{The Regge-Wheeler Gauge}
To derive decay estimates on the linearized metric level, we impose the Regge-Wheeler gauge.
\begin{lemma}\label{RWcovector}
For any $h_2$ of the co-closed form \eqref{h_2}, there is a co-vector $G$ with $\pi_{G}$ of the co-closed form \eqref{pi_2} such that $h=h_2-\pi_{G}$ satisfies
\[ h_{\alpha\beta}= \hat{h}_{\alpha\beta} = 0.\]
\end{lemma}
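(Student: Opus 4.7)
The plan is to solve this pointwise and algebraically on the angular-angular block, without needing to invoke any wave equation on $G$. The key observation is that for a co-closed $h_2$ of the form \eqref{h_2}, the only angular-angular contribution is the traceless part
\[
\hat h_{\alpha\beta} = \tfrac{1}{2}\bigl(\mathring{\nabla}_\alpha\epsilon_\beta^{\ \gamma}\mathring{\nabla}_\gamma \underline H_2 + \mathring{\nabla}_\beta\epsilon_\alpha^{\ \gamma}\mathring{\nabla}_\gamma \underline H_2\bigr),
\]
since $H\equiv 0$ and $h_{AB}\equiv 0$ in the co-closed Hodge sector. Thus the condition $h_{\alpha\beta}=\hat h_{\alpha\beta}=0$ reduces entirely to cancelling $\underline H_2$. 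I would take $G$ to be the purely co-closed spherical co-vector
\[
G = \bigl(\epsilon_\alpha^{\ \beta}\mathring{\nabla}_\beta \underline G_2\bigr)dx^\alpha,
\]
so that by Lemma \ref{lemmaG} its deformation tensor $\pi_G$ is of the co-closed form \eqref{pi_2} with angular-angular part
\[
\mathring{\nabla}_\alpha\epsilon_\beta^{\ \gamma}\mathring{\nabla}_\gamma \underline G_2 + \mathring{\nabla}_\beta\epsilon_\alpha^{\ \gamma}\mathring{\nabla}_\gamma \underline G_2.
\]

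Matching the coefficients of the two expressions above, I would simply set $\underline G_2 = \tfrac{1}{2}\underline H_2$. By linearity of the Hodge-type operator $\underline F \mapsto \mathring{\nabla}_{(\alpha}\epsilon_{\beta)}^{\ \gamma}\mathring{\nabla}_\gamma \underline F$, this immediately yields
\[
\hat h_{\alpha\beta} - \pi_{G,\alpha\beta}=0,
\]
and hence $h=h_2-\pi_G$ has trivial angular-angular block. One also has to observe that $\pi_G$ is automatically traceless in the angular part (trace with $\mathring\sigma^{\alpha\beta}$ gives $2\epsilon^{\beta\gamma}\mathring{\nabla}_\beta\mathring{\nabla}_\gamma \underline G_2=0$ by antisymmetry), so no trace is reintroduced.

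The main subtlety — which is milder than an actual obstacle — is the well-posedness of the prescription $\underline G_2 = \tfrac12 \underline H_2$. Since we are working inside $\delta g^{\ell\geq 2}$, the Hodge decomposition of Proposition \ref{Hodge} determines $\underline H_2$ uniquely as a smooth function on $\mathcal M$ (the $\ell<2$ modes of the relevant operators lie in the kernel and are excluded), so $\underline G_2$ is smooth and the resulting $G$ is a well-defined co-vector on the Schwarzschild exterior. Any residual ambiguity in $\underline H_2$ coming from the kernel of the angular Hodge operator corresponds to ambient gauge freedom that does not affect $\hat h_{\alpha\beta}$, so the construction is consistent. Note also that, as is well known for the Regge-Wheeler gauge, the remaining co-closed freedom in $\underline H_A$ is not normalized by this choice — the lemma only claims vanishing of $h_{\alpha\beta}$, which is precisely what the single scalar $\underline G_2$ can achieve.
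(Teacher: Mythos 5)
Your proposal is correct and takes essentially the same route as the paper: the paper's proof likewise compares \eqref{h_2} with \eqref{pi_2} and sets $\underline{G}_2 = \frac{1}{2}\underline{H}_2$, observing that this choice exhausts the co-closed gauge freedom. Your additional checks (angular tracelessness of $\pi_G$ and smoothness of $\underline{G}_2$ on the $\ell \geq 2$ sector) are consistent elaborations of that same one-line matching argument.
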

\begin{proof}
Comparing the equations \eqref{h_2} and \eqref{pi_2}, we see that the co-vector field $G$ is easily constructible by the choice $\underline{G}_2 = \frac{1}{2}\underline{H}_2.$  Indeed, such a choice exhausts our gauge freedom.
\end{proof}

\subsubsection{Decay of the Linearized Solution}

With the imposition of the Regge-Wheeler gauge, we are ready to prove the main theorem of this section, on the decay of the co-closed solution.  With the gauge-normalized metric, 
\[ h = h_{A\alpha}dx^{A}dx^{\alpha},\]
we prove decay of the linearized metric components in the following sense.  First we specify the frames $\{\hat{Y}^{A},T^{A}\}$ and $\{T^{A}, r^{A}\}$ on the quotient space $\mathcal{Q}$, noting that the frames are regular near the event horizon and away from the event horizon, respectively.  Contracting with the linearized metric, we prove decay for the quantities
\[ \hat{Y}^{A}h_{A\alpha},\ T^{A}h_{A\alpha},\ r^{A}h_{A\alpha},\]
understood as co-vectors on the quotient spheres.  We remind the reader that the spacetime norm $|\ |_{g}$ is positive-definite on the sphere bundle in question.
\begin{theorem}\label{coclosedMain}
Suppose $h_2$ is a co-closed solution \eqref{h_2} of the linearized vacuum Einstein equations \eqref{linearized_Einstein}, with support in $\ell \geq 2$.  Further, assume that $h_2$ is smooth and compactly supported, with normalization
\[h = h_2 - \pi_{G}\]
in the Regge-Wheeler gauge.  Then the linearized metric components of the normalized solution $h$ satisfy the spacetime decay estimates
\begin{align}\label{metricDecay1}
\begin{split}
&\sup_{\tilde{\Sigma}_{\tau}} |r^{A}h_{A\alpha}|_{g}  \leq C\Big( \sum_{(q)\leq 2} \sqrt{E_2[\Omega^{(q)} Q^{(-)}]}\Big)\tau^{-1/2},\\
&\sup_{\tilde{\Sigma}_{\tau}} |T^{A}h_{A\alpha}|_{g} \leq C\Big( \sum_{(q)\leq 2} \Big(\sqrt{E_2[\Omega^{(q)} Q^{(-)}]}+\sqrt{E_2[\Omega^{(q)} P]}\Big)\Big)\tau^{-1/2},\\
&\sup_{\tilde{\Sigma}_{\tau}\cap{\{r \leq r_1\}}} |\hat{Y}^{A}h_{A\alpha}|_{g} \\
&\leq C\Big( \sum_{(q)\leq 2} \Big(\sqrt{E_2[\Omega^{(q)} P]} +\sqrt{E_2[\Omega^{(q)}\slashed{\nabla}_{\hat{Y}}P]}\Big)\Big)\tau^{-1/2},
\end{split}
\end{align}
through the decay foliation \eqref{decayFoliation}.

Moreover, in the radially compact region $2M \leq r \leq R_0$, we have the improvement
\begin{align}\label{metricDecay2}
\begin{split}
&\sup_{\tilde{\Sigma}_{\tau}\cap{\{r\leq R_0\}}} |r^{A}h_{A\alpha}|_{g}  \leq C(R_0)\Big( \sum_{(q)\leq 2} \sqrt{E_2[\Omega^{(q)} Q^{(-)}]}\Big)\tau^{-1},\\
&\sup_{\tilde{\Sigma}_{\tau}\cap{\{r\leq R_0\}}} |T^{A}h_{A\alpha}|_{g} \\
&\leq C(R_0)\Big( \sum_{(q)\leq 2} \Big(\sqrt{E_2[\Omega^{(q)} Q^{(-)}]}+\sqrt{E_2[\Omega^{(q)} P]}\Big)\Big)\tau^{-1}.
\end{split}
\end{align}

\end{theorem}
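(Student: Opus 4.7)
The plan is to reduce pointwise control of the metric components of $h_{A\alpha}$ in Regge-Wheeler gauge (the only nonvanishing components of $h = h_2 - \pi_G$) to the pointwise decay of the gauge-invariant master quantities $P$, $Q^{(-)}$, $Q_t$, and $\slashed\nabla_{\hat Y} P$ already established in Theorems \ref{decay_r_large}--\ref{QtEstimates}. The starting observation is that in this gauge, $\hat h_{\alpha\beta} = 0$, so
\[
Q^{(-)}_{\alpha\beta} = \mathcal{D}_{\alpha\beta}(r^{A} h_{A\cdot}), \qquad Q_{t,\alpha\beta} = \mathcal{D}_{\alpha\beta}(T^{A} h_{A\cdot}),
\]
where $\mathcal{D}$ is the symmetric traceless gradient on the orbit sphere. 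Moreover, the equation $\mathring\nabla^\alpha h_{A\alpha}=0$ ensures that, for each fixed $A$, $h_{A\cdot}$ is a divergence-free co-vector on $S^2$ supported on $\ell \geq 2$.

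For the $r^{A}h_{A\alpha}$ and $T^{A}h_{A\alpha}$ estimates, the plan is to invert $\mathcal{D}$ on each orbit sphere. On divergence-free co-vectors on the unit sphere supported on $\ell\geq 2$, one has the Poincar\'e inequality $\|v\|_{L^2(S^2)}\leq C\|\mathcal{D}v\|_{L^2(S^2)}$; commuting with the angular Killing fields $\Omega_i$ and applying the Sobolev embedding $H^2(S^2)\hookrightarrow C^0(S^2)$ gives
\[
\sup_{S^2}|v|_{\mathring\sigma}^{2} \;\leq\; C\sum_{(q)\leq 2}\bigl\|\Omega^{(q)}\mathcal{D}v\bigr\|_{L^2(S^2,d\mathring\sigma)}^{2}.
\]
Taking $v = r^{A}h_{A\cdot}$ and converting between sphere and spacetime norms via $|v|_{g}^{2}=r^{-2}|v|_{\mathring\sigma}^{2}$ and $|Q^{(-)}|_{\mathring\sigma}^{2}=r^{4}|Q^{(-)}|_{g}^{2}$, the weights conspire so that the pointwise bound $|\Omega^{(q)}Q^{(-)}|_{g}\leq C\sqrt{E_2[\Omega^{(q)}Q^{(-)}]}\,r^{-1}\tau^{-1/2}$ of Theorem \ref{decay_r_large} yields exactly $|r^A h_{A\cdot}|_{g}\leq C\sum_{(q)\leq 2}\sqrt{E_2[\Omega^{(q)}Q^{(-)}]}\,\tau^{-1/2}$. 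The analogous argument with $Q_t$ and Theorem \ref{QtEstimates} handles $T^Ah_{A\alpha}$. In the radially compact region $r\leq R_{0}$ the improved pointwise decay $|Q^{(-)}|_g,|Q_t|_g\lesssim\tau^{-1}$ is available, which together with $r\leq R_0$ upgrades the rate to $\tau^{-1}$, as claimed in \eqref{metricDecay2}.

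For $\hat Y^{A}h_{A\alpha}$ near the horizon, the previous strategy fails because $r^{A}r_{A}=1-\mu$ vanishes at $\mathcal{H}^{+}$: writing $r^{A}=T^{A}-(1-\mu)\hat Y^{A}$, one sees that the expression $\hat Y^{A}h_{A\cdot}=(1-\mu)^{-1}(T^{A}h_{A\cdot}-r^{A}h_{A\cdot})$ has a singular prefactor. Accordingly, one must extract $\hat Y^{A}h_{A\alpha}$ directly from $P$. Passing to regular ingoing Eddington--Finkelstein coordinates $(v,r)$ in which $\hat Y=-\partial_{r}$, unpacking the definition $P_\alpha=r^{3}\epsilon^{AB}\tilde\nabla_{B}(r^{-2}h_{A\alpha})$ yields a first-order relation of the schematic form $P\sim r\partial_{v}h_{r\cdot}+(\text{terms in }h_{v\cdot})$; paired with its transverse differentiation (controlled by $\slashed\nabla_{\hat Y}P$), this can be solved as a transport-type equation for $\hat Y^{A}h_{A\cdot}=-h_{r\cdot}$ along the horizon generator, integrated from the initial slice. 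Applying the sphere Sobolev inversion argument from the previous step to the resulting expression and using Theorem \ref{decay_r_small} gives the required $\tau^{-1/2}$ rate in terms of $E_2[P]$ and $E_2[\slashed\nabla_{\hat Y}P]$.

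The main obstacle is this last step. Away from the horizon, all three contractions of $h_{A\alpha}$ are controllable by the $(T^A, r^A)$ basis and the estimates are essentially algebraic after the sphere inversion. At the horizon the quotient frame degenerates, and only the combination of $P$ with its transverse derivative $\slashed\nabla_{\hat Y}P$ (made available by the commutator calculation preceding Theorem \ref{decay_r_small}) provides enough information in the $\hat Y$-direction to recover $\hat Y^{A}h_{A\alpha}$ with the right rate. Organizing the transport integration carefully, and keeping the red-shift weights compatible with the $N$-energies appearing in $E_2[\slashed\nabla_{\hat Y}P]$, is where the work of Step 3 lies.
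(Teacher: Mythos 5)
Your treatment of $r^{A}h_{A\alpha}$ and $T^{A}h_{A\alpha}$, including the compact-region improvement, is essentially the paper's proof: in the Regge-Wheeler gauge $Q_{\alpha\beta A}$ reduces to the symmetrized spherical gradient of $h_{A\cdot}$ (traceless by the divergence-free condition), and the paper inverts it exactly as you do, via the Poincar\'e inequality on spheres, commutation with the $\Omega_{i}$, and Sobolev embedding, with the radial weights cancelling so that Theorems \ref{decay_r_large} and \ref{QtEstimates} give the stated rates. That part is fine.

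The gap is in your third step. You correctly diagnose that the $(T^{A},r^{A})$ frame degenerates at the horizon and that $P$ and $\slashed{\nabla}_{\hat{Y}}P$ must supply the missing information, but the mechanism you propose --- unpacking $P_{\alpha}=r^{3}\epsilon^{AB}\tilde\nabla_{B}(r^{-2}h_{A\alpha})$ as a transport equation for $h_{r\cdot}$ and integrating along the horizon generator from the initial slice --- does not close quantitatively. The best available pointwise decay for the source ($P$, or $\slashed{\nabla}_{\hat{Y}}P$ via Theorem \ref{decay_r_small}) is $\tau^{-1}$, which is not integrable: integrating from the initial slice produces $\int_{0}^{v}\tau^{-1}d\tau \sim \log v$, i.e.\ logarithmic growth rather than the claimed $\tau^{-1/2}$ decay, and you have no mechanism (cancellation, perfect-derivative structure, or integration from future infinity) to rescue this. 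Moreover, the transport detour is unnecessary: the paper stays entirely elliptic on the spheres. One contracts the linearized Einstein equation \eqref{one} with $\hat{Y}^{A}$; since $\hat{Y}$ is null, $\epsilon_{AB}\hat{Y}^{B}$ is proportional to $\hat{Y}_{A}$, so
\begin{equation*}
\mathring\nabla^{\beta}\bigl(Q_{\alpha\beta A}\hat{Y}^{A}\bigr) \;=\; -\,\hat{Y}^{A}\epsilon_{AB}\tilde\nabla^{B}(rP_{\alpha})
\end{equation*}
is controlled pointwise by $|\slashed{\nabla}_{\hat{Y}}(rP)|_{g}\lesssim r\,|\slashed{\nabla}_{\hat{Y}}P|_{g}+|P|_{g}$, a manifestly regular quantity at $\mathcal{H}^{+}$. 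Since $\hat{Y}^{A}Q_{\alpha\beta A}$ is again the symmetrized traceless gradient of $\hat{Y}^{A}h_{A\cdot}$, the very same Poincar\'e--Sobolev inversion you already used for the other two components applies verbatim, and Theorem \ref{decay_r_small} then yields $\||\hat{Y}^{A}h_{A\alpha}|_{g}\|_{L^{2}(S^{2})}^{2}\leq C\bigl(E_{2}[P]+E_{2}[\slashed{\nabla}_{\hat{Y}}P]\bigr)\tau^{-2}$ on $r\leq r_{1}$, with no integration in the spacetime at all. Replacing your Step 3 with this contraction argument repairs the proof.
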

\begin{proof}

The normalized solution $h = h_{A\alpha}dx^{A}dx^{\alpha}$ is related to the gauge-invariant quantity $Q_{\alpha\beta A}$ \eqref{Qdef} by
\[Q_{\alpha\beta A}dx^{\alpha}dx^{\beta}dx^{A}  = \left(\mathring\nabla_{\beta} h_{A\alpha} + \mathring\nabla_{\alpha} h_{A\beta}\right)dx^{\alpha}dx^{\beta}dx^{A}.\]  
We estimate
\begin{align*}
\||r^{A}h_{A\alpha}|_{g}\|_{L^2(S^2)}^2&= \int_{S^2} |r^{A}h_{A\alpha}|_g^2 \\
&\leq C\int_{S^2} \left(r^{-2}\mathring\sigma^{\alpha\beta}\mathring\sigma^{\delta\gamma}Q^{(-)}_{\alpha\delta}Q^{(-)}_{\beta\gamma}\right)\\
&\leq Cr^2\||Q^{(-)}|_{g}\|_{L^2(S^2)}^2\\
&\leq C (E_2 [Q^{(-)}])\tau^{-1},
\end{align*}
where we have used a Poincar\'{e} inequality on the sphere and Theorem \ref{decay_r_large} to absorb the large radial weight.  A similar argument gives
the estimate
\[\||T^{A}h_{A\alpha}|_{g}\|_{L^2(S^2)}^2 \leq Cr^2\||Q_{t}|_{g}\|_{L^2(S^2)}^2\leq C (E_2 [Q^{(-)}]+E_2 [P])\tau^{-1}.\]
For the remaining quantity, we have
\[\||\hat{Y}^{A}h_{A\alpha}|_{g}\|_{L^2(S^2)}^2 \leq Cr^2\||\hat{Y}^{A}Q_{\alpha\beta A}|_{g}\|_{L^2(S^2)}^2,\]
as before.  We then use the linearized equation \eqref{one} to relate $\hat{Y}^{A}Q_{\alpha\beta A}$ with $\slashed{\nabla}_{\hat{Y}}P_{\alpha}$ and Theorem \ref{decay_r_small} to deduce the decay estimate with $r \leq r_1$:
\begin{align*}
\||\hat{Y}^{A}h_{A\alpha}|_{g}\|_{L^2(S^2)}^2 &\leq C\||\slashed{\nabla}_{\hat{Y}}P_{\alpha}|_{g}\|_{L^2(S^2)}^2\\ &\leq C\left(E_2 [P] + E_2[\slashed{\nabla}_{\hat{Y}}P]\right)\tau^{-2}
\end{align*}

Restricting to the radially compact region $2M \leq r \leq R_0$, we absorb radial terms into the constant factor and apply Theorem \ref{decay_r_large} to deduce
\begin{align*}
\||r^{A}h_{A\alpha}|_{g}\|_{L^2(S^2)}^2 &\leq Cr^2\||Q^{(-)}|_{g}\|_{L^2(S^2)}^2 \leq C(R_0) (E_2 [Q^{(-)}])\tau^{-2},\\
\||T^{A}h_{A\alpha}|_{g}\|_{L^2(S^2)}^2 &\leq C(R_0) (E_2 [Q^{(-)}]+E_2 [P])\tau^{-2}.
\end{align*}

Commuting with the angular Killing vector fields $\Omega_{i}$ and applying the Sobolev inequality on the sphere, we obtain the desired pointwise bounds. 
\end{proof}

\section{Decoupling of the Zerilli-Moncrief Function}

The remainder of the paper concerns the analysis of the closed solution $h_1$ \eqref{h_1} of $\delta g^{\ell \geq 2}$, primarily accomplished by study of the Zerilli-Moncrief function and the Zerilli equation.  In this section, we define the Moncrief-Zerilli function and the associated Zerilli equation, largely following the treatment of Martel-Poisson \cite{MartelPoisson}.

To begin, we decompose the closed solution into spherical harmonics, with the notation
\begin{align}
\begin{split}\label{projection1}
&h_{AB}^{\ell m} = \tilde{h}_{AB}^{\ell m}Y^{\ell m},\\
&H_{A}^{\ell m} = \tilde{H}_{A}^{\ell m}Y^{\ell m},\\
&H^{\ell m} = \tilde{H}^{\ell m}Y^{\ell m},\\
&H_{2}^{\ell m} = \tilde{H}_2^{\ell m}Y^{\ell m},
\end{split}
\end{align}
with $\tilde{h}_{AB}^{\ell m}, \tilde{H}_{A}^{\ell m}, \tilde{H}^{\ell m},$ and $\tilde{H}_2^{\ell m}$ all objects on the quotient space $\mathcal{Q}$.  The spherical decomposition above is done in $L^2$ on the spheres of symmetry, so that, for example,
$$ H_2(t,r,\theta,\phi) = \sum_{\ell, m} H_2^{\ell m}(t,r,\theta,\phi) = \sum_{\ell, m}  \tilde{H}_2^{\ell m}(t,r)Y^{\ell m}(\theta,\phi)$$
holds for each pair $(t,r)$ in the sense of $L^2(S^2)$.

Defining
\begin{equation}
e_{A}^{\ell m} := \tilde{H}_{A}^{\ell m} - \frac{1}{2}r^2\tilde\nabla_{A}(r^{-2}\tilde{H}_{2}^{\ell m}),
\end{equation}
it is well-known that the quantities
\begin{align}
\begin{split}
&\tilde{k}_{AB}^{\ell m} := \tilde{h}_{AB}^{\ell m} - \tilde\nabla_{A} e_{B}^{\ell m} - \tilde\nabla_{B} e_{A}^{\ell m},\\
&\tilde{K}^{\ell m} := r^{-2}\tilde{H}^{\ell m} + \frac{1}{2}\ell(\ell + 1)r^{-2}\tilde{H}_{2}^{\ell m} -\frac{2}{r}r^{A}e_{A}^{\ell m},
\end{split}
\end{align}
are gauge-invariant under gauge transformations of the closed form \eqref{pi_1}.  

In terms of $\tilde{k}_{AB}^{\ell}$ and $\tilde{K}^{\ell m}$, the gauge-invariant Zerilli-Moncrief function is defined by
\begin{equation}\label{Zdef}
Z^{(+)}_{\ell m} := \frac{2r}{\ell(\ell +1)}\left[\tilde{K}^{\ell m} + \frac{2}{\Lambda}\left(r^{A}r^{B}\tilde{k}_{AB}^{\ell m} -rr^{A}\tilde\nabla_{A}\tilde{K}^{\ell m}\right)\right],
\end{equation}
where
\begin{equation}
\Lambda := (\ell - 1)(\ell + 2) + \frac{6M}{r}.
\end{equation}

For the sake of simplicity, we often suppress $m$-dependence in the spherical harmonics.  Moreover, we encode $\ell$-dependence with the shorthand
\begin{equation}\label{nDefinition}
n := \frac{1}{2}(\ell - 1)(\ell + 2).
\end{equation}

In this notation, the Zerilli-Moncrief function satisfies the well-known Zerilli equation
\begin{equation}\label{ZerilliEqn}
\tilde{\Box} Z^{(+)}_{n} = \tilde{V}^{(+)}_{n} Z^{(+)}_{n},
\end{equation}
with potential
\begin{equation}
\tilde{V}^{(+)}_{n} := \frac{2}{r^3(nr+3M)^2}\left(n^2(n+1)r^3 + 3Mn^2r^2 + 9M^2nr +9M^3\right).
\end{equation}
See the thesis of Martel \cite{Martel} for a derivation of the Zerilli equation for $Z^{(+)}_{n}$.

Rescaling, we define an associated spacetime symmetric traceless two-tensor 
\begin{equation}\label{alphaDef}
Q^{(+)}_{nm} := \left(rZ^{(+)}_{nm}\right)Y_{\alpha\beta}^{\ell m}dx^{\alpha}dx^{\beta}.
\end{equation}

Applying \eqref{TwoTensorWave}, $Q^{(+)}_{n}$ is seen to satisfy the space-time Zerilli equation
\begin{equation}\label{RW3}
\slashed{\Box}_{\mathcal{L}(-2)}Q^{(+)}_{n} = V^{(+)}_{n}Q^{(+)}_{n},
\end{equation}
where
\begin{equation}
V^{(+)}_{n} := \frac{2}{r^2(nr+3M)^2}\left(2n^2r^2 - 2n(2n-3)Mr - 3(2n-3)M^2\right).
\end{equation}
Note that $V^{(+)}_{n} \rightarrow V^{(-)}$, the Regge-Wheeler potential \eqref{RW2}, asymptotically with increasing harmonic number.

This second-order equation on $Q^{(+)}_{n}$, valid for all higher modes $\ell \geq 2$, is our primary tool in analyzing the closed solution.

\subsubsection{Master Quantity in the Closed Portion} 
It is possible to rewrite the master quantity \eqref{alphaDef} for the closed portion in a way that is independent of the harmonic number. 
Let $h_1$ be a closed solution \eqref{h_1}, and consider
\[e_{A} := {H}_{A}- \frac{1}{2}r^2\tilde\nabla_{A}(r^{-2}{H}_{2}),\]
and
\[\begin{split}
&{k}_{AB} := {h}_{AB}- \tilde\nabla_{A} e_{B} - \tilde\nabla_{B} e_{A},\\
&{K}:= r^{-2}{H}- \frac{1}{2}r^{-2}\mathring{\Delta}{H}_{2} -\frac{2}{r}r^{A}e_{A}.
\end{split}\]

In terms of ${k}_{AB}$ and ${K}$, we consider the following gauge-invariant and harmonic-independent quantity
\begin{equation}\label{closed}
{r}{K}+ 2r{\Lambda}^{-1}\left(r^{A}r^{B}{k}_{AB} -rr^{A}\tilde\nabla_{A}{K}\right),
\end{equation}
where ${\Lambda}^{-1}= (-\mathring{\Delta}-2 + \frac{6M}{r})^{-1}$ is interpreted as an integral operator: for functions with higher angular mode $\ell\geq 2$, the operator $-\mathring{\Delta}-2$ is positive and invertible. 

The quantity \eqref{closed} satisfies the Zerilli equation \eqref{ZerilliEqn}. Decay estimates are obtained after recasting the quantity \eqref{alphaDef}, as in Section \ref{master_coclosed}.

\section{Analysis of $Q^{(+)}_{n}$}

In what follows, we produce decay estimates for the $Q^{(+)}_{n}$, using the familiar vector field multiplier method.  See also the work of Johnson \cite{Johnson}.

\subsection{The Stress-Energy Formalism}
Associated with our wave equation is the stress-energy tensor:
\begin{align}
\begin{split}
T_{ab}[Q^{(+)}_{n}] &:= \slashed\nabla_{a}Q^{(+)}_{n}\cdot\slashed\nabla_{b} Q^{(+)}_{n}\\
& - \frac{1}{2}g_{ab}\left(\slashed\nabla^{c}Q^{(+)}_{n}\cdot\slashed\nabla_{c} Q^{(+)}_{n} + V_{n}^{(+)}|Q^{(+)}_{n}|^2\right),
\end{split}
\end{align}
where we emphasize that
\begin{align*}
&\slashed{\nabla}_{a}Q^{(+)}_{n}\cdot\slashed{\nabla}_{b}Q^{(+)}_{n} = g^{\alpha\beta}g^{\gamma\delta}(\slashed\nabla_{a}Q^{(+)}_{n})_{\alpha\gamma}(\slashed\nabla_{b}Q^{(+)}_{n})_{\beta\delta},\\
&|Q^{(+)}_{n}|^2 = g^{\alpha\beta}g^{\gamma\delta}Q^{(+)}_{n\alpha\gamma}Q^{(+)}_{n\beta\delta},
\end{align*}
and
\begin{align*}
\slashed{\nabla}^{c}Q_{n}^{(+)}\cdot\slashed{\nabla}_{c}Q_{n}^{(+)} &= g^{ab}g^{\alpha\beta}g^{\gamma\delta}(\slashed\nabla_{a}Q_{n}^{(+)})_{\alpha\gamma}(\slashed\nabla_{b}Q_{n}^{(+)})_{\beta\delta},\\
&= \slashed{\nabla}^{A}Q_{n}^{(+)}\cdot \slashed{\nabla}_{A}Q_{n}^{(+)}  + |\tilde{\slashed\nabla}Q_{n}^{(+)}|^2,
\end{align*}
with the notation
\begin{equation}\label{angularGradient}
|\tilde{\slashed\nabla}Q_{n}^{(+)}|^2= g^{\eta\nu}g^{\alpha\beta}g^{\gamma\delta}(\slashed\nabla_{\eta}Q_{n}^{(+)})_{\alpha\gamma}(\slashed\nabla_{\nu}Q^{(+)}_{n})_{\beta\delta}
\end{equation}
for the angular gradient.

Applying a vector field multiplier $X^{b}$, we define the energy current
\begin{equation}
J^{X}_{a}[Q^{(+)}_{n}] := T_{ab}[Q^{(+)}_{n}]X^{b}
\end{equation}  
and the density
\begin{equation}
K^{X}[Q^{(+)}_{n}] := \nabla^{a}J^{X}_{a}[Q^{(+)}_{n}] = \nabla^{a}(T_{ab}[Q^{(+)}_{n}]X^{b}).
\end{equation}

As well, we will have occasion to use the weighted energy current
\begin{equation}\label{Jweight}
J^{X,\omega^{X}}_{a}[Q^{(+)}_{n}] := J^{X}_{a}[Q^{(+)}_{n}] + \frac{1}{4}\omega^{X}\nabla_{a}|Q^{(+)}_{n}|^2 -\frac{1}{4}\nabla_{a}\omega^{X}|Q^{(+)}_{n}|^2, 
\end{equation}
with weighted density
\begin{equation}\label{Kweight}
K^{X,\omega^{X}}[Q^{(+)}_{n}] := K^{X}[Q^{(+)}_{n}] + \frac{1}{4}\omega^{X}\Box |Q^{(+)}_{n}|^2 -\frac{1}{4} \Box \omega^{X} |Q^{(+)}_{n}|^2,
\end{equation}
for a suitable scalar weight-function $\omega^{X}$.

The current $J^{X}_{a}[Q^{(+)}_{n}]$ and density $K^{X}[Q^{(+)}_{n}]$ serve as a convenient notation to express the spacetime Stokes' theorem
\begin{equation}
\int_{\mathcal{\partial D}} J^{X}_{a}[Q^{(+)}_{n}]\eta^{a} = \int_{\mathcal{D}} K^{X}[Q^{(+)}_{n}].
\end{equation} 

Likewise,
\begin{equation}
\int_{\mathcal{\partial D}} J^{X,\omega^{X}}_{a}[Q^{(+)}_{n}]\eta^{a} = \int_{\mathcal{D}} K^{X,\omega^{X}}[Q^{(+)}_{n}].
\end{equation} 

The stress-energy tensor $T_{ab}[Q^{(+)}_{n}]$ defined above has non-trivial divergence
\begin{equation}\label{stressDivergence}
\nabla^{a}T_{ab}[Q^{(+)}_{n}] = -\frac{1}{2}\nabla_{b}V^{(+)}_{n}|Q^{(+)}_{n}|^2 + \slashed{\nabla}^{a}Q^{(+)}_{n}[\slashed{\nabla}_{a},\slashed{\nabla}_{b}]Q^{(+)}_{n},
\end{equation}
where we note crucially that the commutator $[\slashed{\nabla}_{a},\slashed{\nabla}_{b}]$ vanishes when contracted with a multiplier invariant under the angular Killing fields $\Omega_{i}$.  In particular, all such multipliers considered in the analysis below have this property.

Finally, we note that, as symmetric traceless two-tensors, the $Q^{(+)}_{n}$ satisfy the Poincar\'{e} inequality\begin{equation}\label{Poincare}
\int_{S^2(r)} |\tilde{\slashed\nabla} Q^{(+)}_{n}|^2 \geq \frac{2}{r^2}\int_{S^2(r)}|Q^{(+)}_{n}|^2,
\end{equation}
owing to the spectrum of the associated spherical Laplacian \eqref{twotensorLaplace}.

\subsection{Degenerate $T$-energy}

The Killing multiplier $T = \partial_{t}$ is a natural starting point in our analysis.  Integrating over the spacetime region bounded by time slices $\{ t = \tau'\}$ and $\{ t = \tau \}$, we obtain the expected conservation law
$$ \int_{\{t = \tau \}} J^{T}_{a}[Q^{(+)}_{n}]\eta^{a} = \int_{\{t = \tau' \}} J^{T}_{a}[Q^{(+)}_{n}]\eta^{a},$$
with $\eta^{a}$ being the appropriate unit normal.

Defining the $T$-energy by
\begin{equation}
E^{T}_{Q^{(+)}_{n}}(\tau) := \int_{\{t = \tau\}} J^{T}_{a}[Q^{(+)}_{n}]\eta^{a},
\end{equation}
our conservation law is nothing more than the statement
\begin{equation}\label{eq: ConservationLaw}
E_{Q^{(+)}_{n}}^{T}(\tau) = E_{Q^{(+)}_{n}}^{T}(\tau')
\end{equation}
for all $\tau$ and $\tau'$.

As the potentials $V^{(+)}_{n}$ are positive on the exterior region, the $T$-energy above is non-negative, degenerating at the event horizon.

\subsection{Red Shift Multiplier}\label{RedShift}

The red shift multiplier $N$, introduced by Dafermos and Rodnianski \cite{DR}, provides a non-degenerate energy.  We recall the details below.

It's convenient to work with the coordinate $(v,R,\theta,\phi)$, where $R=r$ as a function but we use different notation to indicate their coordinate vector fields are different. Consider a vector field $Y$ defined on $\mathcal{H}^+$ by
\begin{equation}
\begin{split}
Y\Big|_{r=2M}&:=-2\frac{\partial}{\partial R},\\
\nabla_Y Y\Big|_{r=2M}&:=-\sigma(T+Y),
\end{split}
\end{equation}
for some $\sigma>0$ to be determined. We have at that $r=2M$ and under the $(v,R,\theta,\phi)$ coordinate,
\[\nabla^{(a}Y^{b)}=\left[ \begin{array}{clclclc} \frac{\sigma}{8} && -\frac{\sigma}{4} && 0 && 0 \\
                                          -\frac{\sigma}{4} && \frac{1}{2M} && 0 && 0 \\
                                          0 && 0 && -\frac{1}{4M^3} && 0 \\
                                          0 && 0 && 0 && -\frac{1}{4M^3\sin^2\theta}
                   \end{array} \right],\]
                   
                  \[ \nabla_a Y^a=-\sigma-\frac{2}{M}.\]
Therefore on the horizon $\mathcal{H}^+$, we have
\begin{equation}
\begin{split}
&K^{Y}[Q^{(+)}_{n}] \\
&= \nabla^a T_{ab}[Q_n^{(+)}]Y^b+T_{ab}[Q_n^{(+)}]\nabla^a Y^b\\
                   &= \frac{\sigma}{8}|\slashed{\nabla}_v Q_n^{(+)}|^2+\frac{1}{2M}|\slashed{\nabla}_R Q_n^{(+)}|^2+\frac{1}{M}\slashed{\nabla}_v Q_n^{(+)}\cdot \slashed{\nabla}_R Q_n^{(+)}\\
                   &+\frac{\sigma}{2}|\tilde{\slashed{\nabla}}Q^{(+)}_n|^2
                   + \left(-\frac{1}{2}(\nabla_Y V_n^{(+)})+\frac{1}{2}\left(\sigma+\frac{2}{M}\right)V_n^{(+)}\right)|Q_n^{(+)}|^2
\end{split}
\end{equation}

Save for the lowest harmonic $n = 2$, the potentials $V_{n}^{(+)}$ are radially increasing functions near the event horizon, and give a positive first term in the $Y$-density above.  We are able to choose $\sigma$ large, to account also for the lowest harmonic $n = 2$, such that the estimate
\begin{equation}
K^{T + Y}[Q^{(+)}_{n}] = K^{Y}[Q^{(+)}_{n}] \geq c T_{ab}[Q^{(+)}_{n}](T+Y)^{a}(T+Y)^{b},
\end{equation}
holds on the horizon and for $c$ a uniform positive constant.

By extending $Y$ smoothly to the exterior region such that $Y$ is non-spacelike and $Y = 0$ as $r \geq R_1$, we obtain 
\begin{align*}
K^{N}[Q^{(+)}_{n}] &\geq{c J^{N}_{a}[Q^{(+)}_{n}]N^{a}}\ &&\textup{for}\ 2M\leq r \leq r_1,\\
J_{a}^{N}[Q^{(+)}_{n}]T^{a} &\sim J_{a}^{T}[Q^{(+)}_{n}]T^{a} \ &&\textup{for}\ r_1 \leq r \leq R_1,\\
|K^{N}[Q^{(+)}_{n}]| &\leq  C|J_{a}^{T}[Q^{(+)}_{n}]T^{a}| \ &&\textup{for}\ r_1 \leq r \leq R_1,\\
N &= T \ &&\textup{for}\ r\geq R_1,
\end{align*}
for some fixed $r_1\in (2M,R_1)$ from a continuity argument. More concretely, we can define $Y$ by first specifying
\begin{equation}\label{hatY}
\hat{Y} := \frac{1}{1-\mu}\partial_{u} = \frac{1}{1-\mu}(\partial_{t} - \partial_{r_{*}}).
\end{equation}

The extension of $Y$ takes the form 
\begin{equation}
Y = f_1(r)\hat{Y} + f_2(r)T,
\end{equation}
where $f_1(r)$ and $f_2(r)$ are non-negative functions with $f_1(2M)=2,\ f_2(2M)=0,\ f'_1(2M)=\sigma,\ f'_2(2M)=\frac{\sigma}{2}$ and $f_1(r)=f_2(r)=0$ for $r\geq R_1$.
\subsection{The Morawetz Multiplier $X$}
Following Morawetz \cite{Morawetz2}, we let $X = f(r)\partial_{r_{*}},$ with $f$ a general radial function, and denote by $\omega^{X}$ a general weight function.  Combining the two, we have the weighted energy current
\begin{equation}
 J^{X,\omega^{X}}_{a}[Q^{(+)}_{n}] := J^{X}_{a}[Q^{(+)}_{n}] + \frac{1}{4}\omega^{X}\nabla_{a}|Q^{(+)}_{n}|^2 -\frac{1}{4}\nabla_{a}\omega^{X}|Q^{(+)}_{n}|^2, 
 \end{equation}
and the weighted density
\begin{equation}
K^{X,\omega^{X}}[Q^{(+)}_{n}] := K^{X}[Q^{(+)}_{n}] + \frac{1}{4}\omega^{X}\Box|Q^{(+)}_{n}|^2 -\frac{1}{4} \Box \omega^{X} |Q^{(+)}_{n}|^2.
\end{equation}

Using the notation $(\,\,)'$ to denote differentiation by the Regge-Wheeler coordinate $r_{*}$, we calculate the unweighted density to be
\begin{align}
\begin{split}
K^{X}[Q^{(+)}_{n}] &= f'|\slashed\nabla_{r_{*}}Q^{(+)}_{n}|^2 + \frac{f}{r}\left(1-\frac{3M}{r}\right)|\tilde{\slashed\nabla}Q^{(+)}_{n}|^2\\
&+\left[\frac{1}{2}(\partial_{r}\mu) fV^{(+)}_{n} -\frac{1}{2}fV^{(+)'}_{n}\right]|Q^{(+)}_{n}|^2\\
&-\frac{1}{4}\left(f' + 2f\frac{1-\mu}{r}\right)\Box |Q^{(+)}_{n}|^2,
\end{split}
\end{align}
where we have used the identity
\begin{equation}\label{productRule}
\Box |Q^{(+)}_{n}|^2 = 2V^{(+)}_{n}|Q^{(+)}_{n}|^2 + 2\slashed{\nabla}^{c}Q^{(+)}_{n}\cdot\slashed\nabla_{c}Q^{(+)}_{n}.
\end{equation}

Inserting the weight function
\begin{equation}
\omega^{X} := f' + 2f\frac{1-\mu}{r},
\end{equation}
we calculate the weighted density
\begin{align}\label{weightedDensity}
\begin{split}
K^{X,\omega^{X}}[Q^{(+)}_{n}] = &f'|\slashed\nabla_{r_{*}}Q^{(+)}_{n}|^2 + \frac{f}{r}\left(1 - \frac{3M}{r}\right)|\tilde{\slashed\nabla} Q^{(+)}_{n}|^2\\
& + \left[\frac{1}{2}(\partial_{r}\mu) fV^{(+)}_{n}-\frac{1}{2}fV^{(+)'}_{n} - \frac{1}{4}\Box\omega^{X}\right]|Q^{(+)}_{n}|^2.
\end{split}
\end{align}

As in the earlier co-closed analysis of \cite{HK}, we utilize the Holzegel multiplier, with
\begin{equation}
f := \left(1-\frac{3M}{r}\right)\left(1 + \frac{M}{r}\right)^2.
\end{equation}

We calculate 
\begin{equation}
f' = \frac{M}{r^2}\left(1+\frac{M}{r}\right)\left(1+\frac{9M}{r}\right)(1-\mu),
\end{equation}
so that the radial and angular terms in \eqref{weightedDensity} above have non-negative sign.  The more complicated portion amounts to
\begin{align}
\begin{split}
&\frac{1}{2 r^8 (3 M + n r)^3}\Big(-4050 M^8 - 162 M^7 (-24 + 29 n) r \\
& -54 M^6 (-24 - 56 n + 37 n^2) r^2  -18 M^5 (51 - 80 n - 26 n^2 + 19 n^3) r^3  \\
  &-M^4 (243 + 630 n - 636 n^2 + 68 n^3) r^4  \\
  &+M^3 (108 - 315 n - 6 n^2 + 152 n^3) r^5  \\
  &+M^2 n (108 - 165 n + 46 n^2) r^6 + M (36 - 49 n) n^2 r^7  +8 n^3 r^8\Big),
\end{split}
\end{align}
a positive quantity save for a spatially compact region away from the photon sphere $r = 3M$.

Integrating over the spherically symmetric spacetime region bounded by time slices $\{ t = \tau'\}$ and $\{ t = \tau \}$ and borrowing from the angular term via the Poincar\'{e} inequality \eqref{Poincare}, we have a base term
\begin{align}
\begin{split}
&\frac{1}{2 r^8 (3 M + n r)^3}\Big(-4050 M^8 - 162 M^7 (-30 + 29 n) r  \\
 &-54 M^6 (-48 - 74 n + 37 n^2) r^2 -18 M^5 (63 - 152 n - 44 n^2 + 19 n^3) r^3  \\
 &-M^4 (675 + 846 n - 1068 n^2 + 32 n^3) r^4 \\
 &+M^3 (216 - 747 n - 78 n^2 + 200 n^3) r^5  \\
 &+M^2 n (216 - 309 n + 38 n^2) r^6 + M (72 - 65 n) n^2 r^7  +12 n^3 r^8\Big),
\end{split}
\end{align}
positive for each harmonic.  Indeed, we obtain a uniform estimate
\begin{equation}
\int_{\{\tau' \leq t \leq \tau\}} \frac{1}{r^3}|Q^{(+)}_{n}|^2 + \frac{1}{r^2}|\slashed\nabla_{r_{*}}Q^{(+)}_{n}|^2 \leq C\int_{\{\tau' \leq t \leq \tau\}} K^{X,\omega^{X}}[Q^{(+)}_{n}],
\end{equation}
with $C$ a constant independent of the harmonic number.

To complete the estimate, we control the boundary terms with our degenerate $T$-energy $E^{T}_{Q^{(+)}_{n}}(\tau)$.  It suffices to estimate each of the pieces of the weighted energy current
 $$ J^{X,\omega^{X}}_{a}[Q^{(+)}_{n}] := J^{X}_{a}[Q^{(+)}_{n}] + \frac{1}{4}\omega^{X}\nabla_{a}|Q^{(+)}_{n}|^2 -\frac{1}{4}\nabla_{a}\omega^{X}|Q^{(+)}_{n}|^2.$$
 
Noting that $\eta^{a} = (1-\mu)^{-1/2}\partial_{t}$, we compute
 $$J^{X}_{a}[Q^{(+)}_{n}]\eta^{a} = f(1-\mu)^{-1/2}\slashed\nabla_{t}Q^{(+)}_{n}\cdot\slashed\nabla_{r_{*}}Q^{(+)}_{n} = f(1-\mu)^{1/2}\slashed\nabla_{t}Q^{(+)}_{n}\cdot\slashed\nabla_{r}Q^{(+)}_{n},$$
 such that
 $$ \Big{|}\int_{\{t = \tau\}} J^{X}_{a}[Q^{(+)}_{n}]\eta^{a} \Big{|} \leq C\int_{\{t = \tau \}} \left[|\slashed\nabla_{t}Q^{(+)}_{n}|^2 + (1-\mu)|\slashed\nabla_{r}Q^{(+)}_{n}|^2\right] \leq CE^{T}_{Q^{(+)}_{n}}(\tau),$$
where we have used uniform boundedness of $f$.  

For the second term, direct computation gives
 $$ \frac{1}{4}\omega^{X}\nabla_{a}|Q^{(+)}_{n}|^2 \eta^{a} = \frac{2r^2-3Mr+3M^2}{2r^3}\left(1+\frac{M}{r}\right)(1-\mu)^{1/2}\slashed\nabla_{t}Q^{(+)}_{n}\cdot Q^{(+)}_{n}.$$
Applying Young's inequality, we deduce 
$$ \Big{|}\int_{\{t = \tau\}} \frac{1}{4}\omega^{X}\nabla_{a}|Q^{(+)}_{n}|^2 \eta^{a} \Big{|} \leq CE^{T}_{Q^{(+)}_{n}}(\tau),$$
with $C$ a universal constant.

As $\omega^{X}$ is purely radial, the third term in the weighted density vanishes upon contraction.  Putting it all together, we bound the weighted energy flux by the initial $T$-energy; that is,
 \begin{equation}
 \Big{|}\int_{\{t = \tau\}} J^{X,\omega^{X}}_{a}[Q^{(+)}_{n}]\eta^{a}\Big{|} \leq C E^{T}_{Q^{(+)}_{n}}(\tau) = CE^{T}_{Q^{(+)}_{n}}(\tau'),
 \end{equation}
 utilizing as well conservation of the $T$-energy.
  
Applying Stokes' theorem, we obtain the following integrated decay estimate
\begin{align}
\begin{split}
&\int_{\{\tau' \leq t \leq \tau\}}\left[\frac{1}{r^2}|\slashed\nabla_{r_{*}}Q^{(+)}_{n}|^2 + \frac{1}{r^3}|Q^{(+)}_{n}|^2\right]\\
&\leq{C\int_{\{\tau' \leq t \leq \tau\}} K^{X,\omega^{X}}[Q^{(+)}_{n}]} \leq {C\int_{\{ t = \tau'\}} J^{T}_{a}[Q^{(+)}_{n}]\eta^{a}} = CE^{T}_{Q^{(+)}_{n}}(\tau'),
\end{split}
\end{align}
with $C$ a universal constant.

\subsection{The Quasi-conformal Multiplier $Z$}

The multiplier we make use of is the analog of the Minkowskian conformal Killing field $Z$ \cite{Morawetz1}, defined to be
\begin{equation}
Z := u^2\partial_{u} + v^2\partial_{v} = \frac{1}{2}(t^2+r_{*}^2)\partial_{t} + tr_{*}\partial_{r_{*}},
\end{equation}
in either the Eddington-Finkelstein or Regge-Wheeler coordinates.  Note that we use the coordinate normalization \eqref{rStar} for $r_{*}$.

We define the $Z$-energy on time-slices $\{ t = \tau \}$ by
\begin{equation}\label{Zenergy}
E^{Z}_{Q^{(+)}_{n}}(\tau) := \int_{\{t = \tau\}} J^{Z}_{a}[Q^{(+)}_{n}]\eta^{a}.
\end{equation}

Using the identity \eqref{productRule}, we compute
\begin{align}
\begin{split}
K^{Z}[Q^{(+)}_{n}] &= t\left(-1 - \frac{\mu r_{*}}{2r} + \frac{r_{*}(1-\mu)}{r}\right)|\tilde{\slashed\nabla}Q^{(+)}_{n}|^2 \\
&- \frac{tr_{*}}{2r}(1-\mu)\Box|Q^{(+)}_{n}|^2 +\Big(\frac{tr_{*}}{r}(1-\mu)V^{(+)}_{n}- \frac{1}{2}V^{(+)}_{n}\nabla_{a}Z^{a}\\
& - \frac{1}{2}Z^{a}\nabla_{a}V^{(+)}_{n}\Big)|Q^{(+)}_{n}|^2,
\end{split}
\end{align}
with the explicit calculations
\[ \nabla_{a}Z^{a} = t(2+\frac{r_{*}}{r}(2-\mu)),\]
\[ Z^{a}\nabla_{a} V^{(+)}_{n} = tr_{*}\partial_{r_{*}}V^{(+)}_{n}.\]

As with the $X$ multiplier, we apply integration by parts, encoded by a weight function $\omega^{Z}$, to swap the d'Alembertian.  With weight function
\begin{equation}
\omega^{Z} = \frac{2tr_{*}}{r}(1-\mu),
\end{equation}
we calculate the weighted density
\begin{align}
\begin{split}
K^{Z,\omega^{Z}}[Q^{(+)}_{n}] &= t\left(-1 - \frac{\mu r_{*}}{2r} + \frac{r_{*}(1-\mu)}{r}\right)|\tilde{\slashed\nabla}Q^{(+)}_{n}|^2 \\
&- \frac{1}{4}\Box\omega^{Z}|Q^{(+)}_{n}|^2+t\Big(\frac{r_{*}}{r}(1-\mu)V^{(+)}_{n}\\
&-\frac{1}{2}V^{(+)}_{n}(2+\frac{r_{*}}{r}(2-\mu))-\frac{1}{2}r_{*}\partial_{r_{*}}V^{(+)}_{n}\Big)|Q^{(+)}_{n}|^2
\end{split}
\end{align}

The first two terms have coefficients independent of the harmonic number, and are well known to be positive near the event horizon and near infinity.  For the third term, we calculate
\begin{align}
\begin{split}
&\frac{r_{*}}{r}(1-\mu)V^{(+)}_{n} -\frac{1}{2}V^{(+)}_{n}(2+\frac{r_{*}}{r}(2-\mu))-\frac{1}{2}r_{*}\partial_{r_{*}}V^{(+)}_{n} \\
&=\frac{2M}{r^4(nr+3M)^3}\Bigg(
 243 M^4 - 162 M^4 n - 162 M^3 r + 351 M^3 n r\\
 & - 162 M^3 n^2 r - 162 M^2 n r^2 + 198 M^2 n^2 r^2 - 48 M^2 n^3 r^2\\
 &- 57 M n^2 r^3 + 52 M n^3 r^3 - 3 n^2 r^4 -14 n^3 r^4\\
 &+ 2\ln(\frac{r-2M}{M})\Big(27 M^4 (-3 + 2 n) + 9 M^3 (3 - 11 n + 6 n^2) r \\
&+ M^2 n (27 - 48 n + 16 n^2) r^2 + 3 M (3 - 4 n) n^2 r^3 + 2 n^3 r^4\Big)\Bigg).
\end{split}
\end{align}
Note that the behavior of this quantity near the event horizon and near infinity is dominated by the logarithmic term, yielding positivity of the quantity in both regimes.  We are able to choose radii $2M < r_2 < R_2 < \infty$ yielding the uniform control
\begin{align}
\begin{split}
K^{Z,\omega^Z}[Q^{(+)}_{n}] &\geq 0\ \textup{as}\ r\leq r_2\ \textup{or}\ r\geq R_2,\\
\Big{|}K^{Z,\omega^Z}[Q^{(+)}_{n}]\Big{|}&\leq Ct\left(|Q^{(+)}_{n}|^2+|\slashed\nabla Q^{(+)}_{n}|^2\right)\ \textup{as}\ r_2\leq r\leq R_2.
\end{split}
\end{align}

With this and the integrated decay estimate for $X$ in hand, we utilize the bootstrap scheme of Dafermos and Rodnianski \cite{DR} to obtain decay estimates for the $Q^{(+)}_{n}$, collected below.

\subsection{Estimates for $Q^{(+)}_{n}$}

In this subsection we collect various estimates following from the vector field multiplier analysis above.  Although there are a variety of boundedness and decay estimates, we present only those relevant for our purposes.

First, we remind the reader of the relevant notation.  Decay is expressed with respect to the foliation of smooth spacelike hypersurfaces $\tilde{\Sigma}_{\tau}$ \eqref{decayFoliation}, characterized by
\begin{align*}
\tau &= t+2M\log(r-2M)+c_0\textup{, for}\ r\leq 3M,\\
 &= t-\sqrt{r^2+1}+c_1\textup{, for}\ r \geq 20M,
\end{align*}
with the specification in the spatially precompact region $3M < r < 20M$ and the choice of constants $c_0$ and $c_1$ made in such a way that $u,v\geq \tau$ on $\tilde{\Sigma}_\tau$.  

Further, we consider the $N$-energy on $\tilde{\Sigma}_{\tau}$ \eqref{NenergyXi}
\[E^{N}_{Q^{(+)}_{n}}(\tilde{\Sigma}_{\tau}) := \int_{\tilde{\Sigma}_{\tau}} J^{N}_{a}[Q^{(+)}_{n}]\eta^{a}_{\tilde{\Sigma}_{\tau}}\]
and the initial energies (\ref{E0def},\ \ref{E1def},\ \ref{E2def})
\begin{align*}
E_0[Q^{(+)}_{n}] &:= \sum_{(q)\leq 2}\int_{\{t = 0 \}} J^{N}_{a}[\mathcal{K}^{(q)}Q^{(+)}_{n}]\eta^{a},\\
E_1[Q^{(+)}_{n}] &:= \sum_{(q)\leq 4} \int_{\{t = 0 \}} (1+r_*^2)J^{N}_{a}[\mathcal{K}^{(q)}Q^{(+)}_{n}]\eta^{a},\\
E_2[Q^{(+)}_{n}] &:= \sum_{(q)\leq 6} \int_{\{t = 0 \}} (1+r_*^2)J^{N}_{a}[\mathcal{K}^{(q)}Q^{(+)}_{n}]\eta^{a},
\end{align*}
where the sum is taken over multi-indices $(q)$ of length $q$ and less, over all Killing commutators $\mathcal{K}$.

\begin{theorem}
Suppose $Q^{(+)}_{n}$ is a solution of \eqref{RW3}, smooth and compactly supported at $\{ t = 0 \}$.  Further, assume that $Q^{(+)}_{n}$ is supported on the harmonic associated with $n$ \eqref{nDefinition}.  Then the $Z$-energy of $Q^{(+)}_{n}$ \eqref{Zenergy} satisfies the uniform bound
\begin{equation}\label{alphaBound}
E^{Z}_{Q^{(+)}_{n}}(\tau) \leq CE_0[Q^{(+)}_{n}].
\end{equation}
Moreover, $Q^{(+)}_{n}$ satisfies the uniform $N$-energy estimate
\begin{equation}\label{alphaNdecay}
E^{N}_{Q^{(+)}_{n}}(\tilde{\Sigma}_{\tau}) \leq CE_1[Q^{(+)}_{n}]\tau^{-2},
\end{equation}
and the uniform decay estimate
\begin{equation}\label{alphaDecay}
\sup_{\tilde{\Sigma}_{\tau}} |Q^{(+)}_{n}| \leq C\sqrt{E_2[Q^{(+)}_{n}]}\tau^{-1}.
\end{equation}
on the family of hypersurfaces $\tilde{\Sigma}_{\tau}$ specified.  Note that the constant $C$ can be regarded as universal, i.e. independent of harmonic number.
\end{theorem}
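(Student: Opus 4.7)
The plan is to prove the three estimates in order, building on the multiplier identities for $T$, $N$, $X$, and $Z$ already assembled in the preceding subsections, following the standard Dafermos-Rodnianski bootstrap scheme applied to the Zerilli equation \eqref{RW3}. The key inputs will be: conservation of the $T$-energy \eqref{eq: ConservationLaw}, the red-shift estimate $K^{N}[Q^{(+)}_n] \geq c J^{N}_a[Q^{(+)}_n]N^a$ near the horizon together with $|K^N|\leq C |J^T_a T^a|$ in the intermediate region, the Holzegel-type integrated decay estimate from the $X$-multiplier, and the sign properties of $K^{Z,\omega^Z}$ (positive near the horizon and near null infinity, with sign failure confined to a spatially compact region $r_2 \leq r \leq R_2$). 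All three final estimates are manifestly uniform in the harmonic number $n$ because every multiplier estimate in this section has been derived with $n$-independent constants.

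For the $Z$-energy bound \eqref{alphaBound}, the strategy is to apply Stokes' theorem to $J^{Z,\omega^Z}$ on the slab bounded by $\{t=0\}$ and $\{t=\tau\}$. On the left we have $E^Z_{Q^{(+)}_n}(\tau)$ up to boundary contributions from $\omega^Z$, which are controlled by $E_0[Q^{(+)}_n]$ via Cauchy-Schwarz and Hardy. On the right the bulk is $\int K^{Z,\omega^Z}$, which is nonnegative outside $\{r_2 \leq r \leq R_2\}$ and in the bad region is bounded by $Ct(|Q^{(+)}_n|^2 + |\slashed\nabla Q^{(+)}_n|^2)$. The key observation is that on $\{r_2 \leq r \leq R_2\}$ the factor $t \sim \tau$ can be exchanged for $N$-energies of $T$-commuted copies; integrating over the slab the bad term is bounded by $C\int_0^\tau E^T[TQ^{(+)}_n](s)\, ds \leq C\tau E_0[Q^{(+)}_n]$ via $T$-conservation, and then absorbed by noting that the $X$-integrated decay gives $\int_0^\tau\int_{\{r_2\leq r\leq R_2\}} (|Q^{(+)}_n|^2 + |\slashed\nabla Q^{(+)}_n|^2) \leq C E^T$. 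Combining yields \eqref{alphaBound}.

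For the $N$-energy decay \eqref{alphaNdecay}, the key mechanism is that the $Z$-flux contains the weighted quantity $(t^2+r_*^2)|\slashed\nabla Q^{(+)}_n|^2$ plus angular terms, so on any late slice the $T$-energy acquires a factor of $\tau^{-2}$ relative to $E^Z$. Concretely, one applies the $X$-integrated decay together with the Dafermos-Rodnianski pigeonhole argument: splitting $[0,\tau]$ into dyadic pieces $[\tau_i, \tau_{i+1}]$ and using that $\int K^{X,\omega^X}$ bounds an integrated local energy, we extract a time $\tau^{*} \in [\tau_i,\tau_{i+1}]$ on which the $T$-flux decays like $E^Z/\tau_i^2 \leq C E_1/\tau_i^2$ (the factor $1+r_*^2$ in $E_1$ converts the weighted $Z$-bound on $\mathcal{K}$-commuted quantities into the initial energy). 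The red-shift estimate then propagates this decay from $\tau^{*}$ back to $\tau_{i+1}$, after commuting with the Killing fields $\mathcal{K}$ to handle the near-horizon region, yielding the claim after iteration. Finally, pointwise decay \eqref{alphaDecay} follows by commuting with the angular Killing fields $\Omega_i$ up to order two, applying \eqref{alphaNdecay} to obtain $H^2(\tilde\Sigma_\tau)$ control of $Q^{(+)}_n$ with $E_2$ on the right, and invoking the Sobolev embedding $H^2 \hookrightarrow L^\infty$ on the three-dimensional slice $\tilde\Sigma_\tau$ (with weights adapted to the asymptotically flat structure near infinity).

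The main obstacle is the bootstrap in the $Z$-multiplier argument: the density $K^{Z,\omega^Z}$ fails to be positive on $\{r_2 \leq r \leq R_2\}$ and carries a linear-in-$t$ weight, so that a naive bound of $\int K^{Z,\omega^Z}$ by $\tau E_0$ would be lost in the subsequent decay argument. Closing the estimate requires that the bad bulk be absorbed into the $X$-based integrated local decay after commutation with $T$, exploiting the fact that the pointwise error is quadratic in derivatives of $Q^{(+)}_n$ in a spatially compact region away from the photon sphere, where the $X$-estimate is coercive. A secondary technical point is ensuring that all constants in the multiplier estimates - in particular the Holzegel identity and the potential-dependent lower order terms - remain uniform across all harmonics $n \geq 2$, including the borderline case $n = 2$ where $V^{(+)}_n$ is not monotone near the horizon; this has been engineered already in the construction of the red-shift vector $Y$ by choosing $\sigma$ sufficiently large.
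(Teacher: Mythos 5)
Your proposal is correct and follows essentially the same route as the paper, which at this point simply invokes the bootstrap scheme of Dafermos--Rodnianski \cite{DR} on top of the $T$-conservation, red-shift, Morawetz $X$-, and quasi-conformal $Z$-multiplier estimates assembled in the preceding subsections, with uniformity in $n$ secured exactly as you note (large $\sigma$ in the red-shift construction and the harmonic-independent positivity of the Holzegel-multiplier density). Your closing paragraph correctly identifies that the naive $\tau E_0$ bound on the bad bulk term must be replaced by the dyadic pigeonhole iteration, which is precisely the content of the cited scheme.
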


In the interests of simplifying the metric-level decay estimates to come, we find it convenient to define the symmetric traceless two-tensor $Q^{(+)}$ by specifying $Q^{(+)} := Q^{(+)}_{nm}$ at the harmonic pair $(n,m)$.  Summing the energy estimates above in $L^2(S^2)$, we have associated estimates for $Q^{(+)}$. 

\section{Analysis of the Closed Solution}

With the estimates in place for the gauge-invariant Zerilli-Moncrief function $Z^{(+)}_{n}$, equivalently $Q^{(+)}_{n}$, we complete the analysis of the closed solution by imposing the Chandrasekhar gauge, in which we prove decay estimates on the linearized metric coefficients of the closed solution.  The analysis amounts to rewriting of these coefficients in terms of the $Z_{n}^{(+)}$ and applying the decay estimates of the previous section.  At the outset, we assume that our closed solution $h_1$ is smooth and compactly supported away from the bifurcation sphere on the $\{ t = 0 \}$ time-slice.

\subsection{The Chandrasekhar Gauge}\label{Chandra_gauge}
We begin by defining a constraint operator $\mathcal{L}$ that is related to this gauge condition.
\begin{definition}
Suppose a symmetric two-tensor $h_1$ on Schwarzschild is of the closed form \eqref{h_1}.
The constraint operator $\mathcal{L}$ on such a symmetric two-tensor is defined by 
\begin{equation}\label{L_operator} \mathcal{L}(h_1)=  \frac{1}{2} \mathring{\Delta} H_2+H_2 -H- (r^2-2Mr) h_{11}.\end{equation}
\end{definition}
\begin{lemma}\label{ChandraGauge} For any $h^*$ of the closed form \eqref{h_1}, there is a co-vector $X$ of the form \eqref{G} such that $h=h^*-\pi_{X}$ is still of  closed form \eqref{h_1} and satisfies
 \[h_{01}=0 = H_{A},\] 
 and the initial gauge condition  $\mathcal{L}(h)=0$ at $t=0$.  
 
Fixing radii $2M < r_0 < R_0 < \infty$, if $h^{*}$ is supported away from the bifurcation sphere at $\{ t = 0\}$, say in the radial region $r_0 < r < R_0$, there exist co-vector fields $X_{\RN{1}}$ and $X_{\RN{2}}$ giving the same reduction and initial condition above, with $h^{*} - \pi_{\RN{1}}$ supported away from the bifurcation sphere in the interval $r_0 < r < \infty$ and $h^{*} - \pi_{\RN{2}}$ supported away from spatial infinity in the interval $2M < r < R_0$ at $t = 0$.
\end{lemma}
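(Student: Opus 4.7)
The plan is to construct the gauge co-vector
\[ X = G_A\,dx^A + (\mathring{\nabla}_\alpha G_2)\,dx^\alpha \]
by reducing the gauge-fixing conditions to a scalar PDE for $G_2$, supplemented by an ODE constraint on its Cauchy data. From Lemma \ref{lemmaG}, imposing $H_0 = H_1 = 0$ pointwise on $h = h^* - \pi_X$ algebraically determines
\[ G_A = H^*_A - \tilde{\nabla}_A G_2 + \frac{2\,\partial_A r}{r}\,G_2, \quad A = 0, 1, \]
leaving $G_2$ as the sole remaining unknown.

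Substituting into $h_{01} = h^*_{01} - (\tilde{\nabla}_0 G_1 + \tilde{\nabla}_1 G_0) = 0$ in Schwarzschild coordinates produces an equation whose principal part is $\partial_t\partial_r G_2$; with $u := \partial_t G_2$ this reduces to the radial transport equation
\[ 2\,\partial_r u - \frac{2(r-M)}{r(r-2M)}\,u = F_1[h^*_{01}, H^*_A], \]
which I identify with the hyperbolic equation \eqref{hyper_eq}. Separately, the condition $\mathcal{L}(h)|_{t=0} = 0$ reduces, after substitution and cancellation of the $\mathring{\Delta}G_2$ contributions coming from $H^{(\pi)}$ and from $\tfrac{1}{2}\mathring{\Delta}H_2^{(\pi)}$, to the second-order, angular-derivative-free ODE in $r$
\[ -2r(r-2M)\,\partial_r^2 G_2 + (2r - 6M)\,\partial_r G_2 + \Big(\tfrac{4M}{r} - 2\Big)\,G_2 = F_2[h^*, H^*_A]\big|_{t=0}, \]
which I identify with the initial equation \eqref{initial_eq}. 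Its homogeneous solution space is spanned, for each angular profile, by $r^{1/2}(r-2M)^{1/2}$ and $r^{1/2}(r-2M)^{1/2}\,p(r)$, matching the residual gauge freedom \eqref{closedgaugefreedom1}.

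The construction of $G_2$ then proceeds in two stages: first solve \eqref{initial_eq} as a radial ODE for $G_2(0,\cdot)$; then for each $t \geq 0$ solve \eqref{hyper_eq} as a first-order ODE in $r$ for $u(t,\cdot)$ with source evaluated on the time-$t$ slice of $h^*$ and $H^*_A$, and set $G_2(t,\cdot) = G_2(0,\cdot) + \int_0^t u(s,\cdot)\,ds$. The $G_A$ are then reconstructed algebraically, and by construction $h = h^* - \pi_X$ is of closed form and satisfies $h_{01} = H_A = 0$ as well as $\mathcal{L}(h)|_{t=0} = 0$.

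For the localized versions, the sources $F_1$ and $F_2$ at $t = 0$ vanish outside $r_0 < r < R_0$ whenever the Cauchy data for $h^*$ do. For $X_{\RN{1}}$ I prescribe the homogeneous data $G_2(0, r_0) = \partial_r G_2(0, r_0) = 0$ and $u(t, r_0) = 0$; uniqueness for the radial ODE initial-value problems, combined with the vanishing on $r < r_0$ of every derivative of the Cauchy data of $h^*$, forces $G_2, G_A$ and their relevant derivatives at $t = 0$ to vanish throughout $r \leq r_0$, so $\pi_{X_{\RN{1}}}$ itself vanishes there. For $X_{\RN{2}}$ the analogous construction integrates inward from $r = R_0$; the principal technical obstacle is ensuring smoothness of $X_{\RN{2}}$ across the horizon $r = 2M$, which is handled by working in the horizon-regular coordinates $(t_*, r)$ of Section 2.
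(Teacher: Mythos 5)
Your proposal is correct and follows essentially the same route as the paper: the same ansatz $X = G_A\,dx^A + (\mathring{\nabla}_\alpha G_2)\,dx^\alpha$ with $G_A$ solved algebraically from $H_A = 0$, the reduction of $h_{01}=0$ to the hyperbolic equation \eqref{hyper_eq} (your transport equation for $u=\partial_t G_2$ is its time-differentiated form, up to an overall sign in the source), the reduction of $\mathcal{L}(h)|_{t=0}=0$ to the radial ODE \eqref{initial_eq} whose homogeneous solutions are exactly the residual gauge \eqref{residual}, and localization by integrating the ODEs from one end with the source vanishing off $r_0<r<R_0$. The only differences are presentational — expanded coefficients with a flipped sign convention, zero data prescribed at the interior point $r_0$ rather than at the horizon, and an added (reasonable but not required by the paper) remark on horizon regularity of $X_{\RN{2}}$.
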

\begin{proof} Let $X=G_A dx^A+(\mathring{\nabla}_\alpha G_2) dx^\alpha $.  It follows from \eqref{pi_1} that $\pi_{X}$ and $h=h^*- \pi_X$ are both of closed form. We deal with the equation $H_{A}=0$ first. From \eqref{G} we deduce that $G_A$ is determined by 
$\tilde{\nabla}_A G_2-2 (r^{-1} \partial_A r) G_2+G_A=H^*_A$, or
\begin{equation}\label{G_A}
\begin{split}
G_0&=H^*_0-\partial_t G_2 \\
G_1&=H^*_1-\partial_r G_2+2 r^{-1} G_2.
\end{split}
\end{equation}

The equation $h_{01}=0$ is equivalent to 
\[\partial_t G_1+\partial_r G_0-2\Gamma_{10}^0 G_0=h^*_{01}.\]

Plugging the expressions for $G_A$ from \eqref{G_A}, we deduce that 
\[2 \partial_t (-\partial_r G_2+ r^{-1} G_2+\Gamma_{10}^0 G_2)=h^*_{01}-\partial_t H^*_1-\partial_r H^*_0+2\Gamma_{10}^0 H^*_0,\] or
\begin{equation}\label{hyper_eq}
\begin{split}
&-2 \partial_t [r^{1/2} (r-2M)^{1/2} \partial_r (r^{-1/2} (r-2M)^{-1/2} G_2)]\\
&=h^*_{01}-\partial_t H^*_1-\partial_r H^*_0+2\Gamma_{10}^0 H^*_0.
\end{split}
\end{equation}
This is a hyperbolic equation for $G_2$ which can be solved by integrating, subject to suitable initial and boundary conditions. 

Rewriting the last component of $\pi_{X}$, we derive
\[\begin{split} \mathcal{L}(\pi_{X})&=2G_2-2 r( \partial^A r) G_A-2 (r^2-2Mr)(\partial_r G_1-\Gamma_{11}^1 G_1)\\
&=2G_2-2(r-M)G_1-2(r^2-2Mr)\partial_r G_1.\end{split}\]

Plugging in \eqref{G_A}, we obtain:
\[\begin{split}\mathcal{L}(\pi_{X})&=2\left(r^2-2Mr\right)\partial_r^2 G_2-{2}\left(r-{3M}\right)\partial_r G_2+{2}\left(1-\frac{2M}{r}\right) G_2\\
&-2(r-M) H^*_1-2 (r^2-2Mr)\partial_r H^*_1\\
&=2 r^2 \partial_r [r^{-1/2} (r-2M)^{3/2} \partial_r (r^{-1/2} (r-2M)^{-1/2} G_2)]\\
&   -2(r-M) H^*_1-2 (r^2-2Mr)\partial_r H^*_1.      \end{split}\]

It suffices to solve the equation \[\mathcal{L}(\pi_{X})=\mathcal{L}(h^*)=\frac{1}{2} \mathring{\Delta} H^*_2+H^*_2 -H^*- (r^2-2Mr) h^*_{11}, \] 
or 
\begin{equation}\label{initial_eq}
\begin{split}
&2r^2 \partial_r [r^{-1/2} (r-2M)^{3/2} \partial_r (r^{-1/2} (r-2M)^{-1/2} G_2)]\\
&=2(r-M) H^*_1+2 (r^2-2Mr)\partial_r H^*_1+ \frac{1}{2} \mathring{\Delta} H^*_2\\
&+H^*_2 -H^*- (r^2-2Mr) h^*_{11} 
\end{split}    
\end{equation} 
on the $t=0$ slice, regarded as a second order inhomogeneous ODE for $G_2$.  Integrating the equation from either the event horizon or spatial infinity, we can determine mutually exclusive solutions corresponding to the co-vector fields $X_{\RN{1}}$ and $X_{\RN{2}}$, with support in large radii ( $r_0 < r < \infty$) and small radii ($2M < r < R_0$) at $t = 0$, respectively.

The ambiguity of $G_2$, and hence of the Chandrasekhar gauge, consists of solutions of the hyperbolic equation
\begin{equation}\label{hyperbolicEquation}
\partial_t [r^{1/2} (r-2M)^{1/2} \partial_r (r^{-1/2} (r-2M)^{-1/2} G_2)]=0,
\end{equation}
with the initial condition 
 \begin{equation}\label{initialCondition}
 \partial_r [r^{-1/2} (r-2M)^{3/2} \partial_r (r^{-1/2} (r-2M)^{-1/2} G_2)]=0
 \end{equation}
 at $t = 0$.  The general solution of this second order ODE is of the explicit form:
   \begin{equation}\label{residual} c_1 r^{1/2} (r-2M)^{1/2} \int_A^r s^{1/2} (s-2M)^{-3/2} ds+ c_2 r^{1/2} (r-2M)^{1/2},\end{equation} for  $c_1=c_1(\theta, \phi), c_2=c_2(t,\theta, \phi) $ and constant $A>2 M$.
By choosing particular solutions corresponding to either of the co-vectors $X_{\RN{1}}$ or $X_{\RN{2}}$, we can resolve this ambiguity by taking the zero solution to the homogeneous equation \eqref{initialCondition}.  We define the associated normalized solutions by
\begin{align}
&h_{\RN{1}} := h^{*} - \pi_{\RN{1}}\label{hI},\\
&h_{\RN{2}} := h^{*} - \pi_{\RN{2}}\label{hII}.
\end{align}
\end{proof}

\begin{definition}\label{chandra_gauge}
A symmetric two-tensor $h_1$ on Schwarzschild is said to be in the Chandrasekhar gauge if $h_1$ takes the form
\begin{equation}\label{pre_polar_h1} h_1=h_{00} (dt)^2+ h_{11} (dr)^2+(H\mathring{\sigma}_{\alpha\beta}+\mathring{\nabla}_{\alpha}\mathring{\nabla}_\beta {H}_2-\frac{1}{2}\mathring{\sigma}_{\alpha\beta} \mathring{\Delta} {H}_2) dx^\alpha dx^\beta\end{equation} and  $\frac{1}{2} \mathring{\Delta} H_2+H_2 -H- (r^2-2Mr) h_{11}=0$ at $t=0$.
\end{definition}

The previous lemma tells us that we can always reduce a closed solution to Chandrasekhar gauge.  Of particular interest are the normalizations $h_{\RN{1}}$ and $h_{\RN{2}}$, with support away from the bifurcation sphere and away from spatial infinity at $t = 0$, respectively.

\subsection{The Linearized Einstein Equations in Chandrasekhar Gauge}
Subsequently, we will assume a Chandrasekhar gauge has been imposed, with smooth linearized metric coefficients.  Following Chandrasekhar's account \cite{Chandra1}, we use a modified Friedman substitution, defining
\begin{align}\label{friedman}
\begin{split}
N &:= -\frac{1}{2}\left(1-\frac{2M}{r}\right)^{-1}h_{00},\\
L &:= \frac{1}{2}\left(1-\frac{2M}{r}\right)h_{11},\\
T&:= \frac{1}{2r^2}H,\\
V&:= \frac{1}{2r^2}H_2.
\end{split}
\end{align}
Decomposing into spherical harmonics \eqref{projection1}, the closed solution takes the form
\begin{align}
\begin{split}
N^{\ell m} &:= -\frac{1}{2}\left(1-\frac{2M}{r}\right)^{-1}\tilde{h}^{\ell m}_{00},\\
L^{\ell m} &:= \frac{1}{2}\left(1-\frac{2M}{r}\right)\tilde{h}^{\ell m}_{11},\\
T^{\ell m}&:= \frac{1}{2r^2}\tilde{H}^{\ell m},\\
V^{\ell m}&:=\frac{1}{2r^2}\tilde{H}_2^{\ell m},
\end{split}
\end{align}
defined as functions on the quotient space $\mathcal{Q}$.

With our gauge fixing and harmonic decomposition, the behavior of the closed solution is contained in the four quotient functions $N^{\ell m}, L^{\ell m}, T^{\ell m},$ and $V^{\ell m}$.  Subsequently, we will suppress the harmonic dependence in the four functions, using simply $N, L, T,$ and $V$.

Rewriting the Ricci equations in terms of Friedman substitution \eqref{friedman}, 
$\delta R_{0\alpha}=0:$
\begin{equation}\label{talpha}
L_t+\frac{1}{2}(\ell+2)(\ell-1)V_t+T_t=0,
\end{equation}
$\delta R_{01}=0:$
\begin{equation}\label{tr}
-\frac{2L_t}{r}+\frac{2(r-3M)}{r(r-2M)}T_t+2T_{tr}=0,
\end{equation}
$\delta R_{1\alpha}=0:$
\begin{equation}\label{ralpha}
\frac{r-M}{r(r-2M)}L+\frac{r-3M}{r(r-2M)}N-N_r-\frac{1}{2}(\ell+2)(\ell-1)V_r-T_r=0,
\end{equation}
$\delta G_{11}=0:$
\begin{align}\label{rr}
\begin{split}
-&\frac{2}{r(r-2M)}L-\frac{\ell(\ell+1)}{r(r-2M)}N-\frac{\ell(\ell+1)(\ell+2)(\ell-1)}{2r(r-2M)}V\\
-&\frac{(\ell+2)(\ell-1)}{r(r-2M)}T+\frac{2}{r}N_r+\frac{2(r-M)}{r(r-2M)}T_r-\frac{2r}{(r-2M)^2}T_{tt}=0,
\end{split}
\end{align}
$\delta R_{00}=0:$
\begin{align}\label{redundant}
\begin{split}
-&\frac{\ell(\ell+1)(r-2M)}{r^3}N+\frac{M(r-2M)}{r^3}L_r+\frac{(r-2M)(2r-M)}{r^3}N_r\\
&+\frac{2M(r-2M)}{r^3}T_r +\frac{(r-2M)^2}{r^2}N_{rr}-L_{tt}+2T_{tt}=0,
\end{split}
\end{align}
$\delta R_{\alpha\beta}\mathring\sigma^{\alpha\beta}=0:$
\begin{align}\label{trace}
\begin{split}
&\frac{\ell^2+\ell+4}{2}L+\frac{1}{2}\ell(\ell+1)N+\frac{1}{2}\ell(\ell+1)(\ell+2)(\ell-1)V\\
&+(\ell+2)(\ell-1)T+(r-2M)L_r-(r-2M)N_r-2(2r-3M)T_r\\
&-r^2(r-2M)T_{rr}+\frac{r^3}{r-2M}T_{tt}=0,
\end{split}
\end{align}
$\widehat{\delta R}_{\alpha\beta}=0:$
\begin{align}\label{traceless}
-(L+N)-2(r-M)V_r-r(r-2M)V_{rr}+\frac{r^2}{r-2M}V_{tt}=0.
\end{align}

\begin{proposition} \label{gaugefreedom} Suppose $h$ is a closed solution of the linearized vacuum Einstein equations, smooth and compactly supported away from the bifurcation sphere at $\{ t = 0\}$.  For the normalizations $h_{\RN{1}}$ and $h_{\RN{2}}$ defined above (\ref{hI},\ref{hII}), expressed in the Friedman substitution, the following equations hold:
\begin{align}
&L+\frac{1}{2}(\ell+2)(\ell-1)V+T = 0,\label{polarone}\\
&T_r = -\frac{(\ell+2)(\ell-1)}{2r}V-\frac{2r-5M}{r(r-2M)}T.\label{polartwo}
\end{align}
\end{proposition}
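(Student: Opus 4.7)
The proof handles \eqref{polarone} and \eqref{polartwo} in turn. For \eqref{polarone}, I translate the Chandrasekhar initial condition $\mathcal{L}(h)=0$ at $t=0$ into Friedman variables. Projecting onto the spherical harmonic $Y^{\ell m}$ and using $\mathring{\Delta}Y^{\ell m}=-\ell(\ell+1)Y^{\ell m}$ together with the substitutions $H=2r^2 T$, $H_2=2r^2 V$, and $(r^2-2Mr)h_{11}=2r^2 L$, the initial condition reduces to
\[
\mathcal{L}(h) = -2r^2\Bigl(L+T+\tfrac{(\ell-1)(\ell+2)}{2}V\Bigr)Y^{\ell m} = 0,
\]
so \eqref{polarone} holds at $t=0$. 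Equation \eqref{talpha} asserts that the $t$-derivative of the same combination vanishes identically, so integrating in $t$ propagates \eqref{polarone} to all times.

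For \eqref{polartwo}, define
\[
F := T_r + \tfrac{(\ell-1)(\ell+2)V}{2r} + \tfrac{(2r-5M)T}{r(r-2M)}.
\]
The plan is to establish $F\equiv 0$ in two steps: $F$ is $t$-independent, and $F$ vanishes at $t=0$. For the first, differentiate $F$ in $t$ and substitute $T_{rt}=\tfrac{L_t}{r}-\tfrac{(r-3M)T_t}{r(r-2M)}$ from \eqref{tr}. The coefficients of $T_t$ collapse via $\tfrac{-(r-3M)+(2r-5M)}{r(r-2M)}=\tfrac{1}{r}$, giving
\[
F_t = \tfrac{1}{r}\Bigl(L_t+\tfrac{(\ell-1)(\ell+2)}{2}V_t+T_t\Bigr) = 0
\]
by \eqref{talpha}. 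Hence $F=F(r)$ depends only on $r$.

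For the vanishing at $t=0$, the support conditions give $F(r)=0$ on a non-empty radial interval: $h_{\RN{1}}$ vanishes on $(2M,r_0]$ at $t=0$, while $h_{\RN{2}}$ vanishes on $[R_0,\infty)$ at $t=0$, and in either case all components of the solution (and their radial derivatives) vanish on the corresponding interval. To extend $F\equiv 0$ to every radius, I combine $\partial_r$ of \eqref{polarone}, which yields $L_r=-T_r-\tfrac{(\ell-1)(\ell+2)}{2}V_r$, with \eqref{ralpha}, \eqref{traceless}, and \eqref{rr} to derive a linear homogeneous first-order ODE in $r$ for $F(r)$ at $t=0$. The variable $L$ is eliminated via \eqref{polarone}; the variable $N$ is eliminated via \eqref{traceless} after taking its radial derivative; and the $T_{tt}$, $V_{tt}$ terms appearing through \eqref{rr} and \eqref{traceless} are consistently handled using \eqref{tr} and the established $t$-derivative of \eqref{polarone}. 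ODE uniqueness together with the vanishing on the boundary interval then forces $F\equiv 0$ throughout $(2M,\infty)$ at $t=0$, hence for all $t$ by time-independence.

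The principal obstacle is the algebraic reduction in the last step. Disentangling the four quantities $L$, $N$, $T$, $V$ and their derivatives from the coupled linearized Einstein equations, while tracking the second-order time derivatives, is the most delicate part of the argument; once the ODE for $F(r)$ is in hand, propagation from the vanishing interval is immediate.
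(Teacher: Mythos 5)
Your proposal is correct and follows essentially the same route as the paper: \eqref{polarone} is obtained from the initial constraint $\mathcal{L}(h)=0$ at $t=0$ propagated in time by \eqref{talpha}, and \eqref{polartwo} by showing via \eqref{tr} and \eqref{polarone} that the defect $F$ is a time-independent radial function (the paper's $\tfrac{1}{2}R_2(r)$), which is then annihilated by a homogeneous first-order radial ODE combined with the support assumption on $h_{\RN{1}}$, $h_{\RN{2}}$. The algebraic reduction you flag as the principal obstacle is carried out in the paper by solving \eqref{rr} and \eqref{ralpha} for $N_r$ and $(\ell+2)(\ell-1)V_r$ and verifying that each of the remaining equations \eqref{redundant}, \eqref{trace}, \eqref{traceless} becomes $\tfrac{r-2M}{2}R_{2,r}+\tfrac{r-M}{r}R_2=0$ (explicit solution $C\,r^{-1}(r-2M)^{-1}$, nowhere vanishing unless $C=0$), confirming that the ODE you anticipate does close.
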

\begin{proof}
The equation \eqref{talpha} and the initial constraint $\mathcal{L}(h) = 0$ at $t = 0$ yield \eqref{polarone}:\begin{equation}\label{algebraicConstraint}
L+\frac{1}{2}(\ell+2)(\ell-1)V+T = 0.
\end{equation}

Integrating \eqref{tr} and substituting via \eqref{algebraicConstraint}, we deduce
\begin{equation}
T_r = -\frac{(\ell+2)(\ell-1)}{2r}V-\frac{2r-5M}{r(r-2M)}T + \frac{1}{2}R_2(r),
\end{equation}
with $R_2$ an as yet unspecified radial function.

Rewriting \eqref{rr} and \eqref{ralpha} respectively, we obtain
\begin{align*}
N_r=&\frac{\ell(\ell+1)}{2(r-2M)}N+\frac{(\ell+2)(\ell-1)(\ell(\ell+1)r-2M)}{4r(r-2M)}V\\
&+\frac{\ell(\ell+1)r^2-2(\ell^2+\ell+3)Mr+10M^2}{2r(r-2M)^2}T+\frac{r^3}{(r-2M)^2}T_{tt}\\
&+\frac{r-M}{2(r-2M)}R_2(r),
\end{align*}
\begin{align*}
(\ell+2)(\ell-1)V_r=&-\frac{(\ell+2)(\ell-1)r+6M}{r(r-2M)}N-\frac{(\ell(\ell+1)(\ell+2)(\ell-1))}{2(r-2M)}V\\
              &+\frac{-(\ell+2)(\ell-1)r^2+2(\ell^2+\ell-3)Mr+6M^2}{r(r-2M)^2}T\\
              &-\frac{2r^3}{(r-2M)^2}T_{tt}-\frac{M}{r-2M}R_2(r).
\end{align*}

Assuming these four equations, each of the last three equations \eqref{redundant}, \eqref{trace}, \eqref{traceless} is equivalent to
\begin{equation}\label{r1r2}
\frac{(r-2M)}{2}	R_{2,r}+\frac{(r-M)}{r}R_2=0.
\end{equation}

Analysis of the radial ODE \eqref{r1r2} and our assumption of support in large radii ($h_{\RN{1}}$) or small radii ($h_{\RN{2}}$) yield $R_2(r) = 0$.
\end{proof}
We have arrived at a situation analogous to that in Chandrasekhar \cite{Chandra1}; namely, we have the two equations \eqref{polarone} and \eqref{polartwo}, together with
\begin{align}
\begin{split}\label{polarthree}
&N_r=\frac{\ell(\ell+1)}{2(r-2M)}N+\frac{(\ell+2)(\ell-1)(\ell(\ell+1)r-2M)}{4r(r-2M)}V\\
&+\frac{\ell(\ell+1)r^2-2(\ell^2+\ell+3)Mr+10M^2}{2r(r-2M)^2}T+\frac{r^3}{(r-2M)^2}T_{tt},
\end{split}
\end{align}
\begin{align}
\begin{split}\label{polarfour}
&(\ell+2)(\ell-1)V_r=-\frac{(\ell+2)(\ell-1)r+6M}{r(r-2M)}N\\
              &+\frac{-(\ell+2)(\ell-1)r^2+2(\ell^2+\ell-3)Mr+6M^2}{r(r-2M)^2}T\\
               &-\frac{2r^3}{(r-2M)^2}T_{tt}-\frac{(\ell(\ell+1)(\ell+2)(\ell-1))}{2(r-2M)}V.
\end{split}
\end{align}

\begin{lemma}
For either of the normalized solutions $h_{\RN{1}}$ and $h_{\RN{2}}$, the Zerilli-Moncrief function \eqref{Zdef} takes the form
\begin{equation}\label{ChandraZ}
Z^{(+)} = rV + \frac{r^2}{nr+3M}T.
\end{equation}
\end{lemma}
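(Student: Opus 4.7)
The plan is a direct computation of the Zerilli-Moncrief function \eqref{Zdef} in the Chandrasekhar gauge, followed by algebraic simplification using the constraints derived in Proposition \ref{gaugefreedom}. First I would exploit the gauge conditions $h_{01}=0$ and $H_A=0$, together with the Friedman substitution \eqref{friedman}, to rewrite the auxiliary objects $e_A, \tilde k_{AB}, \tilde K$ in terms of $N, L, T, V$. Since $H_A=0$, the quantity $e_A=-\tfrac{1}{2}r^2\tilde\nabla_A(r^{-2}H_2)$ reduces to $e_A=-r^2\partial_A V$; using $r^A=(0,1-\mu)$ one then obtains
\[
\tilde K \;=\; 2T+\ell(\ell+1)V+2r(1-\mu)V_r,
\]
and a similar direct calculation gives $r^A r^B\tilde k_{AB}=(1-\mu)^2\tilde k_{11}$ in terms of $L, V_r, V_{rr}$, with the quotient Christoffel symbol $\Gamma^1_{11}=-M/(r(r-2M))$ contributing an explicit $M$-term.

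Second, I would compute $r\,r^A\tilde\nabla_A\tilde K = r(1-\mu)\tilde K_r$ by differentiating the expression for $\tilde K$ above, noting the useful identity $\partial_r(r(1-\mu))=1$. This produces a combination of $T_r, V_r, V_{rr}$. Substituting these pieces into the bracket
\[
\tilde K+\frac{2}{\Lambda}\bigl(r^A r^B\tilde k_{AB}-r\,r^A\tilde\nabla_A\tilde K\bigr),
\]
with $\Lambda=2n+6M/r$, yields an expression polynomial in $N, L, T, T_r, V, V_r, V_{rr}$ (with $N$ in fact absent, since none of $\tilde K, \tilde k_{11}$ involve $h_{00}$).

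Third, the Chandrasekhar-gauge identities \eqref{polarone} and \eqref{polartwo} do the remaining work. The algebraic relation $L=-nV-T$ eliminates $L$ at once. The transport equation $T_r=-\tfrac{n}{r}V-\tfrac{2r-5M}{r(r-2M)}T$ eliminates $T_r$ from $\tilde K_r$. After these substitutions the $V_r$ and $V_{rr}$ terms must cancel identically (this is the main consistency check that encodes the gauge-invariance of $Z^{(+)}$); the remaining combination, after multiplication by $\frac{2r}{\ell(\ell+1)}=\frac{r}{n+1}$, should collapse to
\[
Z^{(+)}=rV+\frac{r^2}{nr+3M}\,T.
\]

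The main obstacle is purely organizational: tracking the coefficients of $V_r$ and $V_{rr}$ through the computation so that their cancellation is visible, and verifying that the $T$-coefficient simplifies to $r^2/(nr+3M)$. This relies crucially on the factor $\Lambda^{-1}=(2n+6M/r)^{-1}=r/(2(nr+3M))$ appearing in the Zerilli-Moncrief formula, which is precisely what produces the characteristic Zerilli denominator; in particular, the $6M/r$ correction in $\Lambda$ is what distinguishes the closed case from the co-closed case in Section \ref{master_coclosed}. Once the cancellations are completed the identification \eqref{ChandraZ} is immediate, and no further input from the linearized Einstein equations beyond \eqref{polarone}--\eqref{polartwo} is needed, so the same calculation applies uniformly to both normalizations $h_{\RN{1}}$ and $h_{\RN{2}}$.
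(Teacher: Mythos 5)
Your proposal is correct and follows essentially the same route as the paper's proof: express $\tilde{K}$ and $r^Ar^B\tilde{k}_{AB}$ in the Chandrasekhar gauge via the Friedman substitution (with $e_A=-r^2\partial_A V$ and $r^A=(0,1-\mu)$), observe that the $V_r$ and $V_{rr}$ terms cancel in the bracket, and eliminate $L$ and $T_r$ using \eqref{polarone} and \eqref{polartwo}, which hold for both $h_{\RN{1}}$ and $h_{\RN{2}}$ by Proposition \ref{gaugefreedom}. The only minor point is ordering: the $V_{rr}$ and $V_r$ cancellation is a purely algebraic identity in the combination $r^Ar^B\tilde{k}_{AB}-rr^A\tilde\nabla_A\tilde{K}$ (indeed the $V_r$ coefficient is exactly $-(r-2M)\Lambda$, which the $2/\Lambda$ factor absorbs against the $V_r$ term in $\tilde K$), occurring before and independently of the gauge identities \eqref{polarone}--\eqref{polartwo}, which are needed only for $L$ and $T_r$, as your parenthetical remark about gauge-invariance correctly anticipates.
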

\begin{proof}
We recall the definition of $Z^{(+)}$ (omitting the $\ell m$ index)
\begin{equation}
Z^{(+)} := \frac{2r}{\ell(\ell +1)}\left[\tilde{K} + \frac{2}{\Lambda}\left(r^{A}r^{B}\tilde{k}_{AB} -rr^{A}\tilde\nabla_{A}\tilde{K}\right)\right].
\end{equation}
In the Chandrasekhar gauge, $\tilde{K}$ and  $r^{A}r^{B}\tilde{k}_{AB}$ can be expressed  in terms of $L, T, V$ in \eqref{friedman}, in particular:
\[\tilde{K}=2T+\ell(\ell +1) V+2(r-2M) V_r,\] and \[r^{A}r^{B}\tilde{k}_{AB}=2 (1-\frac{2M}{r})L +2(1-\frac{2M}{r})^2[\partial_r(r^2 V_r)+\frac{Mr}{r-2M} V_r].\]

Plugging these in the expression of $Z^{(+)}$, one checks that all $V_{rr}$ and $V_r$ terms are cancelled, while $L$ and $T_r$ can be substituted using \eqref{polarone} and \eqref{polartwo}, respectively. Collecting the coefficients of $V$ and $T$, we arrive at 
\eqref{ChandraZ}. 
\end{proof}

Indeed, the decoupling of $Z^{(+)}$ can be verified directly using the equations above; see Chandrasekhar \cite{Chandra1}.

\subsection{Metric Coefficients in Terms of $Z^{(+)}$}

Briefly, we describe the procedure for extending our decay estimates to the linearized metric coefficients.  At the outset, we assume that our closed solution $h_1$ is smooth and compactly supported away from the bifurcation sphere on the time-slice $\{ t= 0 \}$.  Choosing gauge co-vectors $X_{\RN{1}}$ and $X_{\RN{2}}$ so that the normalized solutions $h_{\RN{1}}$ \eqref{hI} and $h_{\RN{2}}$ \eqref{hII} are supported away from the bifurcation sphere and spatial infinity on the time slice $\{ t = 0 \}$, respectively, we are able to show decay of the resulting closed solutions in certain radial regimes.  Taking an interpolation of the two gauge choices, it remains to consider a compact radial region, wherein we lose our diagonalization but are nontheless able to estimate the solution.

In expressing the metric-level quantities $N, L, T,$ and $V$ in terms of $Z^{(+)}$, we find it useful to introduce the quantities
\begin{equation}\label{phioneDef}
\Phi_{\RN{1}}^{nm}(t,r) := n\sqrt{1-\frac{2M}{r}}\int_{2M}^{r} {\frac{Z_{nm}^{(+)}(t,r')}{\sqrt{1-\frac{2M}{r'}}(nr'+3M)}dr'},
\end{equation}
\begin{equation}\label{phitwoDef}
\Phi_{\RN{2}}^{nm}(t,r) := -n\sqrt{1-\frac{2M}{r}}\int_r^{\infty} {\frac{Z_{nm}^{(+)}(t,r')}{\sqrt{1-\frac{2M}{r'}}(nr'+3M)}dr'}.
\end{equation}

In what follows, we continue our convention of suppressing harmonic dependence.
As mentioned above, we can choose a co-vector field $X_{\RN{1}}$ in such a way that the normalized solution $h_{\RN{1}}$ is supported away from the bifurcation sphere at $t = 0$.  For such a solution, the following pullback argument applies.

Using the expression for $Z^{(+)}$ in the Chandrasekhar gauge \eqref{ChandraZ}, we can rewrite the equation \eqref{polartwo} as an equation in $T$ and $Z^{(+)}$.  Integrating, we find
\begin{equation}\label{Tpullback}
T(t,r) = -\frac{1}{r^2}(nr+3M)\Phi_{\RN{1}}(t,r).
\end{equation}
Substituting for $T$ in the definition of $Z^{(+)}$, we obtain the relation
\begin{equation}\label{Vpullback}
V(t,r) = \frac{1}{r}(Z^{(+)}(t,r) + \Phi_{\RN{1}}(t,r)).
\end{equation}
Substituting these relations for $T$ and $V$, the algebraic equation \eqref{polarone} yields
\begin{equation}\label{Lpullback}
L(t,r) = -\frac{n}{r}Z^{(+)}(t,r) + \frac{3M}{r^2}\Phi_{\RN{1}}(t,r).
\end{equation}
Finally, using \eqref{polarfour} we find
\begin{align}\label{Npullback}
\begin{split}
N(t,r) &= -\frac{nr}{nr+3M}\partial_{r_{*}}Z^{(+)}(t,r) \\
&- \frac{n}{(nr+3M)^2}\left[\frac{6M^2}{r}+3Mn+n(n+1)r\right]Z^{(+)}(t,r)\\ 
&+ \left(M -\frac{M^2}{r-2M}\right)\frac{\Phi_{\RN{1}}(t,r)}{r^2} + \frac{r^2}{r-2M}\partial_{t}^2\Phi_{\RN{1}}(t,r).
\end{split}
\end{align}

We emphasize that the components $N, L, T, V$ above are those associated with the $h_{\RN{1}}$.  The pullback is entirely analogous for components of the normalized solution $h_{\RN{2}}$ generated by $X_{\RN{2}}$, with $\Phi_{\RN{1}}$ replaced by $\Phi_{\RN{2}}$.

\subsection{Decay of the $\Phi$}
Having written each of the linearized metric coefficients $N, L, T$, and $V$ in terms of $Z^{(+)}$ and the $\Phi$, it remains to prove decay of each.  Decay of $Z^{(+)}$ is the subject of the previous section, and, as we shall see, these results also lead to the decay of the $\Phi$.  Note that we suppress the $m$-dependence, consistent with the estimates for the $Q^{(+)}_{n}$.

From the previous analysis \eqref{alphaBound}, we know that the $Z$-energy of the $Q^{(+)}_{n}$ through time-slices is bounded.  Underestimating the $Z$-energy, we find
\begin{align*}
CE_0[Q^{(+)}_{n}] \geq E^{Z}_{Q^{(+)}_{n}}(\tau) &\geq \int_{\{t = \tau\}} \sqrt{1-\frac{2M}{r}}\frac{u^2 + v^2}{r^2}|Q^{(+)}_{n}|^2\\
&= \int_{2M}^{\infty}\int_{S^2}(u^2 + v^2)|Q^{(+)}_{n}|^2 dr d\mathring\sigma\\
&=\int_{2M}^{\infty}\int_{S^2}\frac{u^2+v^2}{r^2}|Z^{(+)}_{n}|^2 |Y_{\alpha\beta}^{\ell m}|^2 dr d\mathring\sigma\\
&=\int_{2M}^{\infty} \frac{u^2 + v^2}{r^2}n^2|Z^{(+)}_{n}|^2 dr\\
&\geq\int_{2M}^{\infty} \frac{u^2 + v^2}{r^2}|Z^{(+)}_{n}|^2 dr.
\end{align*}

From this and the definition of the $\Phi$ (\ref{phioneDef}, \ref{phitwoDef}), we deduce the decay estimate
\begin{align*}
&|\Phi^{n}(t, r)| \\
&\leq C\sqrt{1-\frac{2M}{r}}\int_{2M}^{\infty}\left|\frac{Z_{n}^{(+)}}{r\sqrt{1-\frac{2M}{r}}}\right| dr\\
&\leq C\sqrt{1-\frac{2M}{r}}\left(\int_{2M}^{\infty} \frac{u^2+v^2}{r^2}|Z_{n}^{(+)}|^2 dr\right)^{1/2}\left(\int_{2M}^{\infty}\frac{1}{1-\frac{2M}{r}} \frac{1}{u^2 + v^2} dr\right)^{1/2}\\
&\leq C\sqrt{1-\frac{2M}{r}}\sqrt{E_0[Q^{(+)}_{n}]}\left(\int_{-\infty}^{\infty} \frac{1}{t^2 + r_{*}^2}dr_{*}\right)^{1/2}
\end{align*}
such that
\begin{equation}\label{phiDecay}
|\Phi^{n}(t,r)|\leq C\sqrt{1-\frac{2M}{r}}\sqrt{E_0[Q^{(+)}_{n}]}t^{-1/2},
\end{equation}
decaying through the \emph{time slices}.  Note that this implies decay through the foliation \eqref{decayFoliation} above.

\subsection{Decay of the Metric Coefficients}

We illustrate decay of the metric coefficients by considering $L$; the arguments for $T$ and $V$ are entirely analogous.

Integrating on the unit sphere, we find
\begin{align}\label{Lprelim}
\begin{split}
\int_{S^2}|L^{\ell m}|^2 &\leq 2\left(\int_{S^2}\Big{|}\frac{nZ^{(+)}_{nm}}{r}\Big{|}^2 + \frac{9M^2}{r^4}\int_{S^2} |\Phi_{nm}|^2\right)\\
&\leq \int_{S^2}|Q^{(+)}_{nm}|^2 + \frac{9M^2}{r^4}\int_{S^2}|\Phi_{nm}|^2\\
&\leq C\left(E_2[Q^{(+)}_{nm}]\tau^{-2} + E_0[Q^{(+)}_{nm}]\tau^{-1}\right).
\end{split}
\end{align}
Summing the spherical harmonics, we deduce
\begin{equation}\label{Lsum}
\begin{split}
\int_{S^2}|L|^2 &= \sum_{\ell m} \int_{S^2} |L^{\ell m}|^2\\
&\leq C\sum_{\ell m}\left(E_2[Q^{(+)}_{nm}]\tau^{-2} + E_0[Q^{(+)}_{nm}]\tau^{-1}\right)\\
&\leq CE_2[Q^{(+)}]\tau^{-1}.
\end{split}
\end{equation}

Such an $L^2$-estimate is also possible upon applying the angular Killing operators $\Omega_{i}$; that is, an analogous estimate to \eqref{Lsum} holds for $\Omega_{i}L$, $\Omega_{i}\Omega_{j}L$, etc.  Applying Sobolev embedding on the spheres, we deduce the pointwise decay estimate
\begin{equation}
\sup_{\tilde{\Sigma}_{\tau}} |L| \leq C \sum_{(q)\leq 2} \sqrt{E_2[\Omega^{(q)}Q^{(+)}]}\tau^{-1/2}
\end{equation}
through the standard decay foliation \eqref{decayFoliation}.  Decay of the components $T$ and $V$ is similar.  Overall, we find the pointwise decay estimates
\begin{align}
&\sup_{\tilde{\Sigma}_{\tau}} |L| \leq C \left(\sum_{(q)\leq 2} \sqrt{E_2[\Omega^{(q)}Q^{(+)}]}\right)\tau^{-1/2},\\
&\sup_{\tilde{\Sigma}_{\tau}} |T| \leq C \left(\sum_{(q)\leq 2} \sqrt{E_0[\Omega^{(q)}Q^{(+)}]}\right)\tau^{-1/2},\\
&\sup_{\tilde{\Sigma}_{\tau}} |V| \leq C\left(\sum_{(q)\leq 2} \sqrt{E_2[\Omega^{(q)}Q^{(+)}]}\right)\tau^{-1/2}
\end{align}
through the foliation \eqref{decayFoliation}.  Note that such estimates hold for either of the normalized solutions $h_{\RN{1}}$ or $h_{\RN{2}}$.

The remaining component $N$ is much less straightforward to estimate.  Fixing radii $2M < r_0 < R_0 < \infty$, we are able to recover decay estimates for the normalized solutions in associated radial regimes.

\subsubsection{The region $2M < r < R_0$}

We consider the closed solution $h_{\RN{1}}$ \eqref{hI} in the radial region $2M < r < R_0$.  Rewriting the last term in the expression for $N$ \eqref{Npullback} by bringing the partial derivatives under the integral, applying the Zerilli equation for $Z^{(+)}$ \eqref{ZerilliEqn}, and integrating by parts, we are left with
\begin{align}
\begin{split}
N = &-\frac{n\left(9M^3 + 3Mnr(1+(-1+n)r)\right)}{r(nr+3M)^2}Z^{(+)}\\
&-\frac{n\left(nr^2(1+n-nr)+6M^2(1+2nr)\right)}{r(nr+3M)^2}Z^{(+)}\\
&+ \frac{M}{r^2}\Phi_{\RN{1}} - \frac{M^2}{r^3(r-2M)}\Phi_{\RN{1}}\\
&- \frac{r}{\sqrt{1-\frac{2M}{r}}}\int_{2M}^{r}g(r')\frac{Z^{(+)}}{\sqrt{1-\frac{2M}{r'}}(nr'+3M)}dr',
\end{split}
\end{align}
where
\begin{equation}\label{g_r}
g(r) := \frac{2nr^2 -4M(-1+n)r - 9M^2}{r^4}.
\end{equation}

The expression for $N$ consists of ``good" terms in the radial region with the possible exception of the last two, the coefficients of which blow up at the event horizon.  We can capture the troublesome behavior of these terms near the event horizon in the analysis of the model quantity
\begin{align*}
\frac{1}{1-\frac{2M}{r}}\Phi_{\RN{1}} &= \frac{1}{\sqrt{1-\frac{2M}{r}}}\int_{2M}^{r} \frac{Z^{(+)}}{\sqrt{1-\frac{2M}{r'}}(r'+3M/n)}dr'.
\end{align*} 
To estimate the model quantity, we note that $Z^{(+)}$ satisfies the decay estimate
\[Z^{(+)}_{n} \leq C\sqrt{E_2[Q^{(+)}_{n}]}v_{+}^{-1}\]
with $v_{+} = \max\{1,v\}$, in the region $\{ t \geq 0\} \cap{\{ r\leq R_0\}}$.  For large $v$, this estimate follows from the pointwise decay estimate on $Q_{n}^{(+)}$ \eqref{alphaDecay} and the form of the foliation \eqref{decayFoliation}.  Otherwise, the estimate reduces to a weaker statement of boundedness on the exterior region.

Recalling the form of the Regge-Wheeler coordinate $r_{*}$ \eqref{rStar}, we note that the comparison
\[ce^{r_{*}/2M} \leq 1 - \frac{2M}{r} \leq C e^{r_{*}/2M}\]
holds for small radii.  For a point $p$ on the quotient space, let 
\[ t= t(p), r = r(p), r_{*} = r_{*}(p), \tau = v(p).\]
We estimate $\Phi_{\RN{1}}$ by 
\begin{align*}
&|\Phi_{\RN{1}}(p)| \leq C\sqrt{1-\frac{2M}{r}}\int_{2M}^{r}\frac{Z_{n}^{(+)}}{\sqrt{1-\frac{2M}{r'}}}dr' \\
&= C\sqrt{1-\frac{2M}{r}}\int_{-\infty}^{r_{*}} \sqrt{1-\frac{2M}{r'}}Z_{n}^{(+)}(t,r_{*}')dr_{*}' \\
&\leq Ce^{r_{*}/4M}\int_{-\infty}^{r_{*}} e^{r_{*}'/4M}\sqrt{E_2[Q^{(+)}_{n}]}\max\{1,t+r_{*}'\}^{-1}dr_{*}'\\
&\leq C\sqrt{E_2[Q^{(+)}_{n}]}e^{(r_{*}-t)/4M}\int_{-\infty}^{\tau}e^{s/4M}\max\{1,s\}^{-1}ds,
\end{align*}
where we have performed the change of variable $s = t + r_{*}'$.
Asymptotically in $\tau$, the function
\[\int_{-\infty}^{\tau}e^{s/4M}\max\{1,s\}^{-1}ds\]
is comparable with $e^{\tau/4M}\tau^{-1}$, as can be seen by application of L'Hospital's rule.  Hence for adequately large $\tau$, the estimate above becomes
\[\Phi_{\RN{1}}\leq C\sqrt{E_2 [Q^{(+)}_n]}e^{(r_{*}-t)/4M}e^{\tau}\tau^{-1} \leq C\sqrt{E_2 [Q^{(+)}_n]}\left(1-\frac{2M}{r}\right)\tau^{-1}.\]
In the region $2M\leq r\leq R_0$, the last term can be estimated in the same way since
\begin{align*}
\left|\frac{r}{\sqrt{1-\frac{2M}{r}}}\int_{2M}^{r}g(r')\frac{Z_{n}^{(+)}}{\sqrt{1-\frac{2M}{r'}}(nr'+3M)}dr'\right|\leq \frac{C}{\sqrt{1-\frac{2M}{r}}}\int_{2M}^{r}\frac{Z_{n}^{(+)}}{\sqrt{1-\frac{2M}{r'}}}dr'.
\end{align*}
Summing over the spherical harmonics, we deduce 
\begin{equation}
\sup_{\tilde{\Sigma}_\tau} |N|\leq C\left( \sum_{(q)\leq 2} \sqrt{E_2 [\Omega^{(q)}Q^{(+)}]}\right)\tau^{-1}
\end{equation}
for the normalized solution $h_{\RN{1}}$ in the region $2M\leq r\leq R_0$.  Indeed, the other three components $L, T, V$ also satisfy this improved estimate for small radii.

\subsubsection{The region $r_0 < r < \infty$}
For the large radii, we consider the solution $h_{\RN{2}}$.  The analysis of $N$ is much the same as the previous subsection.  In particular, we can rewrite $N$ as 
\begin{align}
\begin{split}
N = &-\frac{n\left(9M^3 + 3Mnr(1+(-1+n)r)\right)}{r(nr+3M)^2}Z^{(+)}\\
&-\frac{n\left(nr^2(1+n-nr)+6M^2(1+2nr)\right)}{r(nr+3M)^2}Z^{(+)}\\
&+ \frac{M}{r^2}\Phi_{\RN{2}} - \frac{M^2}{r^2(r-2M)}\Phi_{\RN{2}}\\
&+ \frac{r}{\sqrt{1-\frac{2M}{r}}}\int_{r}^{\infty}g(r')\frac{Z^{(+)}}{\sqrt{1-\frac{2M}{r'}}(nr'+3M)}dr',
\end{split}
\end{align}
with $g$ defined in \eqref{g_r}.

Now the only troublesome term is the last, with coefficient growing large as $r$ approaches spatial infinity.  Quadratic decay of $g$ allows us to estimate
\begin{align*}
&\frac{r}{\sqrt{1-\frac{2M}{r}}}\int_{r}^{\infty}g(r')\frac{Z^{(+)}}{\sqrt{1-\frac{2M}{r'}}(nr'+3M)}dr'\\
&\leq Cr\int_{r}^{\infty}\frac{1}{(r')^2}\frac{Z^{(+)}}{\sqrt{1-\frac{2M}{r'}}(r'+3M/n)}dr' \leq \frac{C}{r}\Phi_{\RN{2}}.
\end{align*}

Summing the spherical harmonics, we obtain a pointwise estimate for $N$,
\begin{equation}
\sup_{\tilde{\Sigma}_{\tau}} |N| \leq C \left(\sum_{(q)\leq 2} \sqrt{E_2[\Omega^{(q)}Q^{(+)}]}\right)\tau^{-1/2},
\end{equation}
for the solution $h_{\RN{2}}$ in the radial region $r_0 < r < \infty$.

\subsection{Decay of the Interpolated Solution}
We define $c_{\ell m}$ in terms of the difference between $\Phi^{\ell m}_{\RN{1}}$ and $\Phi^{\ell m}_{\RN{2}}$:
\begin{align}
\begin{split}
\left(1-\frac{2M}{r}\right)^{-1/2}c_{\ell m}(t) &:= \Phi_{\RN{1}}^{\ell m}-\Phi_{\RN{2}}^{\ell m}\\
                            &=\sqrt{1-\frac{2M}{r}}\int_{2M}^\infty \frac{1}{\sqrt{1-\frac{2M}{r'}}}\frac{Z^{(+)}_{nm}}{r'+3M/n}dr'.
\end{split}
\end{align}

In addition, we define the summation
\begin{equation}
c(t,\theta,\phi) := \sum_{\ell, m} c_{\ell m}(t)Y^{\ell m}(\theta,\phi),
\end{equation}
where the sum is understood on $L^2(S^2)$.

From the decay of the $\Phi$ \eqref{phiDecay}, we have
\begin{align*}
&|c_{\ell m}(t)|\leq \frac{C}{n}\sqrt{E_0[Q^{(+)}_{nm}]} t^{-1/2},\\
&|c(t,\theta,\phi)| \leq C\sqrt{E_0[Q^{(+)}]}t^{-1/2}.
\end{align*}

Defining the function $\check{G}(t,r,\theta,\phi)$ by
\begin{equation}
\check{G}(t,r,\theta,\phi):=-c(t,\theta,\phi)\sqrt{r^2-2Mr},
\end{equation}
we construct an associated co-vector field
\begin{align}\label{checkXDef}
\begin{split}
\check{X}&:=-r^2\tilde{\nabla}_A(r^{-2}\check{G}) dx^A+\mathring{\nabla}_\alpha \check{G}dx^\alpha\\
         &=r\left(1-\frac{2M}{r}\right)^{1/2}\partial_{t}c(t,\theta,\phi) dt-r\left(1-\frac{2M}{r}\right)^{1/2} \mathring\nabla_{\alpha}c(t,\theta,\phi)  dx^\alpha\\
         &-\left(1-\frac{3M}{r}\right)\left(1-\frac{2M}{r}\right)^{-1/2}c(t,\theta,\phi)dr.
\end{split}
\end{align}

Denote the deformation tensor $\check{\pi}:=\pi_{\check{X}}$. By direct computation, $\pi_{\check{X}}$ is seen to be in Chandrasekhar gauge, with corresponding Friedman substitution quantities:
\begin{align*}
N[\check{\pi}]&=-\frac{1}{2}\left(1-\frac{2M}{r}\right)^{-1}\check{\pi}_{00}\\
&=-\frac{M}{r^3}\left(1-\frac{2M}{r}\right)^{-1/2}(r-3M)c(t,\theta,\phi)-r\left(1-\frac{2M}{r}\right)^{-1/2}\partial_t^2 c(t,\theta,\phi),\\
L[\check{\pi}]&=\frac{1}{2}\left(1-\frac{2M}{r}\right)\check{\pi}_{11}=-\frac{3M}{r^2}\left(1-\frac{2M}{r}\right)^{1/2}c(t,\theta,\phi),\\
T [\check{\pi}]&=\frac{1}{2r^2} \check{\pi}_{\alpha\beta}\mathring{\sigma}^{\alpha\beta}=\frac{\left(1-\frac{2M}{r}\right)^{1/2}}{r}\left(-\mathring{\Delta}-2+\frac{6M}{r}\right)c(t,\theta,\phi),\\
V_{\alpha\beta}[\check{\pi}]&=\frac{1}{2r^2}\left(\check{\pi}_{\alpha\beta}-r^2T[\check{\pi}]\mathring{\sigma}_{\alpha\beta}\right)=\frac{\left(1-\frac{2M}{r}\right)^{1/2}}{r}\sum_{\ell m}c_{\ell m}(t)Y^{\ell m}_{\alpha\beta}.
\end{align*}
Comparing the components of $\pi_{\RN{1}},\pi_{\RN{2}}$ and $\check{\pi}$, we observe that $\check{\pi}=\pi_{\RN{2}}-\pi_{\RN{1}}$ and $\check{X}=X_{\RN{2}}-X_{\RN{1}}$. Here we are just using the definition of $c_{\ell m}(t)$ and the expressions for the metric coefficients in terms of $Z^{(+)}$ (\ref{Tpullback}, \ref{Vpullback}, \ref{Lpullback}, \ref{Npullback}).  For any $r_0\leq r\leq R_0$, we have

\begin{align*}
\int_{S^2(t,r)} L[\check{\pi}]^2 d\mathring{\sigma}&\leq C(M,r_0,R_0)\sum_{\ell m}c_{\ell m}^2(t)\\
                                                   &\leq C(M,r_0,R_0)\sum_{\ell m} \frac{1}{n^2} E_0 [Q^{(+)}_{nm}]t^{-2}\\
                                                   &\leq C(M,r_0,R_0) E_0 [Q^{(+)}]t^{-2},
\end{align*}
Similarly,
\begin{align*}
\int_{S^2(t,r)} N[\check{\pi}]^2 d\mathring{\sigma}&\leq C(M,r_0,R_0) E_2[Q^{(+)}]t^{-2},
\end{align*}
\begin{align*}
\int_{S^2(t,r)} T[\check{\pi}]^2 d\mathring{\sigma}&\leq C(M,r_0,R_0)\sum_{\ell m} n^2c_{\ell m}^2(t)\\
                                                   &\leq C(M,r_0,R_0)\sum_{\ell m} E_0[Q^{(+)}_{\ell m}]t^{-2}\\
                                                   &\leq C(M,r_0,R_0) E_0[Q^{(+)}]t^{-2},
\end{align*}
\begin{align*}
\int_{S^2(t,r)}|V_{\alpha\beta}|^2_{\mathring{\sigma}} d\mathring{\sigma} &\leq C(M,r_0,R_0)\sum_{\ell m} n^2c_{\ell m}^2(t)\\
                                                                                         &\leq C(M,r_0,R_0) E_0[Q^{(+)}]t^{-2}.
\end{align*}

Having been defined in terms of the decaying function $\check{G}$, each component of $\check{X}$ has the same type of bound in $r_0\leq r \leq R_0$ on the $L^2$-norm on spheres.  Hence, letting $0\leq \eta(r)\leq 1$ be a cut-off function with $\eta\equiv 0$ as $r\leq r_0$ and $\eta\equiv 1$ as $r\geq R_0$, the deformation tensor
\begin{equation}\label{interpolatedDeformation}
\pi_{\eta(r)\check{X} }=\eta(r)\check{\pi}+\eta'(r)\left(dr\otimes \check{X}+\check{X}\otimes dr \right)
\end{equation}
decays in the same fashion, in this radially compact region.  Defining the interpolated co-vector field
\begin{equation}\label{interpolatedCovector}
X:=X_{\RN{1}}+\eta(r)\check{X}=X_{\RN{2}}-(1-\eta(r))\check{X},
\end{equation}
we note that
\begin{align*}
&h_1-\pi_X= h_{\RN{1}}-\pi_{\eta\check{X}}\ \textup{as}\ r\leq R_0,\\
&h_1-\pi_X= h_{\RN{2}}+\pi_{(1-\eta)\check{X}}\ \textup{as}\ r\geq r_0,
\end{align*}
such that the $L^2$-norm on spheres of each linearized metric component of $h = h_1-\pi_X$ decays through $\tilde{\Sigma}_\tau$.  Commuting with the angular Killing fields $\Omega_{i}$ and applying Sobolev embedding, we obtain further pointwise estimates on the components.  

Away from the interpolation region $r_0 \leq r \leq R_0$ the linearized metric has the Chandrasekhar gauge, and it is convenient to express decay of the metric components in terms of the spacetime norm.  Namely, away from the interpolation region we have
\[ |h|^2_{g} = 4\left(|N|^2 + |L|^2 + |T|^2 + |V|^2\right),\]
with the spacetime norm being positive-definite.  We note that the coordinate frame in question is highly irregular at the event horizon, where $N$ and $L$ coincide; in this way, there is a loss of control of the linearized solution at the event horizon.

Within the interpolation region $r_0 \leq r \leq R_0$, the linearized metric takes the form
\[ h = h_1 - \pi_{X} = h_{\RN{1}} - \eta(r)\check{\pi} - \eta'(r)\left(dr\otimes \check{X}+\check{X}\otimes dr \right).\]
Decay for the components of Chandrasekhar-gauged $h_{\RN{1}}$ and $\check{\pi}$ is captured by the (positive-definite) spacetime norm, as mentioned above.  For the remaining components we utilize the estimates on $c(t,\theta,\phi)$ and $\check{X}$ derived earlier in this subsection to control the squared norms by the decaying quantity
\[ C(r_0,R_0,M)(E_1[Q^{(+)}])\tau^{-1}.\]
In total, we overestimate the squared norms of the linearized metric components by means of
\[ |h_1|^2_{g} + |\check{\pi}|^2_{g} + C(r_0,R_0,M)\left(E_1[Q^{(+)}]\right)\tau^{-1}.\]

\begin{theorem}\label{closedMain}
Suppose $h_1$ is a closed solution of the linearized vacuum Einstein equations \eqref{linearized_Einstein}, with support in $\ell \geq 2$.  Further, assume that $h_1$ is smooth and compactly supported away from the bifurcation sphere on $\{ t = 0\}$.  Normalizing $h_1$ by the co-vector \eqref{interpolatedCovector},
$$ h = h_1 - \pi_{X},$$
such that $h$ is in the interpolated Chandrasekhar gauge, the normalized solution $h$ satisfies the following spacetime decay estimates:

Away from the interpolation region $r_0 \leq r \leq R_0$, we have
\begin{equation}\label{metricDecay2}
\sup_{\tilde{\Sigma}_{\tau}} |h|_{g}\leq C \left(\sum_{(q)\leq 2} \sqrt{E_2[\Omega^{(q)}Q^{(+)}]}\right)\tau^{-1/2}.
\end{equation}
through the decay foliation \eqref{decayFoliation}.  As mentioned above, the spacetime norm is positive-definite and describes the sum-of-squares of the linearized metric components.

Restricting further to the region $2M \leq r \leq r_0$, we have the improvement
\begin{equation}
\sup_{\tilde{\Sigma}_{\tau}} |h|_{g}\leq C(r_0) \left(\sum_{(q)\leq 2} \sqrt{E_2[\Omega^{(q)}Q^{(+)}]}\right)\tau^{-1}.
\end{equation}

Finally, considering the interpolation region $r_0 \leq r \leq R_0$, the norms of the linearized metric components are dominated by the quantity
\begin{equation}
C(r_0,R_0,M)\left(|h_1|_{g} + |\check{\pi}|_{g} + \sqrt{E_1[Q^{(+)}]}\tau^{-1/2}\right),
\end{equation}
itself satisfying the decay estimate
\begin{align}
\begin{split}
&\left(|h_1|_{g} + |\check{\pi}|_{g} + \sqrt{E_1[Q^{(+)}]}\tau^{-1/2}\right) \\
&\leq C(r_0,R_0,M) \left(\sum_{(q)\leq 2} \sqrt{E_2[\Omega^{(q)}Q^{(+)}]}\right)\tau^{-1/2}.
\end{split}
\end{align}
\end{theorem}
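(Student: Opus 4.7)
The plan is to assemble the three claims by patching together the radial-regime estimates already derived in Section 9. First, for the region away from the interpolation annulus $[r_0, R_0]$, I would observe that by the construction \eqref{interpolatedCovector} we have $h = h_{\RN{1}}$ on $\{r \leq r_0\}$ (since the cutoff $\eta \equiv 0$ there) and $h = h_{\RN{2}}$ on $\{r \geq R_0\}$ (since $\eta \equiv 1$ there). Both normalized solutions lie in the Chandrasekhar gauge, so the only non-vanishing Friedman components are $N, L, T, V$, and $|h|_g^2 = 4(|N|^2 + |L|^2 + |T|^2 + |V|^2)$. I would then invoke the pointwise bounds on $L, T, V$ already proved this section (via Sobolev embedding on the sphere applied to the $L^2(S^2)$ estimates of the type \eqref{Lsum}, after commutation with the angular Killing fields $\Omega_i$), together with the corresponding pointwise bound for $N$ derived in the regime $r \geq R_0$ using $h_{\RN{2}}$. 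This yields the $\tau^{-1/2}$ rate stated in \eqref{metricDecay2}.

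Second, in the region $2M \leq r \leq r_0$ I work with $h_{\RN{1}}$, and the improvement from $\tau^{-1/2}$ to $\tau^{-1}$ for $|N|$ follows from the red-shift-style exponential control $1-2M/r \sim e^{r_*/2M}$ applied to the dangerous terms $\Phi_{\RN{1}}/(1-\mu)$ and the weighted integral against $Z^{(+)}$, precisely as in the model calculation that yields $|\Phi_{\RN{1}}(p)| \leq C\sqrt{E_2[Q^{(+)}_n]}(1-2M/r)\tau^{-1}$ in the preceding subsection. The other three components $L, T, V$ gain the same improvement by direct inspection of the pullback formulas \eqref{Tpullback}--\eqref{Lpullback} together with boundedness of their coefficients on the compact radial interval.

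Third, on the interpolation annulus $r_0 \leq r \leq R_0$, I would write $h = h_{\RN{1}} - \pi_{\eta \check{X}}$ with $\pi_{\eta \check{X}} = \eta(r)\check{\pi} + \eta'(r)(dr\otimes\check{X} + \check{X}\otimes dr)$ as in \eqref{interpolatedDeformation}. The components of both $\check{\pi}$ and $\check{X}$ are built from $c_{\ell m}(t)$ and its derivatives, whose $L^2(S^2)$ norms are controlled by $C(r_0,R_0,M)E_1[Q^{(+)}]\tau^{-1}$ via the bound $|c_{\ell m}(t)| \leq C n^{-1}\sqrt{E_0[Q^{(+)}_{nm}]}t^{-1/2}$ and $T$-commutation (noting the Zerilli equation is $T$-invariant, so time derivatives are absorbed into the higher energies $E_1, E_2$). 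Summing the spherical harmonics, commuting with $\Omega_i$, and applying Sobolev embedding on the sphere gives the stated pointwise bound in terms of $|h_1|_g + |\check{\pi}|_g + \sqrt{E_1[Q^{(+)}]}\tau^{-1/2}$. The final inequality then follows by dominating $|h_1|_g$ and $|\check{\pi}|_g$ by the same $\sum_{(q) \leq 2}\sqrt{E_2[\Omega^{(q)}Q^{(+)}]}\tau^{-1/2}$ quantity through the component-wise estimates just established.

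The main obstacle I anticipate is a clean treatment of the off-Chandrasekhar piece $\eta'(r)(dr\otimes\check{X} + \check{X}\otimes dr)$, whose components do not appear in the Friedman substitution and hence cannot be bounded solely by the $N, L, T, V$ estimates. These require separate control by $\partial_t c(t,\theta,\phi)$ and $\mathring\nabla c(t,\theta,\phi)$, which in turn force one to commute the Zerilli equation with both $T$ and $\Omega_i$ in advance of the pullback estimate \eqref{phiDecay}; this is the reason the $E_1, E_2$ energies are defined with the Killing commutators $\mathcal{K}$ and the extra radial weight. Once these commuted versions are in hand, the remainder is bookkeeping.
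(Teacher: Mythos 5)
Your proposal reproduces the paper's own argument: the same patching of $h = h_{\RN{1}}$ on $r \leq r_0$ and $h = h_{\RN{2}}$ on $r \geq R_0$ with the Friedman-component estimates for $N, L, T, V$, the same exponential-weight improvement near the horizon via $|\Phi_{\RN{1}}| \leq C\sqrt{E_2[Q^{(+)}_{n}]}\left(1-\frac{2M}{r}\right)\tau^{-1}$, and the same treatment of the annulus through $\check{\pi}$, the off-diagonal term $\eta'(r)\left(dr\otimes \check{X}+\check{X}\otimes dr \right)$, and the $c_{\ell m}(t)$ bounds combined with $T$- and $\Omega_{i}$-commutation and Sobolev embedding on the spheres. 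The only caveat is that your phrase ``boundedness of coefficients on the compact radial interval'' for the $L, T, V$ improvement in $2M \leq r \leq r_0$ implicitly requires the same improved $\Phi_{\RN{1}}$ bound together with the $\tau^{-1}$ pointwise decay of $Z^{(+)}$ inherited from \eqref{alphaDecay}, which is exactly how the paper obtains it.
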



\section{Proof of Theorem 2}

In this final section, we combine the results of Sections 6-9, on the analysis of the closed and co-closed portions, to obtain decay estimates for a suitable normalization of $\delta g^{\ell \geq 2}$, corresponding to Theorem 2 in the Introduction.  With the lower modes of $\delta g^{\ell < 2}$ accounted for in Section \ref{lowerModesSection}, the following theorem is a statement of linear stability for the Schwarzschild spacetime.
\begin{theorem}
Suppose $\delta g^{\ell \geq 2}$ is a solution of the linearized vacuum Einstein equations \eqref{linearized_Einstein}, with support in $\ell \geq 2$.  Moreover, assume that $\delta g^{\ell \geq 2}$ is smooth and compactly supported away from the bifurcation sphere on the time-slice $\{ t = 0\}$.  Then there exists a smooth co-vector $X^{\ell \geq 2}$  such that 
\begin{equation}
\delta g^{\ell \geq 2} = \pi_{X^{\ell \geq 2}} +\widehat{\delta g}^{\ell \geq 2},
\end{equation} 
where the norms of the linearized metric components of the normalized solution $\widehat{\delta g}^{\ell \geq 2}$ are bounded by the decaying expression
\begin{equation}\label{metricDecay3}
\begin{split}
C\Bigg( &\sum_{(q)\leq 2} \Big(\sqrt{E_2[\Omega^{(q)} Q^{(-)}]}+\sqrt{E_2[\Omega^{(q)} P]}\\& + \sqrt{E_2[\Omega^{(q)} \slashed{\nabla}_{\hat{Y}}P]}+ \sqrt{E_2[\Omega^{(q)} Q^{(+)}]}\Big) \Bigg)\tau^{-1/2}
\end{split}
\end{equation}
through the decay foliation \eqref{decayFoliation}.
\end{theorem}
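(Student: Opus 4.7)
The plan is to reduce Theorem 2 to the separate analyses of the closed and co-closed portions carried out in Sections 6--9, by means of the Hodge decomposition of Section 3.3. I first write $\delta g^{\ell \geq 2} = h_1 + h_2$ via the Proposition on Hodge decomposition, with $h_1$ the closed portion \eqref{h_1} and $h_2$ the co-closed portion \eqref{h_2}. Because the decomposition is constructed from a pointwise Hodge theory on the orbit spheres and commutes with the spacetime covariant derivative, both $h_1$ and $h_2$ are themselves smooth solutions of the linearized vacuum Einstein equations \eqref{linearized_Einstein}. The spherical Hodge potentials $H_A, \underline{H}_A, H_2, \underline{H}_2$ are obtained by inverting purely angular elliptic operators on each sphere of symmetry, so the radial support profile on $\{t=0\}$ is preserved; in particular, each of $h_1, h_2$ remains compactly supported away from the bifurcation sphere on the initial slice, as required for the Chandrasekhar gauge construction in Lemma \ref{ChandraGauge}.

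Second, I apply the co-closed analysis to $h_2$: invoke Lemma \ref{RWcovector} to produce a co-vector $X^{(-)}$ such that $h_2 - \pi_{X^{(-)}}$ is in the Regge-Wheeler gauge, and then Theorem \ref{coclosedMain} delivers the pointwise decay of the frame-contracted components $r^A h_{A\alpha}$, $T^A h_{A\alpha}$, $\hat Y^A h_{A\alpha}$ in terms of the initial energies of $Q^{(-)}, P$, and $\slashed{\nabla}_{\hat Y} P$. In parallel, I apply the closed analysis to $h_1$: use the interpolated Chandrasekhar construction of Section 9 to produce the co-vector
\[
X^{(+)} := X_{\RN{1}} + \eta(r)\,\check X = X_{\RN{2}} - (1-\eta(r))\,\check X,
\]
with $\eta(r)$ the cutoff interpolating between the two Chandrasekhar normalizations $h_{\RN{1}}$ and $h_{\RN{2}}$. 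Theorem \ref{closedMain} then bounds every linearized metric component of $h_1 - \pi_{X^{(+)}}$ by the quoted decaying expressions in the initial energies of $Q^{(+)}$, uniformly across the near-horizon regime $r \leq r_0$, the interpolation window $r_0 \leq r \leq R_0$, and the asymptotic regime $r \geq R_0$.

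Third, set $X^{\ell \geq 2} := X^{(-)} + X^{(+)}$ and $\widehat{\delta g}^{\ell \geq 2} := \delta g^{\ell \geq 2} - \pi_{X^{\ell \geq 2}}$, which splits as the sum of the normalized co-closed solution (in Regge-Wheeler gauge) and the normalized closed solution (in interpolated Chandrasekhar gauge). The two summands populate disjoint bundle components in the Hodge decomposition, and the spacetime norm $|\ |_g$ is positive-definite on each of these subbundles, so the pointwise norms of the components of $\widehat{\delta g}^{\ell \geq 2}$ are controlled by the sum of the bounds from Theorems \ref{coclosedMain} and \ref{closedMain}. After angular commutation with $\Omega_i$ and Sobolev embedding on $S^2$ (already folded into those theorems), this yields the claimed bound \eqref{metricDecay3} through the foliation $\tilde\Sigma_\tau$.

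The main obstacle is bookkeeping at the interpolation window $r_0 \leq r \leq R_0$, where the Chandrasekhar gauge has been broken by the cutoff and one must verify that the additional deformation tensor $\pi_{\eta(r)\check X}$ appearing in \eqref{interpolatedDeformation} does not spoil the decay rate: here one must use the uniform smallness of $\check X$ (equivalently of $c(t,\theta,\phi)$) that was derived from the $Z$-energy bound \eqref{alphaBound} on $Q^{(+)}_{n}$, summed over spherical harmonics. A subsidiary technical point, already flagged in the proof of Theorem \ref{closedMain}, is the loss at the event horizon coming from the irregularity of the Chandrasekhar coordinate frame, which must be handled by using the red-shift-commuted estimate on $\slashed{\nabla}_{\hat Y} P$ entering the co-closed bound; this is precisely why the term $\sqrt{E_2[\Omega^{(q)}\slashed{\nabla}_{\hat Y}P]}$ appears in the final estimate.
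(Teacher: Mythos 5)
Your proposal is correct and follows essentially the same route as the paper's own proof of Theorem 2: Hodge-decompose $\delta g^{\ell \geq 2}$ into the closed and co-closed portions, impose the Regge-Wheeler gauge via Lemma \ref{RWcovector} on the co-closed piece and the interpolated Chandrasekhar gauge co-vector \eqref{interpolatedCovector} on the closed piece, take $X^{\ell \geq 2}$ to be the sum of the two gauge co-vectors, and combine Theorems \ref{coclosedMain} and \ref{closedMain} (your added observation that the purely angular Hodge inversions preserve the radial support profile on $\{t=0\}$ is a worthwhile point the paper leaves implicit). One minor correction to your closing remark: the term $\sqrt{E_2[\Omega^{(q)}\slashed{\nabla}_{\hat{Y}}P]}$ enters the final bound solely through the co-closed near-horizon estimate on $\hat{Y}^{A}h_{A\alpha}$ (Theorems \ref{decay_r_small} and \ref{coclosedMain}); it does not, and cannot, compensate the closed portion's loss of control at the horizon caused by the irregular Chandrasekhar frame, which the paper merely notes rather than repairs.
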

\begin{proof}
We define
\begin{equation}
X^{\ell \geq 2}:= G + X,
\end{equation}
with $G$ and $X$ co-vectors from Lemma \ref{RWcovector} and \eqref{interpolatedCovector}, respectively imposing the Regge-Wheeler gauge on the co-closed portion and the interpolated Chandrasekhar gauge on the closed portion.  Combining Theorems \ref{coclosedMain} and \ref{closedMain}, the difference
\begin{equation}
\widehat{\delta g}^{\ell \geq 2} := \delta g^{\ell \geq 2} - \pi_{X^{\ell \geq 2}}
\end{equation}
has linearized metric components whose norms are controlled by the decaying quantity
\begin{align*}
C\Bigg( &\sum_{(q)\leq 2} \Big(\sqrt{E_2[\Omega^{(q)} Q^{(-)}]}+\sqrt{E_2[\Omega^{(q)} P]}\\& + \sqrt{E_2[\Omega^{(q)} \slashed{\nabla}_{\hat{Y}}P]}+ \sqrt{E_2[\Omega^{(q)} Q^{(+)}]}\Big) \Bigg)\tau^{-1/2}.
\end{align*}
\end{proof}

For more refined decay estimates of the co-closed and closed portions, and their associated linearized metric coefficients, we refer the reader to Theorems \ref{coclosedMain} and \ref{closedMain}.
\appendix

\section{Symmetric traceless two-tensors on $S^2$ }
In this appendix, several calculations regarding symmetric traceless two-tensors on the unit sphere are provided. The analysis carries over to such tensors defined on a spherically symmetric spacetime with respect to the operator $\mathring{\nabla}$ and the associated Laplacian $\mathring{\Delta}$ specified in Section 3.

Consider the standard unit sphere $S^2$ with round metric $\sigma_{\alpha\beta}$ and area form $\epsilon_{\alpha\beta}$, where $u^\alpha, \alpha=1, 2$ is a local coordinate system on $S^2$.  We use $\nabla$ to refer to the associated covariant derivative operator, and $\Delta$ to refer to the associated spherical Laplacian.

For one-forms and two-tensors on $S^2$, the following formula holds:
\begin{equation}\begin{split}(\nabla_{\alpha}\nabla_{\beta}-\nabla_{\beta}\nabla_{\alpha}) T_{\gamma}&=\delta_{\alpha\gamma}T_{\beta}-\delta_{\beta\gamma} T_{\alpha},\\
(\nabla_{\alpha}\nabla_{\beta}-\nabla_{\beta}\nabla_{\alpha}) T_{\gamma\eta}&=\delta_{\alpha\gamma}T_{\beta\eta}-\delta_{\beta\gamma} T_{\alpha\eta}+\delta_{\alpha\eta} T_{\gamma\beta}-\delta_{\beta\eta}T_{\gamma\alpha}.\end{split}\end{equation}

\begin{proposition}
Let $S_{\alpha\beta}$ be a symmetric traceless two-tensor on $S^2$.

(1) Then there exists a one-form $p_{\alpha}$ such that 
\begin{equation}\label{sym_2} S_{\alpha\beta}=\frac{1}{2}(\nabla_{\beta} p_{\alpha}+\nabla_{\alpha} p_{\beta}-(\nabla^{\gamma} p_{\gamma})\sigma_{\alpha\beta}).\end{equation}

(2) If $\nabla^{\alpha} S_{\alpha\beta}$ is closed, then $p_{\alpha}$ is closed and there exists a function $f$ such that 
\[S_{\alpha\beta}=\nabla_{\alpha}\nabla_{\beta} f-\frac{1}{2}\sigma_{\alpha\beta}\Delta f,\] and $\nabla^{\alpha} S_{\alpha\beta}=\frac{1}{2}\nabla_{\beta}(\Delta f+2f), \nabla^{\beta}\nabla^{\alpha} S_{\alpha\beta}=\frac{1}{2}\Delta(\Delta f+2f)$. 

(3) If $\nabla^{\alpha} S_{\alpha\beta}$ is co-closed, then there exists a function $g$ such that 
\[S_{\alpha\beta}=\frac{1}{2}(\epsilon_{\alpha}^{\gamma} \nabla_{\beta}\nabla_{\gamma} g+\epsilon_{\beta}^{\gamma} \nabla_{\alpha}\nabla_{\gamma} g),\] and $\epsilon^{\gamma\beta}\nabla_{\gamma}\nabla^{\alpha} S_{\alpha\beta}=\frac{1}{2} \Delta(\Delta g+2g)$.
\end{proposition}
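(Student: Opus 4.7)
The plan is to reduce all three parts to a single Weitzenb\"ock-type identity for the conformal Killing operator $\mathcal{D}$ on $S^2$, where $2(\mathcal{D}p)_{\alpha\beta} := \nabla_\alpha p_\beta + \nabla_\beta p_\alpha - (\nabla^\gamma p_\gamma)\sigma_{\alpha\beta}$. A direct computation using the commutator formula stated in the appendix will yield
\[
2\nabla^\alpha(\mathcal{D}p)_{\alpha\beta} = (\Delta + 1)p_\beta,
\]
and combining this with the Hodge decomposition $p_\beta = \nabla_\beta f + \epsilon_\beta^\gamma \nabla_\gamma g$ on $S^2$ (valid because $H^1(S^2) = 0$), together with a parallel commutator calculation, gives
\[
(\Delta + 1)p_\beta = \nabla_\beta(\Delta + 2)f + \epsilon_\beta^\gamma \nabla_\gamma(\Delta + 2)g.
\]
These two identities will carry almost the entire argument.

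For part (1), I would first solve the elliptic equation $(\Delta + 1)p_\beta = 2\nabla^\alpha S_{\alpha\beta}$ on $S^2$. The operator $\Delta+1$ on one-forms is formally self-adjoint, with kernel equal to the six-dimensional space of conformal Killing vectors (equivalently $\ker\mathcal{D}$). The Fredholm compatibility condition $\int_{S^2}K^\beta\nabla^\alpha S_{\alpha\beta} = 0$ for every conformal Killing $K$ is checked by integration by parts: the integrand reduces to $-(\mathcal{D}K)^{\alpha\beta}S_{\alpha\beta}$ (the $(\nabla^\gamma K_\gamma)\sigma_{\alpha\beta}S^{\alpha\beta}$ trace piece drops by tracelessness of $S$), which vanishes since $\mathcal{D}K = 0$. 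With $p$ in hand, set $T_{\alpha\beta} := S_{\alpha\beta} - (\mathcal{D}p)_{\alpha\beta}$; the first key identity gives $\nabla^\alpha T_{\alpha\beta} = 0$, so $T$ is a divergence-free STT tensor. Expanding $T$ in the tensor harmonics $Y^{\ell m}_{\alpha\beta}, X^{\ell m}_{\alpha\beta}$ (both supported on $\ell \geq 2$) and computing, via the commutator identity, $\nabla^\alpha Y^{\ell m}_{\alpha\beta} = (1 - \tfrac{1}{2}\ell(\ell+1))\nabla_\beta Y^{\ell m}$ with the analogue for $X$, I see that every mode with $\ell \geq 2$ has nonzero divergence. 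Hence $T \equiv 0$ and $S = \mathcal{D}p$.

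Parts (2) and (3) then reduce to analyzing the two Hodge pieces of $p$ via the second key identity. In part (2), taking the curl of both sides of $(\Delta + 1)p_\beta = 2\nabla^\alpha S_{\alpha\beta}$ annihilates the $\nabla f$ term and forces $\Delta(\Delta + 2)g = 0$; on $S^2$, $\ker\Delta(\Delta + 2)$ on $C^\infty$ is the span of constants and the $\ell = 1$ spherical harmonics, for which $\epsilon_\beta^\gamma\nabla_\gamma g$ is either zero or a rotational Killing vector and hence in $\ker\mathcal{D}$; thus $\mathcal{D}p = \mathcal{D}(\nabla f)$, giving $S_{\alpha\beta} = \nabla_\alpha\nabla_\beta f - \tfrac{1}{2}\sigma_{\alpha\beta}\Delta f$, and the stated formulas for $\nabla^\alpha S_{\alpha\beta}$ and $\nabla^\beta\nabla^\alpha S_{\alpha\beta}$ then follow by plugging $p = \nabla f$ into the key identities. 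Part (3) is symmetric: the co-closed condition forces $\Delta(\Delta + 2)f = 0$, so the $\nabla f$ piece of $p$ lies in $\ker\mathcal{D}$ (being zero or a gradient conformal Killing vector), leaving $S_{\alpha\beta} = \tfrac{1}{2}(\epsilon_\alpha^\gamma\nabla_\beta\nabla_\gamma g + \epsilon_\beta^\gamma\nabla_\alpha\nabla_\gamma g)$ together with the stated identity. The main obstacle is expected to be the existence step in part (1): the Fredholm compatibility is delicate in that it needs precisely tracelessness of $S$ combined with $\mathcal{D}K = 0$, and the uniqueness argument rests on the elementary but essential fact that no STT spherical harmonic at $\ell \geq 2$ is divergence-free; once these are in place, parts (2) and (3) reduce cleanly to identifying $\ker\Delta(\Delta+2) \subset C^\infty(S^2)$ as the span of constants and $\ell = 1$ harmonics, which match exactly the conformal Killing obstructions absent from genuine STT content.
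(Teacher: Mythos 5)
Your proposal is correct, and it runs on the same two identities as the paper's proof --- the commutator identity $2\nabla^{\alpha}(\mathcal{D}p)_{\alpha\beta}=(\Delta+1)p_{\beta}$ and its refinement $(\Delta+1)p_{\beta}=\nabla_{\beta}(\Delta+2)f+\epsilon_{\beta}^{\gamma}\nabla_{\gamma}(\Delta+2)g$ under the Hodge splitting of $p$ --- but your handling of the existence step in part (1) is genuinely different, and in fact more complete. The paper asserts the existence of the potential $p$ only by analogy with the Hodge decomposition of one-forms (``Similarly, a symmetric traceless two-tensor $S_{\alpha\beta}$ has a potential one-form''), and then produces $f$ and $g$ by solving the \emph{scalar fourth-order} equations $\frac{1}{2}\Delta(\Delta+2)f=\nabla^{\beta}\nabla^{\alpha}S_{\alpha\beta}$ and $\frac{1}{2}\Delta(\Delta+2)g=\epsilon^{\gamma\beta}\nabla_{\gamma}\nabla^{\alpha}S_{\alpha\beta}$, checking Fredholm compatibility against the kernel of $\Delta+2$ (the $\ell=1$ harmonics $X^{i}$) via exactly the tracelessness and symmetry of $S$ that you use; the ambiguity is then the constants and $\ell=1$ modes. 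You instead solve the \emph{vector second-order} equation $(\Delta+1)p_{\beta}=2\nabla^{\alpha}S_{\alpha\beta}$, with compatibility checked against the conformal Killing fields $\ker(\Delta+1)=\ker\mathcal{D}$, and then close the loop with the mode computation showing no symmetric traceless harmonic with $\ell\geq 2$ is divergence-free, so $T=S-\mathcal{D}p$ vanishes. That uniqueness step is precisely what the paper leaves implicit: matching the double divergence and curl-divergence of $S$ by solving the fourth-order equations only determines $S$ up to a transverse-traceless remainder, and concluding the remainder vanishes requires the absence of TT tensors on $S^2$, i.e.\ your harmonic expansion (whose completeness the paper also uses elsewhere without comment). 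So your route buys a self-contained proof of (1) at the cost of elliptic theory on one-forms, while the paper's route is shorter and stays with scalar equations; parts (2) and (3) then coincide in substance in the two treatments, with $\ker\Delta(\Delta+2)$ (constants and $\ell=1$ harmonics) absorbed into $\ker\mathcal{D}$ exactly as you say, and with the understanding in (2) that $p$ is closed only after discarding a rotational Killing piece that contributes nothing to $S$ --- an imprecision already present in the statement itself.
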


\begin{proof}

Since there is no harmonic one-form on $S^2$, Hodge decomposition implies that any one-form $p_{\beta}$ can be written as  $p_{\beta}=\nabla_{\beta} f+\epsilon_{\beta}^{\gamma} \nabla_{\gamma} g$ for two functions $f$ and $g$ on $S^2$.  Similarly, a symmetric traceless two-tensor $S_{\alpha\beta}$ has a potential one-form $p_{\beta}$ as in the decomposition \eqref{sym_2}.  Taking a derivative of \eqref{sym_2}, we find
\[\frac{1}{2}(\nabla^{\alpha}\nabla_{\beta} p_{\alpha}+\nabla^{\alpha}\nabla_{\alpha} p_{\beta}-\nabla_{\beta} \nabla^{\alpha} p_{\alpha})=\nabla^{\alpha}S_{\alpha\beta}.\]

On $S^2$, we have the curvature relation \[\nabla^{\alpha}\nabla_{\beta} p_{\alpha}-\nabla_{\beta} \nabla^{\alpha} p_{\alpha}=p_{\beta}.\]

On the other hand, $p_{\beta}=\nabla_{\beta} f+\epsilon_{\beta}^{\gamma} \nabla_{\gamma} g$ implies
\[\nabla^{\alpha} \nabla_{\alpha}(p_{\beta})=\nabla_{\beta}(\Delta f)+\epsilon_{\beta}^{\gamma}\nabla_{\gamma} (\Delta g)+\nabla_{\beta} f+\epsilon_{\beta}^{\gamma} \nabla_{\gamma} g.\]

Altogether, we obtain
\[\frac{1}{2}\left[\nabla_{\beta}(\Delta f+2f)+\epsilon_{\beta}^{\gamma} \nabla_{\gamma}(\Delta g+2g)\right]=\nabla^{\alpha}S_{\alpha\beta}.\]

Examining the expression above, we see that the first parts of (2) and (3) follow.  It remains to show
\begin{equation}\begin{split}\label{f_g}\frac{1}{2} \Delta(\Delta f+2f)&=\nabla^{\beta}\nabla^{\alpha} S_{\alpha\beta},\\
\frac{1}{2} \Delta(\Delta g+2g)&=\epsilon^{\gamma\beta}\nabla_{\gamma}\nabla^{\alpha} S_{\alpha\beta}.\end{split}\end{equation}
We apply the operator $\nabla^{\beta}$ and $\epsilon^{\eta\beta}\nabla_{\eta}$ to both sides and obtain the desired equation \eqref{f_g}.
Let $X^i, i=1, 2, 3$ be the three coordinate functions for the standard embedding of $S^2$ into $\R^3$. It is known that they form a basis of the eigenspace of eigenvalue $-2$. The kernel of the operator $\Delta +2$ on $S^2$ consists of exactly this eigenspace.
In order to solve for $f$ and $g$, we check that:
\[\int_{S^2} (\nabla^{\beta}\nabla^{\alpha} S_{\alpha\beta})X^i=-\int \sigma^{\alpha\beta} S_{\alpha\beta} X^i=0,\] and
\[\int_{S^2}(\epsilon^{\gamma\beta}\nabla_{\gamma} \nabla^{\alpha} S_{\alpha\beta})X^i=\int \epsilon^{\alpha\beta} S_{\alpha\beta} X^i=0,\] because $S_{\alpha\beta}$ is symmetric and trace-free.  Therefore both $f$ and $g$ can be solved and the ambiguity consists exactly of the first eigenfunctions $X^i$ and constant functions. 
\end{proof}

\begin{proposition}\label{oneformHarmonics}
If $f$ is an eigenfunction on $S^2$ with $\Delta f=-\lambda f$, then both $\nabla_{\alpha}f$ and $\epsilon_{\alpha}^{\gamma} \nabla_{\gamma} f$ are eigensections, with $\nabla^{\alpha}\nabla_{\alpha} \nabla_{\beta} f=(-\lambda+1)\nabla_{\beta} f$ and similarly. 
\end{proposition}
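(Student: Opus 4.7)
The plan is to reduce both assertions to a direct computation using the commutator formula $(\nabla_{\alpha}\nabla_{\beta}-\nabla_{\beta}\nabla_{\alpha})T_{\gamma}=\sigma_{\alpha\gamma}T_{\beta}-\sigma_{\beta\gamma}T_{\alpha}$ on $S^{2}$ (the $\delta$ in the text being the round metric), together with the fact that the Hessian of a scalar is symmetric and that the volume form $\epsilon_{\alpha\beta}$ is parallel.

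For the closed case, set $T_{\beta}:=\nabla_{\beta}f$. Because $T$ is an exact one-form, $\nabla_{\alpha}T_{\beta}=\nabla_{\beta}T_{\alpha}$, so
\[
\nabla^{\alpha}\nabla_{\alpha}T_{\beta}
=\nabla^{\alpha}\nabla_{\beta}T_{\alpha}
=\nabla_{\beta}\nabla^{\alpha}T_{\alpha}+g^{\alpha\gamma}\bigl[\nabla_{\gamma},\nabla_{\beta}\bigr]T_{\alpha}.
\]
Using the commutator formula,
\[
g^{\alpha\gamma}\bigl[\nabla_{\gamma},\nabla_{\beta}\bigr]T_{\alpha}
=g^{\alpha\gamma}\bigl(\sigma_{\gamma\alpha}T_{\beta}-\sigma_{\beta\alpha}T_{\gamma}\bigr)
=2T_{\beta}-T_{\beta}=T_{\beta},
\]
while $\nabla^{\alpha}T_{\alpha}=\Delta f=-\lambda f$ gives $\nabla_{\beta}\nabla^{\alpha}T_{\alpha}=-\lambda T_{\beta}$. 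Adding these two contributions yields $\nabla^{\alpha}\nabla_{\alpha}\nabla_{\beta}f=(-\lambda+1)\nabla_{\beta}f$, as claimed.

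For the co-closed case, set $\widetilde{T}_{\beta}:=\epsilon_{\beta}{}^{\gamma}\nabla_{\gamma}f$. Since $\nabla\epsilon=0$, the operator $\nabla^{\alpha}\nabla_{\alpha}$ commutes with $\epsilon_{\beta}{}^{\gamma}$, so
\[
\nabla^{\alpha}\nabla_{\alpha}\widetilde{T}_{\beta}
=\epsilon_{\beta}{}^{\gamma}\,\nabla^{\alpha}\nabla_{\alpha}\nabla_{\gamma}f
=(-\lambda+1)\,\epsilon_{\beta}{}^{\gamma}\nabla_{\gamma}f
=(-\lambda+1)\widetilde{T}_{\beta},
\]
by the closed-case identity just established. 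This gives the second assertion with the same eigenvalue. There is no real obstacle beyond bookkeeping of indices; the only substantive input is that $S^{2}$ has Gaussian curvature $1$, which is what produces the shift $-\lambda\mapsto -\lambda+1$.
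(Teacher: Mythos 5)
Your proof is correct and follows essentially the same route as the paper: the paper's one-line proof invokes the curvature relation $\nabla^{\alpha}\nabla_{\beta}p_{\alpha}-\nabla_{\beta}\nabla^{\alpha}p_{\alpha}=p_{\beta}$ for one-forms, which is exactly the contracted commutator identity you derive, combined with the symmetry of the Hessian. Your explicit treatment of the co-closed case via parallelness of $\epsilon_{\alpha\beta}$ is just the detail the paper compresses into ``and similarly.''
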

\begin{proof}
We use the formula \[\nabla^{\alpha}\nabla_{\beta} p_{\alpha}-\nabla_{\beta} \nabla^{\alpha} p_{\alpha}=p_{\beta}\] for any one-form.
\end{proof}
\begin{proposition}\label{twoformHarmonics}
If $p_{\alpha}$ is an eigensection on $S^2$ with $\nabla^{\gamma} \nabla_{\gamma} p_{\alpha}=(-\lambda+1) p_{\alpha}$, then $S_{\alpha\beta}=\frac{1}{2}(\nabla_{\beta} p_{\alpha}+\nabla_{\alpha} p_{\beta}-(\nabla^{\gamma} p_{\gamma})\sigma_{\alpha\beta})$ is an eigensection and 
\[\nabla^{\gamma} \nabla_{\gamma} S_{\alpha\beta}=(-\lambda+4) S_{\alpha\beta}.\]
\end{proposition}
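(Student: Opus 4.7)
The plan is a direct commutator computation. Set $f := \nabla^\gamma p_\gamma$, so that $S_{\alpha\beta} = \frac{1}{2}(\nabla_\beta p_\alpha + \nabla_\alpha p_\beta - f\sigma_{\alpha\beta})$. The goal is to evaluate $\nabla^\eta \nabla_\eta S_{\alpha\beta}$ piece by piece and match the coefficient $-\lambda + 4$.

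First I will commute derivatives in $\nabla^\eta\nabla_\eta\nabla_\beta p_\alpha$. Using the one-form commutator $(\nabla_\eta \nabla_\beta - \nabla_\beta \nabla_\eta)p_\alpha = \sigma_{\eta\alpha}p_\beta - \sigma_{\beta\alpha}p_\eta$ from the appendix, I rewrite
\[
\nabla^\eta\nabla_\eta\nabla_\beta p_\alpha = \nabla^\eta\nabla_\beta\nabla_\eta p_\alpha + \nabla_\alpha p_\beta - \sigma_{\alpha\beta} f.
\]
Then I commute $\nabla^\eta$ past $\nabla_\beta$ on the two-tensor $T_{\eta\alpha}:=\nabla_\eta p_\alpha$ using the two-tensor commutator, contract the four $\sigma$-terms, and use the eigensection hypothesis $\nabla_\beta \nabla^\eta\nabla_\eta p_\alpha = (-\lambda+1)\nabla_\beta p_\alpha$ to conclude
\[
\nabla^\eta\nabla_\eta\nabla_\beta p_\alpha = (-\lambda+2)\nabla_\beta p_\alpha + 2\nabla_\alpha p_\beta - 2\sigma_{\alpha\beta}f.
\]
Symmetrizing in $(\alpha,\beta)$ produces
\[
\nabla^\eta\nabla_\eta(\nabla_\beta p_\alpha + \nabla_\alpha p_\beta) = (-\lambda+4)(\nabla_\beta p_\alpha + \nabla_\alpha p_\beta) - 4\sigma_{\alpha\beta}f.
\]

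Next I will show that $f$ inherits an eigenvalue, namely $\Delta f = -\lambda f$. Starting from the same one-form commutator contracted appropriately, I get $\nabla^\delta \nabla_\eta p_\delta = \nabla_\eta f + p_\eta$, so applying $\nabla^\eta$ gives $\Delta f = \nabla^\eta\nabla^\delta\nabla_\eta p_\delta - f$. The quantity $\nabla^\eta\nabla^\delta T_{\eta\delta}$ equals $\nabla^\delta\nabla^\eta T_{\eta\delta}$ (the double-contracted commutator of $\sigma$-terms cancels to zero, a short bookkeeping check), and the latter is $(-\lambda+1)f$ by the hypothesis. Hence $\Delta f = (-\lambda+1)f - f = -\lambda f$.

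Finally, I assemble: since $\nabla^\eta\nabla_\eta(\sigma_{\alpha\beta} f) = \sigma_{\alpha\beta}\Delta f = -\lambda \sigma_{\alpha\beta}f$, combining with the symmetrized identity above yields
\[
2\nabla^\eta\nabla_\eta S_{\alpha\beta} = (-\lambda+4)(\nabla_\beta p_\alpha + \nabla_\alpha p_\beta) - 4\sigma_{\alpha\beta}f + \lambda\sigma_{\alpha\beta}f = (-\lambda+4)\big(\nabla_\beta p_\alpha + \nabla_\alpha p_\beta - \sigma_{\alpha\beta}f\big),
\]
which is exactly $2(-\lambda+4)S_{\alpha\beta}$. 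The only delicate point — and the ``main obstacle'' in the sense that it is easy to slip a sign or miscontract — is the double-contraction identity $\nabla^\eta\nabla^\delta T_{\eta\delta} = \nabla^\delta\nabla^\eta T_{\eta\delta}$ on $S^2$; everything else is careful index manipulation that is forced by the commutator formulas already stated in the appendix.
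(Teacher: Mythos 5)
Your proposal is correct and takes essentially the same approach as the paper: Proposition \ref{twoformHarmonics} is stated there without proof, the intended argument being exactly your direct computation with the one-form and two-tensor commutation formulae listed at the start of the appendix (in the same spirit as the one-line proof of Proposition \ref{oneformHarmonics}). I checked the delicate points — the double-contracted swap $\nabla^\eta\nabla^\delta T_{\eta\delta}=\nabla^\delta\nabla^\eta T_{\eta\delta}$ does incur the vanishing curvature contribution $2f-f+f-2f=0$, the auxiliary eigenvalue $\Delta f=-\lambda f$ is right, and the trace terms assemble to $(\lambda-4)\sigma_{\alpha\beta}f$, as required.
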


\bibliographystyle{plain}
\bibliography{schwarzStabilityBibModified}

\end{document}